\newtheorem{theorem}{Theorem}
\newtheorem{observation}[theorem]{Observation}
\newtheorem{lemma}[theorem]{Lemma}
\newtheorem{corollary}[theorem]{Corollary}
\newtheorem{example}[theorem]{Example}
\title{%
Non-preemptive Scheduling in a Smart Grid Model and\\ its Implications on Machine  Minimization\thanks{A preliminary version of this paper appeared in Proceedings of the 27th International Symposium on Algorithms and Computation, ISAAC~2016~\cite{LiuLW16} and some results are improved in this version.}
}
\author[1]{Fu-Hong Liu}
\author[1,2]{Hsiang-Hsuan Liu}
\author[2]{Prudence W.H. Wong}
\affil[1]{Department of Computer Science, National Tsing Hua University\\
  101 Kuang Fu Road, Hsinchu, Taiwan\\
  \texttt{\{fhliu,hhliu\}@cs.nthu.edu.tw}}
\affil[2]{Department of Computer Science, University of Liverpool, Liverpool, UK\\
  \texttt{\{hhliu,pwong\}@liverpool.ac.uk}}
\newcommand{\comment}[1]{}
 \newcommand{\pw}[1]{{\color{red}#1}}
 \newcommand{\hhl}[1]{\textcolor{OliveGreen}{#1}}
 \newcommand{\fhl}[1]{{\color{RoyalBlue}#1}}
\newcommand{\JS}{{\mathcal{J}}}
\newcommand{\Job}[1]{{J_{#1}}}
\newcommand{\Jobset}{{\mathcal{J}}}
\newcommand{\Jobsetij}[1]{{\mathcal{J}_{#1}}}
\newcommand{\JSpq}{{\mathcal{J}_{p,q}}}
\newcommand{\niceJSpq}{{\JS^*_{p,q}}}
\newcommand{\relaxS}{\text{\sc RelaxSch}}
\newcommand{\shrinkS}{\text{\sc ShrinkSch}}
\newcommand{\convert}{\text{\sc Convert}}
\newcommand{\Alg}{{\mathcal{A}}}
\newcommand{\AlgU}{{\mathcal{U}}}
\newcommand{\AlgOne}{{\mathcal{UU}}}
\newcommand{\AlgAgree}{{\mathcal{AD}}}
\newcommand{\AlgE}{{\mathcal{E}}}
\newcommand{\YDS}{{\mathcal{YDS}}}
\newcommand{\Opt}{{\mathcal{O}}}
\newcommand{\AlgV}{{\mathcal{V}}}
\newcommand{\grid}{{\mathsf{GRID}}}
\newcommand{\AlgUV}{{\mathcal{UV}}}
\newcommand{\convertfi}{\text{\sc AlignFI}}
\newcommand{\convertw}{\text{\sc Convert}}
\newcommand{\uwJ}{{J}}
\newcommand{\uwaJ}{{J^\prime}}
\newcommand{\uwJS}{{\JS}}
\newcommand{\uwaJS}{{\JS^\prime}}
\newcommand{\uwJSl}{{\uwJS_{\text{L}}}}
\newcommand{\uwJSt}{{\uwJS_{\text{T}}}}
\newcommand{\uwJSp}{{\JS^*_p}}
\newcommand{\alignS}{\text{\sc AlignSch}}
\newcommand{\freeS}{\text{\sc FreeSch}}
\newcommand{\uwSch}{{\sch}}
\newcommand{\uwaSch}{{\sch^\prime}}
\newcommand{\uwaJSl}{{\JS^\prime_{\text{L}}}}
\newcommand{\JSp}{{\mathcal{J}_p}}
\newcommand{\Width}[1]{{w(\Job{#1})}}
\newcommand{\Height}[1]{{h(\Job{#1})}}
\newcommand{\load}{{\ell}}
\newcommand{\Load}[2]{{\ell(#1,#2)}}
\newcommand{\rel}{{r}}
\newcommand{\dl}{{d}}
\newcommand{\work}{{p}}
\newcommand{\interval}{{I}}
\newcommand{\I}{{\mathcal{I}}}
\newcommand{\IL}[1]{{\mathcal{I}_{>#1}}}
\newcommand{\IS}[1]{{\mathcal{I}_{\leq#1}}}
\newcommand{\w}{{w}}
\newcommand{\h}{{h}}
\newcommand{\hmax}{{\h_{\max}}}
\newcommand{\wmax}{{\w_{\max}}}
\newcommand{\wmin}{{\w_{\min}}}
\newcommand{\stime}{{st}}
\newcommand{\etime}{{et}}
\newcommand{\Kw}{{K}}
\newcommand{\Kh}{{K_h}}
\newcommand{\ceilLogKw}{{\lceil \log \Kw \rceil}}
\newcommand{\logKh}{{k_h}}
\newcommand{\class}{{C}}
\newcommand{\den}{{\text{den}}}
\newcommand{\cost}{{\text{cost}}}
\newcommand{\sch}{{S}}
\newcommand{\schp}{{S_p}}
\newcommand{\schpq}{{S_{p,q}}}
\newcommand{\uwSchp}{{\sch_p^*}}
\newcommand{\avg}{{\text{avg}}}
\newcommand{\AVR}{{\mathcal{AVR}}}
\newcommand{\R}{{\mathcal{R}}}
\newcommand{\BKP}{{\mathcal{BKP}}}
\newcommand{\newBKP}{{\mathcal{BKP}^\prime}}
\newcommand{\runtitle}[1]{{\textbf{\boldmath #1}}}
\newcommand{\queue}{{\mathcal{Q}}}
\newcommand{\spl}{{spl}}
\newcommand{\cls}{{cls}}
\newcommand{\Tl}{{T_{\text{left}}}}
\newcommand{\Tr}{{T_{\text{right}}}}
\newcommand{\win}{{W}}
\newcommand{\bound}{{b}}
\newcommand{\config}{{F}}
\newcommand{\configL}{{F_{\text{left}}}}
\newcommand{\configR}{{F_{\text{right}}}}
\newcommand{\AlgEPlus}{{\mathcal{E}^+}}
\newcommand{\Ja}{{\JS}}
\newcommand{\adv}{{\mathit{\Lambda}}}
\newcommand{\peak}{{\mathsf{GRID_{peak}}}}
\newcommand{\Sch}{{S}}
\newcommand{\dvs}{{\mathsf{DVS}}}
\newcommand{\p}{{\text{peak}}}
\newcommand{\niceJobsetij}[1]{{\Jobset^*_{#1}}}
\newcommand{\niceWidth}[1]{{w(J^*_{#1})}}
\newcommand{\niceHeight}[1]{{h(J^*_{#1})}}
\newcommand{\niceRtime}[1]{{r(J^*_{#1})}}
\newcommand{\niceDline}[1]{{d(J^*_{#1})}}
\newcommand{\Schij}[1]{{S_{#1}}}
\newcommand{\alignSchij}[1]{{S'_{#1}}}
\newcommand{\JStight}{{\Jobset_{\text{T}}}}
\newcommand{\niceJStight}{{\Jobset^*_{\text{T}}}}
\newcommand{\niceJSloose}{{\Jobset^*_{\text{L}}}}
\newcommand{\alignJobsetij}[1]{{\Jobset'_{#1}}}
\newcommand{\alignJob}[1]{{J'_{#1}}}
\newcommand{\alignWidth}[1]{{w(J'_{#1})}}
\newcommand{\alignHeight}[1]{{h(J'_{#1})}}
\newcommand{\alignRtime}[1]{{r(J'_{#1})}}
\newcommand{\alignDline}[1]{{d(J'_{#1})}}
\newcommand{\Rtime}[1]{{r(\Job{#1})}}
\newcommand{\Dline}[1]{{d(\Job{#1})}}
\newcommand{\Work}[1]{{p(\Job{#1})}}
\newcommand{\minimization}{{\mathsf{MACHINE}}}
\begin{document}

\maketitle

\begin{abstract}
We study a scheduling problem arising in demand response management in smart grid.
Consumers send in power requests with a flexible feasible time interval
during which their requests can be served.
The grid controller, upon receiving power requests,
schedules each request within the specified interval.
The electricity cost is measured by a convex function
of the load in each timeslot.
The objective is to schedule all requests
with the minimum total electricity cost. 
Previous work has studied cases where jobs have unit power requirement
and unit duration. 
We extend the study to arbitrary power requirement and duration,
which has been shown to be NP-hard.
We give the first online 
algorithm for the general problem, 
and prove that the problem is fixed parameter tractable.
We also show that the online algorithm is asymptotically optimal when the objective is to minimize the peak load. 
In addition, we observe that the classical non-preemptive machine minimization problem 
is a special case of the smart grid problem with min-peak objective, 
and show that we can solve the non-preemptive machine minimization problem asymptotically optimally.
\comment{We also show that the online algorithm is asymptotically optimal with respect to the smart grid problem with minimizing peak objective. Since the non-preemptive machine minimization problem is a special case of the min peak smart grid problem, our algorithm can solve the non-preemptive machine minimization problem optimally.}
\end{abstract}

\section{Introduction}
\label{sec:intro}

We study a scheduling problem arising in
``demand response management'' in smart grid~\cite{HG10,IA09,LSM10,IEEE12,FMX+12}.
The electrical smart grid is one of the major challenges
in the 21st century~\cite{DOE09,ESG06,UKDEC13}.
The smart grid \cite{farhangi2010path,masters2013renewable}  is a power grid system that makes power generation, distribution and consumption more efficient through information and communication technologies against the traditional power system.
Peak demand hours happen only for a short duration,
yet makes existing electrical grid less efficient.
It has been noted in~\cite{CNX+13} that
in the US power grid, 10\% of all generation assets and
25\% of distribution infrastructure are required for less than 400
hours per year, roughly 5\% of the time~\cite{DOE09}.
\emph{Demand response management}
attempts to overcome this problem by shifting users'
demand to off-peak hours in order to
reduce peak load~\cite{LSS12,MZZ+13,KT11,CK10,SLL13,mohsenian2010autonomous}.
Research initiatives in the area include~\cite{kannberg2003gridwise,M13,RE13,PS13}.

The electricity grids supports demand response mechanism 
and obtains energy efficiency by organizing customer consumption of electricity 
in response to supply conditions. 
It is demonstrated in \cite{LSM10} that demand response is of remarkable advantage to consumers, utilities, and society. 
Effective demand load management brings down the cost of operating the grid, 
as well as energy generation and distribution~\cite{LSS12}.
Demand response management is not only advantageous to the supplier but also to
the consumers as well.
It is common that electricity supplier charges according to the generation cost,
i.e., the higher the generation cost the higher the electricity price.
Therefore, it is to the consumers' advantage to reduce electricity consumption at 
high price and hence reduce the electricity bill~\cite{SLL13}.

The smart grid operator and consumers communicate through smart metering devices~\cite{K08,masters2013renewable}. 
A consumer sends in a power request with the power requirement (cf.\ height of request), required duration of service (cf.\ width of request), and the time interval that this request can be served (giving some flexibility).
For example, a consumer may want the dishwasher to operate for one hour during the periods from 8am to 11am. The grid operator upon receiving requests has to schedule them in their respective time intervals using the minimum energy cost. The \emph{load} of the grid at each timeslot is the sum of the power requirements of all requests allocated to that timeslot. The \emph{electricity cost} is modeled by a convex function on the load,
in particular we consider the cost to be the $\alpha$-th power of the load, where $\alpha > 1$ is some constant.
Typically, $\alpha$ is small, e.g., $\alpha=2$~\cite{samadi2010optimal,djurovic2012simplified}.

\runtitle{Previous work.}
Koutsopoulos and Tassiulas~\cite{KT11} has formulated
a similar problem to our problem where
the cost function is piecewise linear.
They show that the problem is NP-hard, and their proof
can be adapted to show the NP-hardness of the general
problem studied in this paper~\cite{BurceaHLWY16}.
Burcea et al.~\cite{BurceaHLWY16} gave polynomial time optimal algorithms for the case of unit height (cf.\ unit power requirement) and unit width (cf.\ unit duration).
Feng et al. \cite{DBLP:conf/cocoa/FengXZ15} have claimed that a simple greedy algorithm is 2-competitive for the unit case and $\alpha = 2$.
However, as to be described below in Lemma~\ref{lm:greedy_lb}, there is indeed a counter example that the greedy algorithm is at least 3-competitive.
This implies that it is still an open question to derive online algorithms for the problem.
Salinas et al.~\cite{SLL13} considered a multi-objective problem
to minimize energy consumption cost and maximize some utility.
A closely related problem is to manage the load
by changing the price of electricity over time~\cite{CK10,MZZ+13,MWJ+10,FangUZ+15a}.
Heuristics have also been developed for demand side management~\cite{LSS12}.
Other aspects of smart grid have also been considered,
e.g., communication~\cite{LQ10,CNX+13,LL13,LJC13}, security~\cite{MYR13,LJC13}.
Reviews of smart grid can be found in~\cite{HG10,IA09,LSM10,IEEE12,FMX+12}.

The main combinatorial problem we defined in this paper
has analogy to the traditional load balancing problem~\cite{A98}
and machine minimization problem~\cite{DBLP:conf/soda/ChenMS16,chuzhoy2004machine,cieliebak2004scheduling,DBLP:conf/fsttcs/Saha13}
but the main differences are the objective being maximum load and 
jobs are unit height~\cite{DBLP:conf/soda/ChenMS16,chuzhoy2004machine,cieliebak2004scheduling,DBLP:conf/fsttcs/Saha13}.
Minimizing maximum load has also been looked at in the context of smart grid~\cite{alamdari2013smart,karbasioun2013power,tang2013smoothing,yaw2014exact,yaw2014peak},
some of which further consider allowing reshaping of the jobs~\cite{alamdari2013smart,karbasioun2013power}.
As to be discussed in Section~\ref{sec:prelim}, our problem is more difficult than minimizing
the maximum load.
Our problem also has resemblance to the dynamic speed scaling problem~\cite{A10,YDS95,BellW15}
and our algorithm has employed some techniques there.

As to be discussed, our problem is closely related to the non-preemptive machine minimization problem~\cite{chuzhoy2004machine,cieliebak2004scheduling},
which has been claimed to be solved optimally in asymptotically sense for the online setting~\cite{DBLP:conf/fsttcs/Saha13}.
We provide an alternative asymptotically optimal competitive algorithm for the non-preemptive machine minimization problem. 
More precisely, we show that our algorithm for the smart grid problem can also solve the non-preemptive machine minimization problem with asymptotically optimal competitive ratio. A more detailed discussion is in Section~\ref{sec:peak}.

\runtitle{Our contribution.}
In this paper, we consider a demand response optimization problem minimizing the total electricity cost
and study its relation with other scheduling problems.
We propose the first online algorithm for the general problem with worst case competitive ratio,
which is polylogarithm in the 
max-min ratio of the duration 
of jobs 
(Theorem~\ref{thm:general} in Section~\ref{sec:general});
and give a lower bound for any online algorithm. 
Interestingly, the ratio depends on the max-min width ratio but not the max-min height ratio.
The algorithm is based on an $O(1)$-competitive online algorithm for jobs with uniform duration (Section~\ref{sec:general_unit_width}). 
We also propose $O(1)$-competitive online algorithms for some special cases
(Section~\ref{sec:special}).
In addition,
we show that 
the problem is fixed parameter tractable by 
proposing the first fixed parameter exact algorithms for the problem;
and derive lower bounds on the running time
(Section~\ref{sec:exact}).
Table~\ref{tab:summary} gives a summary of our results. 
Interestingly, our online algorithm and exact algorithms depend on the variation of the job widths but not the variation of the job heights.

We further show that our online algorithms and exact algorithms can be adapted to the objective of minimizing the peak electricity cost,
as well as the related problem of non-preemptive machine minimization.
Our online algorithms are asymptotically optimal for both problems (Section~\ref{sec:peak_online}), 
with competitive ratio being logarithm in the max-min ratio of the job duration.
In addition, we show that both problems are fixed-parameter tractable (Section~\ref{sec:other_fpt}).

\comment{
Moreover, our online algorithms and exact algorithms with slight modifications can be applied for a smart grid scheduling problem with respect to minimizing peak electricity cost and non-preemptive machine minimization problem.
For both problems, our online algorithms are asymptotically optimal (Section~\ref{sec:peak}), with the competitive ratio being logarithm in the max-min ratio of the duration.
On the other hand, we show that both problems are also fixed-parameter tractable (Section~\ref{sec:other_fpt}).
}


Technically speaking,
our online algorithms are based on identifying a relationship with
the dynamic speed (voltage) scaling ({$\dvs$}) problem~\cite{YDS95}.
The main challenge, even when jobs have uniform width or uniform height,
is that in time intervals where the ``workload'' is low, the optimal {$\dvs$} schedule may have
much lower cost than the optimal {$\grid$} schedule because jobs in {$\dvs$} schedules
can effectively be stretched as flat as possible while jobs in {$\grid$} schedules
have rigid duration and cannot be stretched.
In such case, it is insufficient to
simply compare with the optimal {$\dvs$} schedule.
Therefore, our analysis is divided into two parts:
for high workload intervals, we compare with the optimal {$\dvs$} schedule;
and for low workload intervals, we directly compare with the optimal {$\grid$} schedule
via a lower bound on the total workload over these intervals
(Lemmas~\ref{thm:algV_cost} and~\ref{thm:horizontal_framework}).
For jobs with arbitrary width,
we adopt the natural approach of classification based on job width.
We then align the ``feasible interval'' of each job in a more uniform way
so that we can use the results on uniform width (Lemma~\ref{thm:relax_load}).

In designing exact algorithms we use interval graphs to represent the jobs
and the important notion maximal cliques to partition
the time horizon into disjoint windows.
Such partition usually leads to optimal substructures;
nevertheless, non-preemption makes it trickier and requires a smart way to handle jobs spanning multiple windows.
We describe how to handle such jobs without adding a lot of overhead.
\comment{
We remark that our approach \fhl{for both online and exact algorithms} can be adapted to other objective functions like minimizing maximum load, including the machine minimization problem.
}

\runtitle{Organization of the paper.}
We define the problem and provide some basic observations in Section~\ref{sec:prelim}.
The online algorithms for uniform time duration and arbitrary power requirement are developed in Section~\ref{sec:general_unit_width} and are extended for solving the general case in Section~\ref{sec:general}.
The lower bound of online algorithms is provided in Section~\ref{sec:general_lb}.
Several special cases regarding uniform power requirement are discussed in Section~\ref{sec:special}.
We design fixed-parameter exact algorithms in Section~\ref{sec:exact} and derive a lower bound for the running time in Section~\ref{sec:exact_lb}.
In Section~\ref{sec:peak}, we extend our online and exact algorithms to the objective of maximum load and 
the related non-preemptive machine minimization problem.
We conclude the paper in Section~\ref{sec:conclusion}.

\comment{
\begin{center}
\begin{table}[h]
\caption{Summary of our results.}
\label{tab:summary}
\begin{tabular}{|c|c|c|c|}
\multicolumn{4}{l}{\bf{Online algorithm}} \\ \hline
\bf{Width} & \bf{Height} & \bf{Special case} & \bf{Ratio} \\ \hline \hline

\multirow{2}{*}{Unit} & \multirow{2}{*}{Arbitrary} & \multirow{2}{*}{-} & $2^\alpha \cdot (8e^\alpha+1)$-competitive \\ \cline{4-4}
& & & $2 \cdot 2^\alpha$-approximate \\ \hline

Uniform & Arbitrary & - & $12^\alpha \cdot (8e^\alpha + 1)$-competitive \\ \hline
Arbitrary & Arbitrary & - & $(36\logKw)^\alpha \cdot (8e^\alpha + 1)$-competitive \\ \hline
Unit & Uniform & - & $((4\alpha)^\alpha/2 +1)$-competitive \\ \hline
Arbitrary & Uniform & Agreeable deadline & $((12\alpha)^\alpha/2 + 1)$-competitive \\ \hline
Arbitrary & Uniform & Same release time or same deadline & $((8\alpha)^\alpha/2 +1)$-competitive \\ \hline
\multicolumn{4}{l}{\bf{Exact algorithm}} \\ \hline
\bf{Width} & \bf{Height} & \bf{Special case} & \bf{Time complexity} \\ \hline \hline
Arbitrary & Arbitrary & - & $\wmax^{2m} \cdot (\win_{\max} + 1)^{4m} \cdot O(n^2)$ \\ \hline
Arbitrary & Arbitrary & - & $(4m \cdot \wmax^2)^{2m} \cdot O(n^2)$ \\ \hline
Unit & Arbitrary & - & $2^{O(N)}$ \\ \hline
\end{tabular}
\end{table}
\end{center}
}

%
%
%
%
\begin{table}[t]
\begin{center}
\begin{tabular}{|c|c|c|}
\hline
\bf{Width} & \bf{Height} & \bf{Ratio} \\ \hline \hline

\multirow{2}{*}{Unit} & \multirow{2}{*}{Arbitrary} & $2^\alpha \cdot (8(e+e^2)^\alpha+1)$-competitive \\ \cline{3-3}
& & $2^{\alpha+1}$-approximate \\ \hline

Uniform & Arbitrary & $12^\alpha \cdot (8(e+e^2)^\alpha + 1)$-competitive \\ \hline
Arbitrary & Arbitrary & $\Theta(\log^\alpha (\frac{\wmax}{\wmin}))$-competitive \\ \hline
Unit & Uniform & $\min((4\alpha)^\alpha/2 +1, 2^\alpha \cdot (8(e+e^2)^\alpha+1))$-competitive \\ \hline

\multirow{2}{*}{Arbitrary} & \multirow{2}{*}{Uniform} & $((8\alpha)^\alpha/2 + 2^\alpha)$-competitive \\
& & agreeable deadline \\ \hline

\end{tabular}
\caption{Summary of our online results or total electricity cost.}
\label{tab:summary}
\end{center}
\end{table}

\section{Definitions and preliminaries}
\label{sec:prelim}


\runtitle{The input.}
The time is labeled from $0$ to $\tau$
and we consider events (release time, deadlines) occurring at integral time.
We call the unit time $[t,t+1)$ \emph{timeslot} $t$.
We denote by $\JS$ a set of input jobs in which each job $J$
comes with \emph{release time} $\rel(J)$,
\emph{deadline} $\dl(J)$,
\emph{width} $\w(J)$ representing the duration required by~$J$, and
\emph{height} $\h(J)$ representing the power required by~$J$.
We assume $\rel(J)$, $\dl(J)$, $\w(J)$, and $\h(J)$ are integers.
The \emph{feasible interval}, denoted by $\interval(J)$,
is defined as the interval $[\rel(J),\dl(J))$ and
we say that~$J$ is \emph{available} during $\interval(J)$.
We denote by $|\interval|$ the length of an interval $\interval$, i.e., $|\interval| = t_2 - t_1$ where $\interval = [t_1, t_2)$.
We define the \emph{density} of $J$, denoted by $\den(J)$, to be $\frac{\h(J) \cdot \w(J)}{|\interval(J)|}$.
Roughly speaking, the density signifies the average load required by the job
over its feasible interval.
We then define the ``average'' load at any time $t$ as
$\avg(t) = \sum_{J:t\in\interval(J)} \den(J)$.
In our analysis, we have to distinguish timeslots with high and low average load.
Therefore, for any $h > 0$, we define $\IL{h}$ and $\IS{h}$ to be set of timeslots where
the average load $\avg(t)$ is larger than $h$ and at most $h$, respectively.
Note that $\IL{h}$ and $\IS{h}$ do not need to be contiguous.

In Section~\ref{sec:general}, we consider an algorithm that classifies jobs 
according to their 
widths.
To ease discussion, we let 
$\wmax$ and $\wmin$ be
the 
maximum and minimum width over all jobs, respectively.
We further define the max-min ratio of width, denoted by $\Kw$,
to be $\Kw=\frac{\wmax}{\wmin}$.
\comment{
We further define the logorithm of the max-min ratio 
of width, denoted by $\logKw$,
to be $\logKw = \lceil \log \frac{\wmax}{\wmin} \rceil$,
which is the number of classes to be defined.
}
Without loss of generality, we assume that 
$\wmin = 1$. 
We say that a job $J$ is in \emph{class} $\class_{p}$ if and only if
$2^{p-1} < \w(J) \leq 2^{p}$ 
for any $0 \leq p \leq \ceilLogKw$.

\runtitle{Feasible schedule.}
A \emph{feasible} schedule $\sch$ has to assign for each job $J$ a \emph{start time} $\stime(\sch, J) \in \mathbb{Z}$
meaning that $J$ runs during $[\stime(\sch,J), \etime(\sch,J))$, 
where the \emph{end time} $\etime(\sch,J) = \stime(\sch,J) + \w(J)$,
and $[\stime(\sch,J), \etime(\sch,J)) \subseteq \interval(J)$.
Note that this means preemption is not allowed.
The \emph{load} of $\sch$ at time $t$, denoted by $\load(\sch,t)$
is the sum of the height (power request) of all jobs running at $t$, 
i.e., $\load(\sch,t) = \sum_{J:t\in [\stime(\sch,J),\etime(\sch,J))} \h(J)$.
We drop $\sch$ and use $\load(t)$ when the context is clear.
For any algorithm~$\Alg$, we use $\Alg(\JS)$ to denote the schedule of $\Alg$ on $\JS$.
We denote by $\Opt$ the optimal algorithm.

The cost of a schedule $\sch$ is the sum of the $\alpha$-th power of the load over all time,
for a constant $\alpha> 1$,
i.e., $\cost(\sch) = \sum_t (\load(\sch,t))^\alpha$.
For a set of timeslots $\I$ (not necessarily contiguous), we denote by $\cost(\sch,\I) = \sum_{t \in \I} (\load(\sch,t))^\alpha$.
Our goal is to find a feasible schedule~$\sch$ such that $\cost(\sch)$ is minimized.
We call this the $\grid$ problem.

\runtitle{Online algorithms.}
In this paper, we consider 
online algorithms, where
the job information is only revealed at the time the job is released; the algorithm has to decide which jobs to run at the current time without future information and
decisions made cannot be changed later.
Let~$\Alg$ be an online 
algorithm.
We say that $\Alg$ is $c$-competitive 
if for all input job sets $\JS$, we have
$\cost(\Alg(\JS)) \leq c \cdot \cost(\Opt(\JS))$.
In particular, we consider non-preemptive algorithms where a job cannot be preempted
to resume/restart later.

\runtitle{Special input instances.}
We consider various special input instances.
A job $J$ is said to be unit-width (resp.\ unit-height) if $\w(J)=1$ (resp.\ $\h(J)=1$).
A job set is said to be uniform-width (resp.\ uniform-height) 
if the width (resp.\ height) of all jobs are the same.
\comment{
jobs with uniform width (Section~\ref{sec:general_unit_width}),
jobs with uniform height and unit width (Section~\ref{sec:special_unit}), 
and jobs with uniform height, arbitrary width, agreeable deadlines (Section~\ref{sec:horizontal_agreeable}).
}
A job set is said to have \emph{agreeable deadlines} if
for any two jobs $J_1$ and $J_2$, $\rel(J_1) \leq \rel(J_2)$ implies $\dl(J_1) \leq \dl(J_2)$.

\comment{
time run $0$ to $\tau$, events happen at integral time, timeslot $t$ refers to $[t,t+1)$

$\JS$, $\class_{p,q}$ width class $p$ and height class $q$, $\relaxS$, $\shrinkS$, $\convert$

$\w(J)$, $\h(J)$, $\rel(J)$, $\dl(J)$, $\den(J)$, $\Kw$, $\Kh$, $\logKw$, $\logKh$

a job available at time $t$

$\stime(\sch, J)$, $\etime(\sch, J)$, 

$\Alg(\JS)$, $\AlgU(\JS)$, $\Opt(\JS)$

$\load(\sch, t)$, e.g., $\load(\Alg(\JS), t)$, $\load(\Opt(\JS^*), t))$, $\sch$ can be dropped for $\load()$ if context is clear

$\cost(\sch, I)$, and $\cost(\sch)$, where $I$ is set of intervals

uniform job set

$\I$, $\IL{h}$, $\IS{h}$

talks about $\AVR$ and state a lemma of its competitive ratio.

EDF

talks about online algorithm, jobs arrive at beginning of timeslot, and then decision is made

agreeable deadlines
}


\runtitle{Relating to the speed scaling problem.}
The {$\grid$} problem resembles the dynamic speed scaling ({$\dvs$}) problem~\cite{YDS95}
and we are going to refer to three algorithms for the {$\dvs$} problem, namely,
the $\YDS$ algorithm which gives an optimal algorithm for the {$\dvs$} problem,
the online algorithms called $\BKP$ and $\AVR$.
We first recap the {$\dvs$} problem and the associated algorithms.
In the {$\dvs$} problem, jobs come with release time $\rel(J)$, deadline $\dl(J)$, and a work requirement $\work(J)$.
A processor can run at speed $s \in [0,\infty)$ and consumes energy in a rate of $s^\alpha$,
for some $\alpha > 1$. 
The objective is to complete all jobs by their deadlines using the minimum total energy.
The main differences of {$\dvs$} problem to the {$\grid$} problem include 
(i) jobs in {$\dvs$} can be preempted while preemption is not allowed in our problem;
(ii) as processor speed in {$\dvs$} can scale, a job can be executed 
for varying time duration as long as the total work is completed 
while in our problem a job must be executed for a fixed duration given as input;
(iii) the work requirement $\work(J)$ of a job $J$ in {$\dvs$} can be seen as
$\w(J) \times \h(J)$ for the corresponding job in {$\grid$}.

With the resemblance of the two problems, we make an observation about
their optimal algorithms.
Let $\Opt_D$ and $\Opt_G$ be the optimal algorithm for the {$\dvs$} and {$\grid$} problem,
respectively.
Given a job set $\JS_G$ for the {$\grid$} problem, 
we can convert it into a job set $\JS_D$ for {$\dvs$} 
by keeping the release time and deadline for each job and 
setting the work requirement of a job in $\JS_D$ to 
the product of the width and height of the corresponding job in $\JS_G$.
Then we have the following observation.

\comment{
\begin{itemize}
\item $\Opt_D$ and $\Opt_G$ for {$\dvs$} and {$\grid$} problem respectively
\item Given $\JS_G$, define $\JS_D$, basically, the size of a job $J_D$ is set to be $\w(J_G)\times\h(J_G)$ while
  the release time and deadline remain the same.
\end{itemize}
}

\begin{observation}
\label{thm:opt_dvs_grid}
Given any schedule $\sch_G$ for $\JS_G$, we can convert $\sch_G$ into a feasible schedule $\sch_D$ for $\JS_D$
such that $\cost(\sch_D(\JS_D)) \leq \cost(\sch_G(\JS_G))$; implying that 
$\cost(\Opt_D(\JS_D)) \leq \cost(\Opt_G(\JS_G))$.
\end{observation}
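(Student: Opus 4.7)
The plan is to build the DVS schedule $\sch_D$ directly from $\sch_G$ timeslot by timeslot, using the load of $\sch_G$ as the speed function of $\sch_D$, and using time-sharing within each timeslot to distribute work to the individual DVS jobs. Because DVS permits preemption and a freely chosen speed, this is exactly the flexibility needed to simulate the Grid schedule without paying any extra.

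Concretely, for each timeslot $t$, set the processor speed in $\sch_D$ to $s(t) = \load(\sch_G,t)$. To say which job is being worked on within the slot, consider the jobs of $\sch_G$ that are running at $t$: each such Grid job $J_G$ contributes $\h(J_G)$ to $\load(\sch_G,t)$, so we allocate to the corresponding DVS job $J_D$ a $\h(J_G)/\load(\sch_G,t)$ fraction of timeslot $t$ (arbitrarily ordered within the slot). During this fraction the processor runs at speed $s(t)$, so the work completed on $J_D$ in slot $t$ equals $\h(J_G)$. Summing over the $\w(J_G)$ slots on which $J_G$ is scheduled in $\sch_G$, the total work done on $J_D$ is $\w(J_G)\cdot \h(J_G) = \work(J_D)$, and because all this work is performed inside $[\rel(J_G),\dl(J_G)) = [\rel(J_D),\dl(J_D))$, the schedule $\sch_D$ is feasible for $\JS_D$.

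It then remains to compare costs. The energy consumed by $\sch_D$ is
\[
\cost(\sch_D) \;=\; \sum_t s(t)^\alpha \;=\; \sum_t \load(\sch_G,t)^\alpha \;=\; \cost(\sch_G),
\]
so in fact equality holds, and a fortiori $\cost(\sch_D) \leq \cost(\sch_G)$. Applying this construction to $\sch_G = \Opt_G(\JS_G)$ yields a feasible DVS schedule of cost $\cost(\Opt_G(\JS_G))$, so $\cost(\Opt_D(\JS_D)) \leq \cost(\Opt_G(\JS_G))$ as claimed.

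There is not really a hard step here; the only thing to be slightly careful about is the time-sharing bookkeeping, ensuring that the fractions $\h(J_G)/\load(\sch_G,t)$ sum to $1$ over the jobs running in slot $t$ (which they do by definition of $\load$) and that the work allotted to each $J_D$ integrates to its full requirement. The key conceptual observation is simply that DVS's freedom to choose speed pointwise lets it replicate the Grid load profile exactly while its preemption lets it pack the per-slot contributions of multiple simultaneously running Grid jobs into a single serial schedule.
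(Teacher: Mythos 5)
Your proposal is correct and is essentially the paper's own argument: run the DVS processor at speed equal to the Grid load in each timeslot and let the jobs time-share the slot in proportion to their heights, which yields a feasible DVS schedule of identical cost. The extra bookkeeping you spell out (fractions summing to $1$, work integrating to $\w(J)\cdot\h(J)$ inside the feasible interval) is exactly what the paper leaves implicit.
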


\begin{proof}
Consider any feasible schedule $\sch_G$.
At timeslot $t$, suppose there are $k$ jobs 
scheduled and their sum of heights is $H$.
The schedule for $\sch_D$ during timeslot~$t$ can be obtained by
running the processor at speed $H$ and the jobs time-share the processor 
in proportion to their height.
This results in a feasible schedule with the same cost
and the observation follows.
\end{proof}

It is known that the online algorithm $\AVR$ for the {$\dvs$} problem is
$\frac{(2\alpha)^\alpha}{2}$-competitive~\cite{YDS95}.
Basically, at any time $t$, $\AVR$ runs the processor at a speed which is the sum of the
densities of jobs that are available at $t$.
By Observation~\ref{thm:opt_dvs_grid}, we have the following corollary.
Note that it is not always possible to convert a feasible schedule for the {$\dvs$} problem
to a feasible schedule for the {$\grid$} problem easily.
Therefore, the corollary does not immediately solve the {$\grid$} problem
but as to be shown it provides a way to analyze algorithms for {$\grid$}.

\begin{corollary}
\label{thm:AVR}
For any input $\JS_G$ and the corresponding input $\JS_D$,
$\cost(\AVR(\JS_D)) \leq \frac{(2\alpha)^\alpha}{2} \cdot \cost(\Opt_G)$.
\end{corollary}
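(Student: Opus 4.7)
The plan is to chain two facts that have already been established right before the corollary statement. Specifically, I will combine the known competitive ratio of $\AVR$ for the DVS problem with Observation~\ref{thm:opt_dvs_grid}, which bounds the optimal DVS cost by the optimal Grid cost on the converted instance.

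First, I would invoke the result of \cite{YDS95} quoted in the paragraph preceding the corollary: on any DVS instance $\JS_D$, the online algorithm $\AVR$ satisfies
\[
\cost(\AVR(\JS_D)) \;\leq\; \frac{(2\alpha)^\alpha}{2}\cdot \cost(\Opt_D(\JS_D)),
\]
where $\Opt_D$ denotes the offline optimum for DVS. Next, I would apply Observation~\ref{thm:opt_dvs_grid}, which asserts $\cost(\Opt_D(\JS_D)) \leq \cost(\Opt_G(\JS_G))$ whenever $\JS_D$ is the DVS instance obtained from the Grid instance $\JS_G$ by preserving release times and deadlines and setting each work requirement to $\w(J)\cdot \h(J)$. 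Chaining these two inequalities yields exactly the claimed bound.

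There is essentially no obstacle here: the corollary is a direct composition of (i) the off-the-shelf analysis of $\AVR$ from the DVS literature and (ii) the cost-preserving conversion already proved in Observation~\ref{thm:opt_dvs_grid}. The only thing one should be careful about is keeping $\Opt_G$ consistent with the notation in the statement (the corollary writes $\Opt_G$ without the argument $\JS_G$), so I would simply remark that $\Opt_G$ in the statement refers to the optimal Grid schedule on the original instance $\JS_G$ from which $\JS_D$ was derived, which matches the hypothesis of Observation~\ref{thm:opt_dvs_grid}. The proof is thus one short displayed inequality of two steps.
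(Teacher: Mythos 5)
Your proof is correct and matches the paper's own (implicit) argument exactly: the corollary is obtained by chaining the known $\frac{(2\alpha)^\alpha}{2}$-competitiveness of $\AVR$ against $\Opt_D(\JS_D)$ with Observation~\ref{thm:opt_dvs_grid}'s bound $\cost(\Opt_D(\JS_D)) \leq \cost(\Opt_G(\JS_G))$. Nothing further is needed.
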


The online algorithm $\BKP$ proposed by Bansal et al.~\cite{DBLP:journals/jacm/BansalKP07} for {$\dvs$} problem is $8e^\alpha$-competitive {with respect to total cost}. 
Let $\load(\BKP,t)$ denote the speed of $\BKP$ at time $t$. 
$\load(\BKP,t) = \max_{t^\prime>t} \frac{\work(t, [et-(e-1)t^\prime,t^\prime))}{t^\prime-t}$ where $\work(t,I)$ denotes the total work of jobs $\Job{}$ with $\interval(J) \subseteq I$ and $\rel(J)\leq t$. 
That is, $\BKP$ chooses the interval $I^\star = [t'',t')$ which has maximal released 
average total work and $(t' - t''):(t' - t) = e:1$ and uses $\frac{\work(t,I^\star)}{|I^\star|/e}$ as the speed at $t$.
By Observation~\ref{thm:opt_dvs_grid} we have the following corollary:
\begin{corollary}
\label{thm:BKP}
For any input $\JS_G$ and the corresponding input $\JS_D$,
$\cost(\BKP(\JS_D)) \leq 8e^\alpha \cdot \cost(\Opt(\JS_D))\leq 8e^\alpha \cdot cost(\Opt(\JS_G))$.
\end{corollary}


\emph{Remark:}
One may
consider the non-preemptive {$\dvs$} problem as the reference of the {$\grid$} problem.
However, given a job set $\JS_G$ and the corresponding $\JS_D$, $\cost(\Opt_D(\JS_D))$ may not necessarily lower than $\cost(\Opt_G(\JS_G))$, where $\Opt_D$ here is the optimal algorithm for non-preemptive {$\dvs$}.
There is an instance shows the optimal cost of {$\grid$} is smaller.
The instance contains two jobs.
One has release time 0, deadline 3, width 3 and height 1.
The other has release time 1, deadline 2, width 1 and height 1.
Both jobs can only schedule at their release time in {$\grid$} since their widths are the same as the lengths of their feasible intervals.
The optimal cost of {$\grid$} is $1^\alpha + 2^\alpha + 1^\alpha = 2^\alpha + 2$.
Whereas the optimal cost of non-preemptive {$\dvs$} is $2^\alpha + 2^\alpha = 2 \cdot 2^\alpha$.
This is because the schedule uses speed 2 and runs the longer job with 1.5 time units and the shorter job with 0.5 time units.
The optimal cost of {$\grid$} is lower when $\alpha > 1$.
Therefore, it is unclear how we may use the results on non-preemptive {$\dvs$} problem and so we would stick with the preemptive {$\dvs$} algorithms.

\runtitle{Relating to minimizing maximum cost.}
The problem of minimizing maximum cost over time (min-max) has been studied before~\cite{yaw2014exact}.
We note that there is a polynomial time reduction 
of the decision version of the min-max problem
to that of the min-sum problem (the {$\grid$} problem we study in this paper)
for a large enough $\alpha$.
In particular, one can show that with $\alpha > (\tau-1)(2 \sum_{J \in \JS} \h(J)+1)$,
the maximum load would dominate the load in other timeslots and we would be
able to solve the min-max problem if we have a solution for the min-sum problem
on $\alpha$.

On the other hand, minimizing the maximum cost does not necessarily 
minimize the total cost.
For example, consider an input of three jobs $J_1$, $J_2$ and $J_3$
where $\interval(J_1) = [0, 2^\alpha)$, $\h(J_1) = 1$, $\w(J_1) = 2^\alpha$;
$\interval(J_2) = [2^\alpha, 2^\alpha+1)$, $\h(J_2) = 3$, $\w(J_2) = 1$; and
$\interval(J_3) = [0, 2^{\alpha+1})$, $\h(J_3) = 1$, $\w(J_3) = 2^\alpha$.
Note that only $J_3$ has flexibility where it can be scheduled.
To minimize the maximum cost over time, we would schedule $J_3$ to start at time $0$
and achieve a maximum load of $3$.
This gives a total cost of $2^\alpha \cdot 2^\alpha + 3^\alpha = 4^\alpha + 3^\alpha$.
However, to minimize the total cost, we would schedule $J_3$ to start at time $2^\alpha$
giving a total cost of $2^\alpha + 4^\alpha + (2^\alpha - 1) = 4^\alpha + 2^{\alpha+1} - 1$,
which is smaller than $4^\alpha + 3^\alpha$ when $\alpha > 1$.

\runtitle{Lower bound on Greedy.}
In~\cite{DBLP:conf/cocoa/FengXZ15}, the greedy algorithm that assigns a job to a timeslot with the minimum load is considered.
It is claimed in the paper that the greedy algorithm is $2$-competitive
on the online-list model
and for the case where the load of a timeslot $t$ is $\load(t)^2$,
jobs are of unit length and height and the feasible timeslots of a job is a set of (non-contiguous) timeslots that the job can be assigned to.
We show a counter-example to this claim and show that Greedy is at least $3$-competitive.
This implies that it is still an open question to derive online algorithms for the {$\grid$} problem.

\begin{lemma}
\label{lm:greedy_lb}
Greedy is no better than $3$-competitive for the online-list model when $\alpha = 2$.
\end{lemma}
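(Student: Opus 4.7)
My plan is to exhibit one explicit adversarial instance of unit-width, unit-height jobs over a finite set of (non-contiguous) feasible timeslots and compute both Greedy's cost and the cost of some feasible (hence no-worse-than-optimal) schedule, showing a ratio of at least~$3$. I would not try a smooth analytic argument: the claim is purely a lower bound, so a single concrete family of examples suffices.

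First I would fix a ``hub'' timeslot $h$, a collection of ``alternative'' slots $a_1,\dots,a_N$, and a collection of fresh ``auxiliary'' slots $b_1,\dots,b_N$. The instance is released online in three blocks:
\begin{enumerate}
\item[(i)] \emph{Pre-loading block:} for each $i$, a carefully chosen mixture of flexible jobs on $\{a_i,b_i\}$ together with a few jobs forced onto $a_i$, designed so that Greedy's tiebreaking rule and min-load rule push load onto $a_i$ while a smarter schedule can route the flex part to $b_i$ and leave $a_i$ at the floor determined only by the forced jobs.
\item[(ii)] \emph{Main flex block:} a sequence of flexible jobs, each with feasibility $\{h,a_i\}$; because after block~(i) the slots $a_i$ already look as loaded (to Greedy) as $h$, the min-load rule (with worst-case ties) directs these jobs to $h$, whereas a look-ahead schedule sends each to its $a_i$.
\item[(iii)] A controlled number of jobs forced onto $h$, which inflates $h$'s load further under Greedy but is unavoidable for any schedule.
\end{enumerate}

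Second, I would trace Greedy step by step to verify the load profile: load roughly $L_h^{G}$ on the hub, moderate loads on the $a_i$, and $0$ on the $b_i$. In parallel I would write down the alternative schedule (route pre-loading flex jobs to $b_i$, main flex jobs to $a_i$, forced jobs as required) and show its loads are small and balanced across $h,\,a_1,\dots,a_N,\,b_1,\dots,b_N$. Then I would compute $\cost(\text{Greedy})/\cost(\text{alt}) = \sum_t L^G_t{}^2/\sum_t L^{\text{alt}}_t{}^2$ and show it is at least $3$, which by $\cost(\Opt)\le\cost(\text{alt})$ finishes the claim.

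The key obstacle is pushing the ratio past $2$: a single ``tiebreak mistake + one forced job'' gadget only yields $(k{+}1)^2/(k^2{+}1)\le 2$, so I must arrange for several Greedy mistakes to accumulate on the same slot $h$. That is exactly what block~(i) buys us: by raising the apparent load of each $a_i$ using flex jobs that the alternative schedule can absorb on the cheap auxiliary slots $b_i$, Greedy is tricked into committing many block-(ii) jobs to $h$ while the alternative schedule is not charged for pre-loading. Tuning the pre-load level, the number of main flex jobs, and the number of forced jobs in block~(iii) -- possibly taking an asymptotic family in $N$ -- should drive the ratio to at least~$3$ (the natural target being a configuration where Greedy piles load~$3$ on one slot while the benchmark spreads three jobs across three distinct slots of load~$1$, giving cost~$9$ vs.~$3$).
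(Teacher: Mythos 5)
There is a genuine gap, and it sits exactly where you flag the ``key obstacle.'' The pre-loading mechanism you sketch is inconsistent with Greedy's own rule: a pre-loading job whose feasible slots are $\{a_i,b_i\}$ will be placed by Greedy on the \emph{empty} auxiliary slot $b_i$, never on the already-loaded $a_i$, since Greedy always chooses a minimum-load slot. Hence the only way to raise the load of $a_i$ in Greedy's schedule is with jobs forced onto $a_i$, and those jobs are charged to the optimum as well. To divert the $j$-th main flex job $\{h,a_j\}$ to the hub without leaning on adversarial tie-breaking you need $a_j$ pre-loaded to level about $j$, so the forced pre-load alone costs both schedules on the order of $\sum_j j^2$, which dominates the hub term $L_h^2$; the ratio of your family then tends to $1$, not $3$. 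Your ``natural target'' of $9$ vs.\ $3$ makes the same error: it compares only the final local configuration and ignores that every job used to build that configuration also appears in the benchmark schedule's cost. A single static gadget of this type cannot beat a constant close to $2$, which is precisely why you could not get past $2$ with one tie-break mistake.

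The idea your proposal is missing is \emph{adaptivity} of the feasible sets rather than pre-loading: the instance must be generated online so that the feasible timeslots of the round-$i$ jobs are defined to be exactly the slots Greedy used in round $i-1$. The paper does this with a geometric cascade: round $1$ releases $2^{k-1}$ unit jobs feasible on $[1,2^k]$ (Greedy spreads them, load $1$ each), and each later round halves the number of jobs and restricts them to Greedy's previously used slots, so Greedy is forced to push a shrinking set of slots up to loads $2,3,\dots,k$ while the optimum, knowing which slots will be restricted later, keeps every slot at load exactly $1$. This costs Greedy $\sum_{i} i^2\,2^{k-i-1}+2k^2 = 3\cdot 2^k-4$ against an optimal cost of $2^k$ (the number of jobs), giving ratio $\to 3$; note also that this construction forces Greedy's behaviour regardless of tie-breaking, whereas your block (ii) needs ties to break against the algorithm. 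Without the adaptive definition of feasible sets (or some equivalent device that lets Greedy's ``mistakes'' accumulate at no extra charge to the optimum), the construction you outline does not reach ratio $3$.
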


\begin{proof}
Let $k$ be an arbitrarily large integer.
The adversary works in $k$ rounds and all the jobs released are of width and height $1$.
In the $i$-th round, where $1 \leq i < k$, the adversary releases $2^{k-i}$ jobs;
and in the $k$-th round (the final one), the adversary releases two jobs.
In the first round, the feasible timeslots of each job released are $[1,2^k]$.
In the $i$-th round, where $2 \leq i \leq k$, the feasible timeslots of each job released are all the timeslots that 
Greedy has assigned jobs in the $(i-1)$-th round.
We claim that the total cost of Greedy is $3\cdot 2^k -4$
and the total cost of the optimal algorithm is $2^k$.
Therefore, the competitive ratio of Greedy is arbitrarily close to $3$ with an arbitrarily large integer $k$.

We first analyze Greedy.
Since Greedy always assigns to a timeslot with the minimum load,
in the first round, Greedy assigns jobs to $2^{k-1}$ timeslots with each job to a different timeslot.
These $2^{k-1}$ timeslots will be the feasible timeslots for the $2^{k-2}$ jobs in the second round.
Using a similar argument, we can see that in each round, the number of feasible timeslots is twice the number of jobs released in that round.
In addition, before the $i$-th round,
the load of each feasible timeslot is $i-1$
and Greedy adds a load of 1 to each timeslot that it assigns a job,
making the load become $i$.
Therefore, the total cost of Greedy is
$\sum_{i=1}^{k-2} (i^2 \cdot 2^{k-i-1}) + k^2 \cdot 2 = 3 \cdot 2^k - 4$.
On the other hand, we can assign jobs released in a round to the timeslots that are not feasible timeslots for later rounds
since in the $i$-th round, the number of feasible timeslots is $2^{k-i+1}$ and the number of jobs released is $2^{k-i}$.
Therefore, in the optimal schedule, the load of each timeslot is exactly 1 and the total cost is $2^k$.
\end{proof}

\section{Online algorithm for uniform width jobs}
\label{sec:general_unit_width}
To handle jobs of arbitrary width and height,
we first study the case when jobs have uniform width 
(all jobs have the same width $\w \geq 1$).
The proposed algorithm~$\AlgUV$ (Section~\ref{sec:uniform_width}) is based on 
a further restricted case of unit width, i.e., $\w = 1$ (Section~\ref{sec:vertical}).

\comment{
In this section, we propose an online scheduling algorithm for jobs with arbitrary height and uniform width. 
For the jobs with uniform width, we separate the jobs into to groups, ``tight jobs'' and ``loose jobs'', by their length of feasible interval. We schedule these two groups of jobs separately. 
For tight jobs, we show that a constant competitive ratio can be achieved by a simple strategy. 
For loose jobs, we transform the input such that they can be treated like a set of jobs with unit width.
We first show an online algorithm for jobs with unit width. 
}

\subsection{Unit width and arbitrary height}
\label{sec:vertical}


In this section, we consider jobs with unit width and arbitrary height.
We present an online algorithm $\AlgV$ which makes reference to 
an arbitrary feasible online algorithm for the $\dvs$ problem, denoted by $\R$.
In particular, we require that the speed of $\R$ remains the same 
during any integral timeslot, i.e., in $[t,t+1)$ for all integers $t$.
Note that when jobs have integral release times and deadlines,
many known $\dvs$ algorithms satisfy this criteria, including $\YDS$, $\BKP$, and $\AVR$.


Recall in Section~\ref{sec:prelim} how a job set for the $\grid$ problem
is converted to a job set for the $\dvs$ problem.
We simulate a copy of~$\R$ on the converted job set
and denote the speed used by~$\R$ at~$t$ as $\load(\R,t)$.
Our algorithm makes reference to $\load(\R,t)$ but not the jobs run by~$\R$ at~$t$.

\comment{
\runtitle{Transforming to $\dvs$ problem.} We can transform the input of $\grid$ problem to a $\dvs$ problem. 
For job $J$ with width $\w(J)$, height $\h(J)$, release time $\rel(J)$ and deadline $\dl(J)$, it is converted to a job 
$J^\prime$ with work requirement $\work(J^\prime) = \w(J)\times\h(J)$ with release time $\rel(J^\prime)=\rel(J)$ and deadline $\dl(J^\prime)=\dl(J)$.
We say we "run $\sch$ on $\JS$" where $\sch$ is scheduling algorithm and $\JS$ is the input set.
}

\runtitle{Algorithm $\AlgV$.}
For each timeslot~$t$, we schedule jobs to start at $t$ such that $\load(\AlgV,t)$ 
is at least $\load(\R,t)$ or until all available jobs  have been scheduled. 
Jobs are chosen in an EDF manner.

\runtitle{Analysis.}
We note that since $\AlgV$ makes decision at integral time and
jobs have unit width, 
each job is completed before any further scheduling decision is made.
In other words, $\AlgV$ is non-preemptive.
To analyze the performance of $\AlgV$, we first note that $\AlgV$ gives
a feasible schedule (Lemma~\ref{thm:AlgV_feasible}),
and then analyze its competitive ratio (Theorem~\ref{thm:algV_approximate}).

\begin{lemma}
\label{thm:AlgV_feasible}
$\AlgV$ gives a feasible schedule.
\end{lemma}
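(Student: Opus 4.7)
The plan is to argue by contradiction. Suppose $\AlgV$ leaves some job $J^*$ unscheduled, and set $d=\dl(J^*)$. I will identify a ``critical busy interval'' $[a^*,d)$ that relates $\AlgV$ to the reference DVS schedule $\R$ while respecting the EDF order, and then derive a contradiction by sandwiching the total $\AlgV$-load on this interval.

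Let $a^*$ be the smallest integer in $\{0,\dots,d-1\}$ such that, for every timeslot $t\in[a^*,d-1]$, both (i)~$\load(\AlgV,t)\geq\load(\R,t)$ and (ii)~every job scheduled by $\AlgV$ at $t$ has deadline at most $d$ hold. Existence is witnessed by $a=d-1$: since $J^*$ is available at $d-1$ but not picked, $\AlgV$ must have halted by reaching $\load(\R,d-1)$, and EDF guarantees that every job picked before this halt has deadline at most $\dl(J^*)=d$.

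Next I claim $\rel(J^*)\geq a^*$. If $a^*>0$, minimality forces at least one of (i), (ii) to fail at $a^*-1$. If (i) fails, $\AlgV$ exhausted all available jobs at $a^*-1$; if (ii) fails, some job of deadline $>d$ was scheduled at $a^*-1$, which by EDF means every available deadline-$\leq d$ job had already been selected first. In either case, no deadline-$\leq d$ job released before $a^*$ remains pending at the start of $a^*$, and so $\rel(J^*)\geq a^*$ (the case $a^*=0$ is vacuous).

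Now set $W^*=\sum_{J:\,\rel(J)\geq a^*,\,\dl(J)\leq d}\h(J)$. By~(ii) together with the previous paragraph, every job that $\AlgV$ schedules in $[a^*,d-1]$ is counted in $W^*$, so $\sum_{t=a^*}^{d-1}\load(\AlgV,t)\leq W^*$. By~(i), together with the feasibility of $\R$ on the converted DVS instance $\JS_D$ (each job counted in $W^*$ has its whole feasible interval inside $[a^*,d)$ and must be completed by $\R$ there), we get $\sum_{t=a^*}^{d-1}\load(\AlgV,t)\geq\sum_{t=a^*}^{d-1}\load(\R,t)\geq W^*$. Equality forces $\AlgV$ to schedule every job counted in $W^*$ inside $[a^*,d)$, including $J^*$, contradicting our assumption. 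The delicate point of the argument is the joint maintenance of (i) and (ii) on $[a^*,d-1]$: extending the interval past a timeslot where $\AlgV$ ran out of available work would break the $\load(\R,t)$ lower bound, while extending past a timeslot where EDF reached into deadline-$>d$ jobs would break the ``release time $\geq a^*$'' part of the upper bound, so $a^*$ is defined precisely to rule out both.
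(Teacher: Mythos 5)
Your proof is correct and follows essentially the same strategy as the paper's: assume a job misses its deadline $d$, identify a critical interval ending at $d$ over which $\AlgV$'s load dominates $\R$'s, and contradict the feasibility of $\R$ by counting the work of jobs whose feasible intervals lie entirely inside that interval. The only difference is your condition (ii), which explicitly rules out load contributed by jobs with deadline beyond $d$ inside the critical interval — a point the paper's proof covers only tersely by appealing to the EDF rule — so your version is a slightly more careful rendering of the same argument.
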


\begin{proof}
Let $\load(\sch,\I)$ denote the total work done by schedule $\sch$ in $\I$. That is, $\load(\sch,\I) = \sum_{t\in\I}\load(\sch,\I)$. According to the algorithm, for all $\I_t = [0,t)$, $\load(\AlgV,\I_t) \geq \load(\R,\I_t)$. 

Suppose on the contrary that $\AlgV$ has a job $J_m$ missing deadline at $t$. That is, $\dl(J_m)= t$ but $J_m$ is not assigned before $t$. 
By the algorithm, for all $t^\prime\in[0,t)$, $\load(\AlgV,t^\prime) \geq \load(\R,t^\prime)$ unless there are less than $\load(\R,t^\prime)$ available jobs at $t^\prime$ for $\AlgV$. 
Let $t_0$ be the last timeslot in $[0,t)$ such that $\load(\AlgV,t_0)<\load(\R,t_0)$, $\rel(J_m)>t_0$ since all jobs released at or before $t_0$ have been assigned. 
For all $t^\prime\in(t_0,t)$, $\load(\AlgV,t^\prime) \geq \load(\R,t^\prime)$. Also, all jobs $J$ with $\rel(J)\leq t_0$ are finished by $t_0+1$ and jobs executed in $(t_0,t)$ are those released after $t_0$. 
Consider set $\JS_t$ of jobs with feasible interval completely inside $\I=(t_0,t)$ (note that $J_m\in\JS_t$), $\load(S,\I)\geq\sum_{J\in\JS_t \h(J)}$ for any feasible schedule $S$.
Since $\AlgV$ assigns jobs in EDF manner and is not feasible, $\load(\AlgV,\I) < \sum_{J\in\JS_t} \h(J)$. It follows that $\sum_{J\in\JS_t} \h(J) > \load(\AlgV,\I) \geq \load(\R,\I)$. It contradicts to the fact that $\R$ is feasible. Hence, $\AlgV$ finishes all jobs before their deadlines.
\end{proof}



Let $\hmax(\AlgV,t)$ be the maximum height of jobs scheduled at $t$ by $\AlgV$;
we set $\hmax(\AlgV, t)= 0$ if $\AlgV$ assigns no job at $t$.
We first classify each timeslot $t$ into two types: 
(i) $\hmax(\AlgV,t) < \load(\R,t)$, and (ii) $\hmax(\AlgV,t) \geq \load(\R,t)$. 
We denote by $\I_1$ and $\I_2$ the union of all timeslots of Type (i) and (ii), respectively. 
Notice that $\I_1$ and $\I_2$ can be empty 
and the union of $\I_1$ and $\I_2$ covers the entire time line.
The following lemma bounds the cost of $\AlgV$ in each type of timeslots.
Recall that $\cost(\sch,\I)$ denotes the cost of the schedule $\sch$ over the interval $\I$ and $\cost(\sch)$ denotes the cost of the entire schedule.

\begin{lemma}
\label{thm:algV_cost}
The cost of $\AlgV$ satisfies the following properties.
\text{\rm (i)} $\cost(\AlgV,\I_1) \leq 2^\alpha \cdot \cost(\R)$; and 
\text{\rm (ii)} $\cost(\AlgV,\I_2) \leq 2^\alpha \cdot \cost(\Opt)$.
\end{lemma}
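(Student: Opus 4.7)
The plan is to start from a uniform inequality on the load of $\AlgV$ in each timeslot, and then split the analysis according to the two types of timeslots defined above.

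First I would establish the key structural bound: for every timeslot $t$,
\[
\load(\AlgV,t)\;\leq\;\load(\R,t)+\hmax(\AlgV,t).
\]
This comes directly from the EDF-style loop in $\AlgV$. Jobs are appended one by one, and the loop only terminates when either the load has first reached $\load(\R,t)$ or no available job remains. In the first case, just before the last job $J_k$ was added the load was strictly less than $\load(\R,t)$, and $\h(J_k)\leq \hmax(\AlgV,t)$, so $\load(\AlgV,t)<\load(\R,t)+\hmax(\AlgV,t)$. In the second case, $\load(\AlgV,t)<\load(\R,t)$, which trivially implies the bound. This inequality is the main lever for the rest of the argument.

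For part (i), at timeslots $t\in\I_1$ we have $\hmax(\AlgV,t)<\load(\R,t)$, so the above inequality gives $\load(\AlgV,t)\leq 2\load(\R,t)$. Raising to the $\alpha$-th power and summing over $\I_1$ yields
\[
\cost(\AlgV,\I_1)\;\leq\;\sum_{t\in\I_1}\bigl(2\load(\R,t)\bigr)^\alpha\;\leq\;2^\alpha\cost(\R).
\]

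For part (ii), at timeslots $t\in\I_2$ we have $\hmax(\AlgV,t)\geq\load(\R,t)$, so the load bound gives $\load(\AlgV,t)\leq 2\hmax(\AlgV,t)$, hence
\[
\cost(\AlgV,\I_2)\;\leq\;2^\alpha\sum_{t\in\I_2}\hmax(\AlgV,t)^\alpha.
\]
It therefore suffices to prove $\sum_{t\in\I_2}\hmax(\AlgV,t)^\alpha\leq\cost(\Opt)$. To do this, for each $t\in\I_2$ let $J^\star_t$ denote a tallest job that $\AlgV$ runs at $t$; because jobs have unit width, the $J^\star_t$ are pairwise distinct. Let $\sigma(t)$ be the timeslot at which $\Opt$ runs $J^\star_t$, which lies in the feasible interval of $J^\star_t$. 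At any timeslot $t'$ in $\Opt$'s schedule, the load is at least $\sum_{t\in\sigma^{-1}(t')}\h(J^\star_t)$. The main obstacle is that $\sigma$ need not be injective, so I need the super-additivity inequality $(a_1+\cdots+a_k)^\alpha\geq a_1^\alpha+\cdots+a_k^\alpha$ for $\alpha\geq 1$ and nonnegative $a_i$, which yields
\[
\cost(\Opt)\;\geq\;\sum_{t'}\Bigl(\sum_{t\in\sigma^{-1}(t')}\h(J^\star_t)\Bigr)^{\!\!\alpha}\;\geq\;\sum_{t'}\sum_{t\in\sigma^{-1}(t')}\h(J^\star_t)^\alpha\;=\;\sum_{t\in\I_2}\hmax(\AlgV,t)^\alpha.
\]
Combining this with the previous display finishes part (ii).

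The main obstacle, as noted, is handling the possibility that $\Opt$ packs several of the selected tall jobs into the same timeslot; super-additivity of $x\mapsto x^\alpha$ is exactly what is needed to dominate the sum of the individual $\alpha$-th powers by the single $\alpha$-th power of the aggregated load. All other steps are straightforward consequences of the scheduling rule of $\AlgV$ and the definitions of $\I_1$ and $\I_2$.
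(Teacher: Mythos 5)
Your proof is correct and follows essentially the same route as the paper: the load bound $\load(\AlgV,t)\leq\load(\R,t)+\hmax(\AlgV,t)$ from the scheduling rule, doubling on $\I_1$ against $\R$ and on $\I_2$ against $\hmax$, and then bounding $\sum_{t\in\I_2}\hmax(\AlgV,t)^\alpha$ by $\cost(\Opt)$ via distinctness of the tallest jobs (unit width) and superadditivity of $x\mapsto x^\alpha$. The paper compresses that last step into ``by convexity, $\cost(\Opt)\geq\sum_J \h(J)^\alpha$''; your version merely makes the mapping to $\Opt$'s timeslots explicit.
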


\begin{proof}
%

(i) By the algorithm, $\load(\AlgV,t) < \load(\R,t) + \hmax(\AlgV,t) \leq 2\cdot\load(\R,t)$ for $t\in \I_1$. 
It follows that $\cost(\AlgV,\I_1) \leq 2^\alpha\cdot \sum_{t\in\I_1} \load(\R,t)^\alpha = 2^\alpha\cdot \cost(\R,\I_1)\leq 2^\alpha\cdot \cost(\R)$.

(ii) 
By convexity, $\cost(\Opt) \geq \sum_J \h(J)^\alpha$. We can see that $\cost(\Opt) \geq \sum_{t\in\I_2}\hmax(\AlgV,t)^\alpha$. 
According to the algorithm, $\load(\AlgV,t) < \load(\R,t) + \hmax(\AlgV,t) \leq 2\cdot\hmax(\AlgV,t)$ for $t\in \I_2$. 
Hence, $\cost(\AlgV, \I_2) = \sum_{t\in\I_2} \load(\AlgV,t)^\alpha \leq 2^\alpha\cdot\sum_{t\in\I_2}\hmax(\AlgV,t)^\alpha\leq 2^\alpha\cdot\cost(\Opt)$. 
\end{proof}

Notice that $\cost(\AlgV) = \cost(\AlgV,\I_1) + \cost(\AlgV,\I_2)$ since $\I_1$ and $\I_2$ have no overlap. Together with Lemma~\ref{thm:algV_cost} and Observation~\ref{thm:opt_dvs_grid}, we obtain the competitive ratio of $\AlgV$
in the following theorem.

\begin{theorem}
\label{thm:algV_approximate}
Algorithm $\AlgV$ is $2^\alpha\cdot(R+1)$-competitive, where $R$ is the competitive ratio of the reference $\dvs$ algorithm $\R$.
\end{theorem}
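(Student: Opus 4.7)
The plan is to combine the two bounds from Lemma~\ref{thm:algV_cost} with the competitive guarantee of $\R$ and the conversion from DVS to Grid cost provided by Observation~\ref{thm:opt_dvs_grid}. Since $\I_1$ and $\I_2$ partition the time line, the total cost of $\AlgV$ decomposes cleanly as
\[
\cost(\AlgV) \;=\; \cost(\AlgV,\I_1) + \cost(\AlgV,\I_2).
\]
Lemma~\ref{thm:algV_cost}(ii) already bounds the second summand by $2^\alpha \cdot \cost(\Opt)$, so no further work is needed for the $\I_2$ part.

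For the $\I_1$ part, Lemma~\ref{thm:algV_cost}(i) bounds $\cost(\AlgV,\I_1)$ by $2^\alpha \cdot \cost(\R)$. The cost of $\R$ is measured on the converted DVS instance $\JS_D$; since $\R$ is by assumption $R$-competitive for the DVS problem, $\cost(\R(\JS_D)) \leq R \cdot \cost(\Opt_D(\JS_D))$. Then I would invoke Observation~\ref{thm:opt_dvs_grid}, which gives $\cost(\Opt_D(\JS_D)) \leq \cost(\Opt_G(\JS_G))$, to replace the DVS optimum by the Grid optimum $\cost(\Opt)$. Chaining these yields $\cost(\AlgV,\I_1) \leq 2^\alpha \cdot R \cdot \cost(\Opt)$.

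Summing the two bounds gives
\[
\cost(\AlgV) \;\leq\; 2^\alpha \cdot R \cdot \cost(\Opt) + 2^\alpha \cdot \cost(\Opt) \;=\; 2^\alpha(R+1) \cdot \cost(\Opt),
\]
which is exactly the claimed ratio. There is no real obstacle at this stage: all of the technical work is already done in Lemma~\ref{thm:algV_cost} (the case analysis on timeslot type) and Observation~\ref{thm:opt_dvs_grid} (the DVS-to-Grid transfer). The only thing to be careful about is keeping clear which cost is measured against which instance ($\JS_D$ vs.\ $\JS_G$), so the chain $\cost(\R) \leq R\cdot\cost(\Opt_D) \leq R\cdot\cost(\Opt_G)$ is applied correctly; once that bookkeeping is in place the theorem follows by a one-line calculation.
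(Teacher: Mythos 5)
Your proposal is correct and follows essentially the same route as the paper: decompose $\cost(\AlgV)$ over the disjoint sets $\I_1$ and $\I_2$, apply Lemma~\ref{thm:algV_cost} to each part, and use the $R$-competitiveness of $\R$ together with Observation~\ref{thm:opt_dvs_grid} to pass from $\cost(\Opt_D(\JS_D))$ to $\cost(\Opt)$. Your explicit bookkeeping of which instance ($\JS_D$ versus $\JS_G$) each cost refers to is a helpful clarification of a step the paper leaves implicit.
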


There are a number of $\dvs$ algorithms that can be used as the reference algorithm. The only requirement is that
the speed of the reference algorithm within any integral interval $[t, t+1)$ for some integer $t$ should be at most the load of the resulting online algorithm at the corresponding timeslot $t$.
Otherwise, the feasibility of $\AlgV$ cannot be guaranteed. Also, since in our online algorithm we make decision at each integral time $t$, it means if the load of the reference algorithm at $i +\Delta$ is larger than $\load(\R, i)$ for some $0<\Delta <1$, our online algorithm might not be feasible.

The speed of the $\AVR$ and $\YDS$ algorithm only change at release times or deadlines of the jobs so it is valid to use $\AVR$ or $\YDS$ as a reference. 
Note that if we use $\YDS$ as the reference, the algorithm~$\AlgV$ is an offline algorithm since $\YDS$ is an offline algorithm. 
Unlike $\AVR$ and $\YDS$, the speed of $\BKP$ within a timeslot might increase. 
Hence, we need to modify the $\BKP$ algorithm such that it can be used as the reference algorithm. 
In Lemma~\ref{lm:bkp_load}, we show that the speed of $\BKP$ in $[t,t+1)$ is bounded by a constant factor times the speed at~$t$ for any time~$t$.

\begin{lemma}
\label{lm:bkp_load}
For any integral time $t$ and a constant $0< \Delta < 1$, $\load(\BKP,t+\Delta)\leq (1+e)\cdot \load(\BKP,t)$ if the release times and deadlines of jobs are integral.
\end{lemma}

\begin{proof}
Recall that the speed of $\BKP$ at time $t$, $\load(\BKP,t) = \max_{\interval} e\cdot \frac{\work(t,\interval)}{|\interval|}$ where $\interval = [t_1,t_2)$ and $(t_2 - t_1):(t_2 - t) = e:1$. The proof idea is, consider the interval $\interval$ chosen by $\BKP$ corresponding to $t + \Delta$, we can transform it into another interval $\interval^\prime$ which is one of the interval candidate for $t$. We show that $e\cdot\frac{\work(t,\interval^\prime)}{|\interval^\prime|}$ is at least $\frac{1}{1+e}$ times of the speed of $\BKP$ at $t+\Delta$.

Assume that at time $t+\Delta$, $\load(\BKP,t+\Delta) = e\cdot \frac{\work(t,\interval)}{|\interval|}$ where $\interval = [t_1,t_2)$ is chosen by $\BKP$. We can construct $\interval^\prime = [t^\prime_1, t_2)$ such that $(t_2 - t_1^\prime):(t_2 - t) = e:1$ by setting $t_1^\prime = t_2-e(t_2-t)$. It is clear that $\interval \subset \interval^\prime$ since the two intervals have the same right endpoint and $\interval^\prime$ is longer than $\interval$. 
In fact, $|\interval^\prime| = e(t_2-t) = e(t_2-(t+\Delta)) + e\Delta = |\interval|+e\Delta \leq |\interval|+e$. 
Moreover, for any interval candidate, the length must be at least $1$ if the release times and deadlines of the jobs are integral. Otherwise, the interval contains no jobs and the speed is $0$. 
Hence, $|\interval^\prime| \leq (1+e)|\interval|$. By $\BKP$, $\load(\BKP,t) \geq e\cdot \frac{\work(t,\interval^\prime)}{|\interval^\prime|} = e\cdot \frac{\work(t+\Delta,\interval^\prime)}{|\interval^\prime|}$. 
The later equality holds since there is no job released between $t$ and $t+\Delta$. 
Since $\interval \subset \interval^\prime$ and $|\interval^\prime| \leq (1+e)|\interval|$, $e\cdot \frac{\work(t+\Delta,\interval^\prime)}{|\interval^\prime|} \geq e\cdot \frac{\work(t+\Delta,\interval)}{|\interval^\prime|} \geq e\cdot \frac{\work(t+\Delta,\interval)}{(1+e)|\interval|}$. 
Hence, $\load(\BKP,t) \geq \frac{1}{1+e} \cdot \load(\BKP,t+\Delta)$. 
\end{proof}

Lemma~\ref{lm:bkp_load} implies that, although the speeds of $\BKP$ change within $[t, t + 1)$, the speeds are bounded by $(1 + e)$ times of the speed at~$t$. Hence, we can modify $\BKP$ into $\newBKP$ as follows: at integral time~$t$, the speed of $\newBKP$, $\load(\newBKP, t) = (1 + e)\load(\BKP,t)$; at time $t^\prime = t+\Delta$ where~$t$ is integral and $0 < \Delta < 1$, $\load(\newBKP,t^\prime) = \load(\newBKP,t)$. By the modification, the speed of $\newBKP$ remains the same during any integral timeslot, and $\cost(\newBKP) \leq (1 + e)^\alpha \cdot \cost(\BKP)$. As mentioned in Section~\ref{sec:prelim}, the $\BKP$ algorithm is $8\cdot e^\alpha$-competitive.
On the other hand, $\AlgV$ can take an offline $\dvs$ algorithm, 
e.g., the optimal $\YDS$ algorithm, as reference
and returns an offline schedule.
Therefore, we have the following corollary. 

\begin{corollary}
\label{thm:V_competitive}
$\AlgV$ is $2^\alpha\cdot(8\cdot (e+e^2)^\alpha+1)$-competitive, $2^\alpha\cdot(\frac{(2\alpha)^\alpha}{2}+1)$-competitive, and
$2^\alpha\cdot 2$-approximate when the algorithm $\newBKP$, $\AVR$, and $\YDS$ are referenced, respectively.
\end{corollary}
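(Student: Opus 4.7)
The plan is to derive both parts as immediate instantiations of Theorem~\ref{thm:algV_approximate}, once with $\R=\BKP$ and once with $\R=\YDS$. Inspecting the statement together with Lemma~\ref{thm:algV_cost}, what is actually needed is that the reference DVS algorithm $\R$ satisfies a bound of the form $\cost(\R(\JS_D)) \le R \cdot \cost(\Opt_G(\JS_G))$ \emph{against the optimal Grid cost} (not the optimal DVS cost). So the only remaining task is to supply such an $R$ for each of the two reference algorithms.

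For $\R=\BKP$, the required bound is exactly Corollary~\ref{thm:BKP}, which gives $R=8e^\alpha$. Substituting into the factor $2^\alpha(R+1)$ from Theorem~\ref{thm:algV_approximate} produces the claimed $2^\alpha(8e^\alpha+1)$-competitive ratio. Since $\BKP$ is itself online and $\AlgV$ consults its reference only through the already-committed speed $\load(\R,t)$, the resulting algorithm remains online. For $\R=\YDS$, the required bound $\cost(\YDS(\JS_D)) \le \cost(\Opt_G(\JS_G))$ follows by chaining the optimality of $\YDS$ for DVS, namely $\cost(\YDS(\JS_D)) \le \cost(\Opt_D(\JS_D))$, with Observation~\ref{thm:opt_dvs_grid}, namely $\cost(\Opt_D(\JS_D)) \le \cost(\Opt_G(\JS_G))$. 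Hence $R=1$ is admissible and Theorem~\ref{thm:algV_approximate} gives the $2^\alpha\cdot 2$ bound; because $\YDS$ is an offline algorithm, so is the resulting $\AlgV$, which is why this second statement is phrased as an approximation ratio rather than a competitive ratio.

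There is no real obstacle here, since the heavy lifting is already contained in Lemma~\ref{thm:AlgV_feasible} and Lemma~\ref{thm:algV_cost}. The only point to double-check is the ``base'' of the ratio in each reference bound: both Corollary~\ref{thm:BKP} and the $\YDS$ bound above must be stated with $\cost(\Opt_G)$ on the right-hand side, so that they can be plugged into Theorem~\ref{thm:algV_approximate} without re-introducing any extra factor from converting between DVS and Grid optima.
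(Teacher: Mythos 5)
Your proposal is correct and follows essentially the same route as the paper: instantiate Theorem~\ref{thm:algV_approximate} with $\R=\BKP$ (using $R=8e^\alpha$ via Corollary~\ref{thm:BKP}) and with $\R=\YDS$ (using $R=1$ via the optimality of $\YDS$ together with Observation~\ref{thm:opt_dvs_grid}), noting that the $\YDS$ instantiation yields an offline schedule and hence an approximation ratio. Your care about the ratio being stated against $\cost(\Opt_G)$ matches how the paper's Lemma~\ref{thm:algV_cost} and Observation~\ref{thm:opt_dvs_grid} are combined.
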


\comment{

Consider online algorithm $\BKP$ proposed by Bansal et al.~\cite{DBLP:journals/jacm/BansalKP07} for $\dvs$ problem. 
If we run $\BKP$ on the vertical jobs set of the $\grid$ problem, load of $\BKP$ does not change between $[t,t+1)$ since all jobs with release time and deadline on integral time. 
By choosing $\BKP$ as $\R$, we get the following corollary. 
We use $\AlgV_\Alg$ to denote the online algorithm choosing $\Alg$ as reference algorithm $\R$.

\begin{corollary}
\label{thm:UV_competitive}
$\AlgV_\BKP$ is $2^\alpha\cdot(8\cdot e^\alpha+1)$-competitive when $\alpha \geq 2$.
\end{corollary}

$\AlgV$ can also run in an offline fashion.
By choosing $\YDS$ as $\R$, we obtain the following corollary.

\begin{corollary}
$\AlgV_\YDS$ is $2^\alpha\cdot 2$-approximate.
\end{corollary}

}

\subsection{Uniform width and arbitrary height}
\label{sec:uniform_width}

In this section, we consider jobs with uniform width $\w$ and arbitrary height.
The idea of handling uniform width jobs is to treat them as if they were unit width,
however, this would mean that jobs may have release times or deadlines at non-integral time.
To remedy this, we define a procedure $\convertfi$ to align the feasible intervals
(precisely, release times and deadlines)
to the new time unit of duration~$\w$.

Let $\uwJS$ be a uniform width job set. 
We first define the notion of ``tight'' and ``loose'' jobs.
A job~$J$ is said to be \emph{tight} if $|\interval(J)| < 2\w$; otherwise, it is \emph{loose}. 
Let $\uwJSt$ and $\uwJSl$ be the disjoint subsets of tight and loose jobs of $\uwJS$, respectively.
We design different strategies for tight and loose jobs.
As to be shown, tight jobs can be handled easily 
by starting them at their release times.
For any loose job, we modify it via Procedure $\convertfi$
such that its release time and deadline is a multiple of $\w$.
With this alternation, we can treat the jobs as unit width and
make scheduling decisions at time multiple of $\w$.

\comment{
For set $\uwJS$ of jobs with uniform width $\w$, we partition $\uwJS$ into two groups, tight jobs and loose jobs. A job~$J$ is said to be \emph{tight} if $|\interval(J)| < 2\w$; otherwise, it is \emph{loose}.
Section~\ref{sec:vertical}
Let $\uwJSt$ and $\uwJSl$ be the disjoint subsets of tight and loose jobs of $\uwJS$, respectively. 
For tight jobs and loose jobs we have different strategy. For tight jobs, they are executed once they are released. We prove this strategy is constant competitive later. 
For loose jobs, we modify the release time and deadline by Procedure $\convertfi$ such that each job has new release time and deadline at $i\cdot \w$ for some integer $i$. 
Then we run $\AlgV$ on the set of loose jobs with modified release time and deadline, and make decision only at timeslots $i\cdot \w$ where $i$ is integer.
}

\runtitle{Procedure $\convertfi$.} 
Given a loose job set $\uwJSl$ in which $\w(J)=w$ and $|\interval(J)| \geq 2\cdot w$ 
$\forall J\in\uwJSl$. 
We define the procedure $\convertfi$ to transform each loose job $J\in \uwJSl$ 
into a job $\uwaJ$ with release time and deadline ``aligned'' as follows. We denote the resulting job set by $\uwaJS$. 
\begin{itemize}
\item $\rel(\uwaJ) \leftarrow \min_{i\geq 0} \{i\cdot w\mid i\cdot w\geq\rel(J)\}$;
\item $\dl(\uwaJ) \leftarrow \max_{i\geq 0} \{i\cdot w\mid i\cdot w\leq\dl(J)\}$.
\end{itemize}

\begin{observation}
\label{obs:uwafi}
For any job $J\in \uwJSl$ and the corresponding $\uwaJ$,
\text{\rm (i)} $\frac{1}{3}\cdot|\interval(J)|<|\interval(\uwaJ)| \leq |\interval(J)|$;
\text{\rm (ii)} $|\interval(\uwaJ)|\geq w$;
\text{\rm (iii)} $\interval(\uwaJ)\subseteq\interval(J)$.
\end{observation}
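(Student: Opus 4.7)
The plan is to prove the three parts in the order (iii), (ii), (i), since each claim feeds the next. The whole argument rests on a single workhorse observation: rounding an endpoint up to the next multiple of $w$, or down to the previous multiple of $w$, shifts it by strictly less than $w$.

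Part (iii) is immediate from the $\min$/$\max$ definitions in $\convertfi$: we have $\rel(\uwaJ) \geq \rel(J)$ and $\dl(\uwaJ) \leq \dl(J)$, so $\interval(\uwaJ) \subseteq \interval(J)$. This inclusion already delivers the upper bound $|\interval(\uwaJ)| \leq |\interval(J)|$ in (i). For (ii), the workhorse estimate gives $\rel(\uwaJ) - \rel(J) < w$ and $\dl(J) - \dl(\uwaJ) < w$, so $|\interval(\uwaJ)| > |\interval(J)| - 2w$. The loose hypothesis $|\interval(J)| > 2w$ makes this strictly positive; and since both $\rel(\uwaJ)$ and $\dl(\uwaJ)$ are multiples of $w$ by construction, $|\interval(\uwaJ)|$ is a nonnegative multiple of $w$, and being positive it must be at least $w$.

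For the remaining lower bound in (i), I would feed (ii) back into the gap estimate: from $|\interval(J)| < |\interval(\uwaJ)| + 2w$ together with $w \leq |\interval(\uwaJ)|$ one gets $|\interval(J)| < 3\,|\interval(\uwaJ)|$, i.e., $|\interval(\uwaJ)| > \tfrac{1}{3}|\interval(J)|$. There is no substantial obstacle; the statement is essentially a bookkeeping lemma about snapping release times and deadlines to a grid of step $w$. The one thing to watch is keeping inequalities strict at the right places, which is exactly what the strict loose hypothesis $|\interval(J)| > 2w$ is there to guarantee (with only a weak hypothesis, the factor $\tfrac{1}{3}$ would have to be attained in the limit).
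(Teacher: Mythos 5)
Your proof is correct, and it is exactly the routine bookkeeping the paper intends: the paper states Observation~\ref{obs:uwafi} without proof as an immediate consequence of $\convertfi$, and your argument (monotone rounding gives (iii) and the upper bound, each endpoint shifts by strictly less than $w$, looseness $|\interval(J)|>2w$ plus divisibility by $w$ gives (ii), and feeding (ii) back gives the $\frac{1}{3}$ bound) is the natural way to fill it in. No gaps.
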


Notice that after $\convertfi$, the release time and deadline of each loose job are aligned to timeslot $i_1\cdot w$ and $i_2 \cdot w$ for some integers $i_1<i_2$. 
By Observation~\ref{obs:uwafi}, a feasible schedule of $J^\prime$ is also a feasible schedule of~$J$. 
Furthermore, after $\convertfi$ all jobs are released at time which is a multiple of $\w$.
Hence,
the job set $\uwaJS$ can be treated as job set with unit width, 
where each unit has duration $\w$ instead of $1$.  


As a consequence of altering the feasible intervals,
we introduce two additional procedures that convert associated schedules.
Given a schedule $\uwSch$ for job set $\uwJSl$, $\alignS$ converts it to a schedule $\uwaSch$ for the corresponding job set $\uwaJS$. The other procedure $\freeS$ takes a schedule $\uwaSch$ for a job set $\uwaJS$ and converts it to a schedule $\uwSch$ for $\uwJSl$.

\runtitle{Transformation $\alignS$.}
$\alignS$ transforms $\uwSch$ into $\uwaSch$ by shifting the execution interval of every job $J\in\uwJSl$.
\begin{itemize}
\item $\stime(\uwaSch,\uwaJ) \leftarrow \min\{ \dl(\uwaJ)-\w(\uwaJ), \min_{i\geq 0} \{i\cdot w\mid i\cdot w\geq\stime(\sch,J)\}\}$;
\item $\etime(\uwaSch,\uwaJ) \leftarrow \stime(\uwaSch,\uwaJ) + \w(\uwaJ)$.
\end{itemize}

\begin{observation}
\label{obs:aling_overlap}
\label{obs:align_feasible}
\label{thm:align_load}
Consider any schedule $\uwSch$ for $\uwJSl$ and the schedule $\uwaSch$ for $\uwaJS$ constructed by $\alignS$. The following properties hold:
\text{\rm (i)} For any job $\uwJ\in\uwJSl$ 
and the corresponding $\uwaJ$, $\stime(\uwaJ) > \stime(\uwJ)-w$ and $\etime(\uwaJ) < \etime(\uwJ)+w$; 
\text{\rm (ii)} $\uwaSch$ is a feasible schedule for $\uwaJS$; and 
\text{\rm (iii)} At any time~$t$, $\load(\uwaSch, t) \leq  \load(\uwSch, t) + \load(\uwSch, t-(w-1)) + \load(\uwSch, t+(w-1))$.
\end{observation}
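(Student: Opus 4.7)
The plan is to verify the three assertions in turn, each reducing to an unpacking of the definitions of $\convertfi$ and $\alignS$ together with the feasibility hypothesis on $\uwSch$.

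For (i), I would directly inspect the two candidates in $\stime(\uwaSch,\uwaJ) = \min\{\dl(\uwaJ) - w,\, \min_{i \geq 0}\{iw : iw \geq \stime(\uwSch,\uwJ)\}\}$. The ceiling candidate $\min_{i \geq 0}\{iw : iw \geq \stime(\uwSch,\uwJ)\}$ lies in $[\stime(\uwSch,\uwJ), \stime(\uwSch,\uwJ)+w)$ by definition of rounding up to a multiple of $w$, which is enough for the upper bound $\stime(\uwaSch,\uwaJ) < \stime(\uwSch,\uwJ) + w$ (equivalently $\etime(\uwaSch,\uwaJ) < \etime(\uwSch,\uwJ) + w$). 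For the lower bound it suffices to check $\dl(\uwaJ) - w > \stime(\uwSch,\uwJ) - w$, i.e.\ $\dl(\uwaJ) > \stime(\uwSch,\uwJ)$; this follows from $\dl(\uwaJ) > \dl(\uwJ) - w$ (the flooring in $\convertfi$ loses at most $w-1$) together with $\stime(\uwSch,\uwJ) + w \leq \dl(\uwJ)$ (feasibility of $\uwSch$).

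For (ii), I would verify the two inequalities $\rel(\uwaJ) \leq \stime(\uwaSch,\uwaJ)$ and $\etime(\uwaSch,\uwaJ) \leq \dl(\uwaJ)$. The second is immediate because $\dl(\uwaJ) - w$ appears as one of the two arguments of the outer $\min$, forcing $\stime(\uwaSch,\uwaJ) \leq \dl(\uwaJ) - w$. For the first, the ceiling branch dominates $\rel(\uwaJ)$ because $\stime(\uwSch,\uwJ) \geq \rel(\uwJ)$ and rounding up to a multiple of $w$ is monotone, while the deadline branch $\dl(\uwaJ) - w$ dominates $\rel(\uwaJ)$ by Observation~\ref{obs:uwafi}(ii).

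For (iii), I would argue job by job. Setting $\delta = \stime(\uwaSch,\uwaJ) - \stime(\uwSch,\uwJ)$, part (i) combined with integrality of all release times, deadlines, and widths forces $\delta$ to be an integer in $\{-(w-1), \ldots, w-1\}$. A job $\uwaJ$ contributes height $\h(\uwJ)$ to $\load(\uwaSch, t)$ iff $t - \stime(\uwSch,\uwJ) \in \{\delta, \ldots, \delta + w - 1\}$, and I claim that in this situation at least one of $t$, $t-(w-1)$, $t+(w-1)$ lies in $[\stime(\uwSch,\uwJ), \etime(\uwSch,\uwJ))$, so $\uwJ$ contributes the same height to the corresponding term on the right-hand side. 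A short case split verifies this: for $\delta \geq 0$, the value $t - \stime(\uwSch,\uwJ)$ is either in $[0, w)$ (pick $t$) or in $[w, 2w-2]$ (pick $t-(w-1)$, which lands in $[1, w-1]$); symmetrically for $\delta < 0$ using $t+(w-1)$. Summing across all jobs yields the claimed load bound, with possible overcounting only strengthening the inequality.

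The main obstacle is the bookkeeping in part (iii): I must ensure that the extremal shifts $\delta = \pm(w-1)$, paired with $t$ at the corresponding endpoint of $[\stime(\uwaSch,\uwaJ), \etime(\uwaSch,\uwaJ))$, still land strictly inside $[0, w)$ after translating by $\pm(w-1)$. The strict inequalities $|\delta| < w$ established in part (i), rather than mere $|\delta| \leq w$, are exactly what makes this case analysis tight.
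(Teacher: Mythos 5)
Your proposal is correct and follows essentially the same route as the paper: (i) and (ii) by unpacking $\convertfi$/$\alignS$ and the feasibility of $\uwSch$, and (iii) by arguing job-by-job that a shift of magnitude strictly less than $w$ forces any job covering $t$ in $\uwaSch$ to cover one of $t$, $t-(w-1)$, $t+(w-1)$ in $\uwSch$. You merely spell out details the paper leaves implicit (the proof of (i), the release-time side of feasibility in (ii), and the explicit case split on the shift $\delta$), which is fine.
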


\begin{proof}
(ii) By $\alignS$, $\stime(\uwaSch,\uwaJ)\leq \dl(\uwaJ)-\w(\uwaJ)$. 
Also, $|[\stime(\uwaSch,\uwaJ),\etime(\uwaSch,\uwaJ))| = \w(\uwaJ)$. 
Hence $[\stime(\uwaSch,\uwaJ),\etime(\uwaSch,\uwaJ))\subseteq \interval(\uwaJ)$. 
That is, $\uwaSch$ is a feasible schedule for both $\uwaJS$ and $\uwJ$.

(iii) By (i), $\stime(\uwaJ) > \stime(\uwJ)-w$ and $\etime(\uwaJ) < \etime(\uwJ)+w$ for each $\uwJ$. 
Hence, for any timeslot $t$, for each job $\uwJ$ with $[\stime(\uwSch,\uwJ),\etime(\uwSch,\uwJ))\cap[t-(w-1), t+(w-1))=\emptyset$, $t\notin[\stime(\uwaSch,\uwaJ),\etime(\uwaSch,\uwaJ))$. 
On the other hand, consider the jobs $J$ that $[\stime(\uwJ),\etime(\uwJ))\cap[t-(w-1), t+(w-1))\neq\emptyset$. 
Since $|[\stime(\uwJ),\etime(\uwJ))|=w$, at least one of the timeslots $t-(w-1)$, $t$, or $t+(w-1)$ is in $[\stime(\uwJ),\etime(\uwJ))$. 
Hence we can capture $\load(\uwaSch, t)$ by $\load(\uwSch, t) + \load(\uwSch, t-(w-1)) + \load(\uwSch, t+(w-1))$.
\end{proof}


\begin{corollary}
\label{thm:align_cost}
Using $\alignS$ to generate $\uwaSch$ given $\uwSch$, we have $\cost(\uwaSch) \leq 3^\alpha \cdot\cost(\uwSch)$.
\end{corollary}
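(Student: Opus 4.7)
The plan is to combine the pointwise load bound from Observation~\ref{obs:aling_overlap}(iii) with the convexity of $x\mapsto x^\alpha$ for $\alpha > 1$. From part (iii) we have, at every timeslot $t$,
\[
\load(\uwaSch, t) \;\leq\; \load(\uwSch, t) + \load(\uwSch, t-(w-1)) + \load(\uwSch, t+(w-1)).
\]
Raising both sides to the $\alpha$-th power and applying the standard inequality $(a+b+c)^\alpha \leq 3^{\alpha-1}(a^\alpha + b^\alpha + c^\alpha)$ (a direct consequence of Jensen applied to the convex function $x^\alpha$) gives a per-timeslot bound on $\load(\uwaSch,t)^\alpha$.

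Next I would sum this bound over all timeslots $t$ to get $\cost(\uwaSch)$ on the left-hand side. On the right-hand side, the three terms $\sum_t \load(\uwSch,t)^\alpha$, $\sum_t \load(\uwSch,t-(w-1))^\alpha$, and $\sum_t \load(\uwSch,t+(w-1))^\alpha$ are simply the same sum $\cost(\uwSch)$ written with a shifted dummy index (the shifts $t \mapsto t \pm (w-1)$ are bijections on $\mathbb{Z}$, and for timeslots outside the finite horizon the load is zero so the sums agree). Thus each of the three contributions equals $\cost(\uwSch)$, yielding
\[
\cost(\uwaSch) \;\leq\; 3^{\alpha-1} \cdot 3 \cdot \cost(\uwSch) \;=\; 3^{\alpha} \cdot \cost(\uwSch),
\]
which is the desired bound.

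There is no real obstacle here; the whole argument is a one-line convexity estimate once Observation~\ref{obs:aling_overlap}(iii) is in hand. The only point to be a little careful about is confirming that summing the shifted load terms over all $t$ recovers $\cost(\uwSch)$ exactly (rather than something larger or smaller), but this follows immediately because the sums are over the entire time axis and loads vanish outside the finite scheduling horizon.
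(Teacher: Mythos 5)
Your proof is correct and takes essentially the same route as the paper: invoke Observation~\ref{thm:align_load}(iii), apply convexity of $x^\alpha$, and sum over all timeslots, using the fact that the shifted sums reindex to $\cost(\uwSch)$. You merely spell out the Jensen/power-mean step and the index-shift bookkeeping that the paper's one-line proof compresses into the expression $\sum_t (3\cdot\load(\uwSch,t))^\alpha$, so your version is, if anything, the more carefully justified one.
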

\begin{proof}
By Observation~\ref{thm:align_load} (iii), $\cost(\uwaSch) = \sum_t \load(\uwaSch,t)^\alpha\leq \sum_t (3 \cdot\load(\uwSch,t))^\alpha = 3^\alpha\cdot\cost(\uwSch)$.
\end{proof}

\begin{lemma}
\label{thm:align_opt}
$\cost(\Opt(\uwaJS))\leq 3^\alpha\cdot\cost(\Opt(\uwJSl))$.
\end{lemma}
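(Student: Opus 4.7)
The plan is to exhibit a feasible schedule for $\uwaJS$ whose cost is within a factor $3^\alpha$ of $\cost(\Opt(\uwJSl))$; then the optimum for $\uwaJS$ can only be better.

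First I would take $\uwSch := \Opt(\uwJSl)$, an optimal schedule for the loose jobs, and apply the procedure $\alignS$ to obtain a schedule $\uwaSch$ for the aligned job set $\uwaJS$. By Observation~\ref{obs:align_feasible}(ii), $\uwaSch$ is a feasible schedule for $\uwaJS$, so it is a valid candidate for the optimum on $\uwaJS$, giving $\cost(\Opt(\uwaJS)) \leq \cost(\uwaSch)$.

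Next I would invoke Corollary~\ref{thm:align_cost}, which tells us $\cost(\uwaSch) \leq 3^\alpha \cdot \cost(\uwSch)$. Chaining the two inequalities,
\[
\cost(\Opt(\uwaJS)) \;\leq\; \cost(\uwaSch) \;\leq\; 3^\alpha \cdot \cost(\uwSch) \;=\; 3^\alpha \cdot \cost(\Opt(\uwJSl)),
\]
which is exactly the claim. The only real content has already been packaged in Observation~\ref{obs:align_feasible} and Corollary~\ref{thm:align_cost}; the proof here is essentially a one-line composition of these facts, so there is no substantial obstacle. The delicate point (handled upstream in Observation~\ref{thm:align_load}(iii)) is the pointwise load blow-up by at most a factor of 3 when shifting starts to multiples of $w$, but that is already done for us.
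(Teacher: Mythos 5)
Your proposal is correct and matches the paper's own proof: both apply $\alignS$ to $\Opt(\uwJSl)$, use the feasibility of the resulting schedule for $\uwaJS$ together with Corollary~\ref{thm:align_cost}, and conclude by optimality of $\Opt(\uwaJS)$. No gaps.
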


\begin{proof}
Consider set of loose jobs $\uwJSl$ with uniform width and the corresponding $\uwaJS$. Given $\Opt(\uwJSl)$, there exists schedule $S(\uwaJS)$ generated by $\alignS$. 
By Lemma~\ref{thm:align_cost}, $\cost(S(\uwaJS))\leq 3^\alpha\cdot\cost(\Opt(\uwJSl))$. Hence, $\cost(\Opt(\uwaJS))\leq\cost(S(\uwaJS))\leq 3^\alpha\cdot\cost(\Opt(\uwJSl))$.
\end{proof}

\runtitle{Transformation $\freeS$.}
$\freeS$ transforms $\uwaSch$ into $\uwSch$.
\begin{itemize}
\item $\stime(\uwSch,J) \leftarrow \stime(\uwaSch,\uwaJ)$;
\item $\etime(\uwSch,J) \leftarrow \etime(\uwaSch,\uwaJ)$.
\end{itemize}

The feasibility of $\uwaSch$ can be easily proved by Observation~\ref{obs:uwafi}.

\begin{lemma}
\label{thm:free_cost}
Using $\freeS$, we have $\cost(\uwSch)= \cost(\uwaSch)$.
\end{lemma}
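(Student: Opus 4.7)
The plan is to observe that Lemma~\ref{thm:free_cost} is essentially a bookkeeping statement: $\freeS$ does not modify start/end times at all, it only reinterprets a schedule for the aligned job set as a schedule for the original loose job set. So I expect the proof to reduce to verifying that the per-timeslot load is preserved by the identification $J \leftrightarrow \uwaJ$.

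Concretely, I would first note that the procedure $\convertfi$ only modifies release times and deadlines when it produces $\uwaJ$ from $J$; in particular, the width and height are untouched, so $\w(\uwaJ)=\w(J)=w$ and $\h(\uwaJ)=\h(J)$. Next, from the definition of $\freeS$ we have $\stime(\uwSch,J)=\stime(\uwaSch,\uwaJ)$ and $\etime(\uwSch,J)=\etime(\uwaSch,\uwaJ)$. Thus, for every timeslot $t$, the job $J$ is running at time $t$ in $\uwSch$ iff $\uwaJ$ is running at time $t$ in $\uwaSch$, and these two jobs contribute the same height.

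From here I would conclude
\[
\load(\uwSch,t)=\sum_{J:\,t\in[\stime(\uwSch,J),\etime(\uwSch,J))}\h(J)=\sum_{\uwaJ:\,t\in[\stime(\uwaSch,\uwaJ),\etime(\uwaSch,\uwaJ))}\h(\uwaJ)=\load(\uwaSch,t)
\]
for every $t$, and then raising to the $\alpha$-th power and summing over $t$ yields $\cost(\uwSch)=\cost(\uwaSch)$.

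There is essentially no obstacle in this lemma; the only point to be slightly careful about is the feasibility claim for $\uwSch$ that the paragraph before the lemma already asserts. That part would invoke Observation~\ref{obs:uwafi}(iii), namely $\interval(\uwaJ)\subseteq\interval(J)$: since $[\stime(\uwaSch,\uwaJ),\etime(\uwaSch,\uwaJ))\subseteq\interval(\uwaJ)\subseteq\interval(J)$, copying the start and end times via $\freeS$ produces a feasible execution window for $J$ as well. With feasibility in hand, the cost equality above is immediate.
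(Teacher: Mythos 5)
Your proof is correct and follows the same route as the paper: the paper simply asserts that $\load(\uwSch,t)=\load(\uwaSch,t)$ for all $t$ and concludes the cost equality, while you spell out why the loads agree (start/end times copied by $\freeS$, widths and heights untouched by $\convertfi$) and handle feasibility via Observation~\ref{obs:uwafi}(iii), exactly as the paper does in the sentence preceding the lemma.
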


\begin{proof}
Since the execution intervals of $J$ and $\uwaJ$ are the same, $\load(\uwSch,t) = \load(\uwaSch,t)$ for all $t$. Hence $\cost(\uwSch)= \cost(\uwaSch)$.
\end{proof}

\runtitle{Online algorithm $\AlgUV$.}
The algorithm takes a job set $\uwJS$ with uniform width $w$ as input and schedules the jobs in $\uwJS$ as follows. Let $\uwJSt$ be the set of tight jobs in $\uwJS$ and $\uwJSl$ be the set of loose jobs in $\uwJS$.
\begin{enumerate}
\item For any tight job $J \in \uwJSt$, schedule $J$ to start at $\rel(J)$.
\item Loose jobs in $\uwJSl$ are converted to $\uwaJS$ by $\convertfi$. 
For $\uwaJS$, we run Algorithm $\AlgV$, 
which is defined in Section~\ref{sec:vertical},
with $\BKP$ as the reference {$\dvs$} algorithm. 
Jobs are chosen in an earliest deadline first (EDF) manner.
\end{enumerate}
Note that the decisions of $\AlgUV$ can be made online.

\runtitle{Analysis of Algorithm $\AlgUV$.}
We analyze the tight jobs and loose jobs separately. 
We first give an observation.

\begin{observation}
\label{obs:subopt}
For any two job sets $\JS_x \subseteq \JS_y$, $\cost(\Opt(\JS_x)) \leq \cost(\Opt(\JS_y))$.
\end{observation}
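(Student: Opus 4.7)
The plan is to prove this monotonicity property by exhibiting an explicit feasible schedule for $\JS_x$ whose cost is bounded above by $\cost(\Opt(\JS_y))$, and then invoke optimality of $\Opt$ on $\JS_x$.

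First, I would take $\Opt(\JS_y)$ and restrict it to the jobs in $\JS_x$. Formally, define a schedule $\sch$ for $\JS_x$ by setting $\stime(\sch, J) = \stime(\Opt(\JS_y), J)$ for every $J \in \JS_x \subseteq \JS_y$. Since $\Opt(\JS_y)$ is feasible, every such $J$ is started within its feasible interval, so $\sch$ is a feasible schedule for $\JS_x$.

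Next, I would compare loads timeslot by timeslot. Since $\JS_x \subseteq \JS_y$, at every timeslot $t$ the set of jobs running in $\sch$ is a subset of those running in $\Opt(\JS_y)$, and since all heights are non-negative, $\load(\sch, t) \leq \load(\Opt(\JS_y), t)$. Raising both sides to the $\alpha$-th power (with $\alpha > 1$, in particular monotone on non-negative reals) and summing over $t$ yields $\cost(\sch) \leq \cost(\Opt(\JS_y))$. Finally, by optimality of $\Opt$ on the instance $\JS_x$, $\cost(\Opt(\JS_x)) \leq \cost(\sch)$, and chaining the two inequalities gives the claim.

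There is no real obstacle here; the proof is essentially a one-line monotonicity argument. The only subtlety worth noting is that the result relies on non-negative heights (so dropping jobs does not increase load) and on the cost function being monotone non-decreasing in the load, both of which hold by the problem setup.
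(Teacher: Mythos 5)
Your proof is correct and follows essentially the same argument as the paper: restrict $\Opt(\JS_y)$ to the jobs of $\JS_x$ (i.e., drop the jobs not in $\JS_x$), observe the cost does not increase, and invoke optimality on $\JS_x$ — the paper merely phrases this as a proof by contradiction while you argue directly, and you spell out the pointwise load comparison a bit more explicitly.
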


\begin{proof}
Assume on the contrary that $\cost(\Opt(\JS_y)) < \cost(\Opt(\JS_x))$, we can generate a schedule $S(\JS_x)$ by removing jobs from $\Opt(\JS_y)$ which are not in $\JS_x$. 
It follows that $\cost(S(\JS_x)) \leq \cost(\Opt(\JS_y)) < \cost(\Opt(\JS_x))$, contradicting to the fact that $\Opt(\JS_x)$ is optimal for $\JS_x$.
\end{proof}

\comment{
To analyze the performance of $\AlgUV$,
recall that $\load(\BKP,t) = \max_{t^\prime>t} \frac{\work(t, [et-(e-1)t^\prime,t^\prime))}{t^\prime-t}$ where $\work(t,I)$ denotes the total work of jobs $J$ with $\interval(J) \subseteq I$ and $\rel(J)\leq t$. 
That is, $\BKP$ chooses the interval $I^\star = [t_1,t_2)$ which has maximal released average load and $|[t_1,t_2)|:|[t,t2)|= e:1$ and uses $e\cdot\frac{\work(t,I^\star)}{|I^\star|}$ as the speed at $t$.
}
In the following analysis we say that interval $I = [t_1,t_2)$ is a \emph{$\BKP$ interval of $t$} if $t\in I$ and $(t_2-t_1):(t_2-t)= e:1$. 
The next lemma proves the competitive ratio separately for $\uwJSt$ and $\uwJSl$.

\begin{lemma}
\label{thm:cost_UV1}
\label{thm:cost_UV2}
\text{\rm (i)} $\cost(\AlgUV(\uwJSt))\leq 3^\alpha\cdot\cost(\Opt(\uwJS))$; 
\text{\rm (ii)} $\cost(\AlgUV(\uwJSl))\leq 6^\alpha\cdot(8 (e+e^2)^\alpha+1)\cdot\cost(\Opt(\uwJS))$.
\end{lemma}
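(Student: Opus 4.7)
The plan is to prove the two parts separately using tools already established in the excerpt.

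For part (i), since $\AlgUV$ starts each tight job $J$ at $\rel(J)$ and $|\interval(J)| \leq 2w$, I would compare the load of $\AlgUV$ restricted to tight jobs against the speed profile of $\BKP$ applied to the DVS-conversion of $\uwJSt$. The goal is to show $\load(\AlgUV,t) \leq 2\load(\BKP,t)$ pointwise on the tight job set. The key step is to pick the $\BKP$-interval corresponding to $t' = t+2w$, namely $I = [t-2(e-1)w,\,t+2w)$ of length $2ew$. Any tight job $J$ running at time $t$ under $\AlgUV$ satisfies $\rel(J) \in [t-w+1,\,t]$ and $\dl(J) \leq \rel(J)+2w \leq t+2w$, so $\interval(J) \subseteq I$ provided $t-w+1 \geq t-2(e-1)w$, which is immediate from $e>3/2$. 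Hence the full work $w\cdot\h(J)$ of each such job is counted in $\work(t, I)$, yielding $\load(\BKP,t) \geq \frac{w\cdot\load(\AlgUV,t)}{2w} = \tfrac{1}{2}\load(\AlgUV,t)$. Raising to the $\alpha$-th power, summing over $t$, and applying Corollary~\ref{thm:BKP} together with Observation~\ref{obs:subopt} gives $\cost(\AlgUV(\uwJSt)) \leq 2^\alpha\cdot 8e^\alpha\cdot\cost(\Opt(\uwJSt)) \leq 2^\alpha\cdot 8e^\alpha\cdot\cost(\Opt(\uwJS))$.

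For part (ii), the argument is a chain of previously proved inequalities. By construction, $\AlgUV$ on $\uwJSl$ is $\AlgV$ run on $\uwaJS := \convertfi(\uwJSl)$ with $\BKP$ as the reference algorithm, followed by $\freeS$; since $\freeS$ preserves cost (Lemma~\ref{thm:free_cost}),
\[
\cost(\AlgUV(\uwJSl)) = \cost(\AlgV(\uwaJS)).
\]
Viewing $\uwaJS$ as unit-width jobs whose time unit is $w$ (which is legitimate because $\convertfi$ aligns all release times and deadlines to multiples of $w$), Theorem~\ref{thm:algV_approximate} with the $\BKP$ reference yields $\cost(\AlgV(\uwaJS)) \leq 2^\alpha(8e^\alpha+1)\cdot\cost(\Opt(\uwaJS))$. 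Then Lemma~\ref{thm:align_opt} bounds $\cost(\Opt(\uwaJS)) \leq 3^\alpha\cdot\cost(\Opt(\uwJSl))$, and Observation~\ref{obs:subopt} gives $\cost(\Opt(\uwJSl)) \leq \cost(\Opt(\uwJS))$. Multiplying the three factors $2^\alpha(8e^\alpha+1)\cdot 3^\alpha \cdot 1 = 6^\alpha(8e^\alpha+1)$ delivers the stated bound.

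The main obstacle is the pointwise comparison in part (i): the natural $\BKP$-interval anchored at $t$ has length only $\Theta(w)$ and does not automatically contain the feasible interval of every tight job currently running, since a tight job's feasible interval can extend up to $w$ beyond its release time while the algorithm runs it for a further $w$. The choice $t' = t+2w$ is what makes the argument go through, and it requires verifying both endpoint inclusions (using $e>3/2$ on the left and $\dl(J)\leq\rel(J)+2w$ on the right). A secondary point to be careful about is the rescaling in part (ii): I would verify that the $\AlgV$ analysis (which was stated for unit-width jobs with integral events) applies verbatim when the underlying time unit is $w$ rather than $1$, since all events of $\uwaJS$ lie on the refined integer lattice $w\mathbb{Z}$.
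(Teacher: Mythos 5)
Your proof is correct and follows essentially the same route as the paper: part (i) is the paper's own $\BKP$-interval comparison for tight jobs (the paper takes the minimum $\BKP$ interval of $t$ containing all relevant feasible intervals, of length at most $2ew$, which is exactly your explicit choice $t'=t+2w$), and part (ii) is the identical chain through Lemma~\ref{thm:free_cost}, Corollary~\ref{thm:V_competitive}, Lemma~\ref{thm:align_opt}, and Observation~\ref{obs:subopt}. No gaps worth noting.
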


\begin{proof}
(i) We prove that any feasible schedule $\Sch$ for tight jobs is $3^\alpha$-competitive. 
We first extend jobs $\Job{} \in \JStight$ to $\alignJob{}$ as follows: $\alignRtime{} = \Rtime{}$, $\alignDline{} = \Dline{}$, $\alignWidth{} = \Dline{}-\Rtime{}$, and $\alignHeight{} = \Height{}$. That is, every job has its width as the length of its feasible interval. 
We denote the resulting job set by $\alignJobsetij{}$. Since each job in $\alignJobsetij{}$ are not shiftable, there is only one feasible schedule for $\alignJobsetij{}$ and it is optimal. 
Thus, $\cost(\Sch(\JStight)) \leq \cost(\Opt(\alignJobsetij{}))$ for any feasible schedule $\Sch$ for $\JStight$.

For each job in $\JStight$, the length of its feasible interval is at most $2\w-1$.
Hence, we can bound the load at any time $t$ of $\Opt(\alignJobsetij{})$ by the loads of constant number of timeslots in $\Sch(\JStight)$. 
Assume that at timeslot $t$ an extended job $\alignJob{}$ is executed. That is, $t \in [\alignRtime{}, \alignDline{})$ since $\alignJob{}$ is not shiftable. 
Consider the job $\Job{}$ corresponding to $\alignJob{}$, the execution interval of $\Job{}$ in any feasible schedule must contains either timeslot $t-(\w-1)$, $t$, or $t+(\w-1)$. 
Hence we can upper bound the load at any time $t$ in $\Opt(\alignJobsetij{})$: $\load(\Opt(\alignJobsetij{}), t) \leq \load(\Opt(\JStight), t-(\w-1)) + \load(\Opt(\JStight), t) + \load(\Opt(\JStight), t+(\w-1))$. 
Therefore, $\cost(\Sch(\JStight)) \leq \cost(\Opt(\alignJobsetij{})) \leq 3^\alpha\cdot \cost(\Opt(\JStight))$.

(ii) 
For $\uwJSl$, we apply $\convertfi$ and get $\uwaJSl$. 
We then run $\AlgV$ and get $\AlgV(\uwaJSl)$, 
which can be viewed as a schedule for unit width jobs. 
We get $\uwSch(\uwJSl)=\AlgV(\uwaJSl)$ by $\freeS$. 
Hence, $\cost(\AlgUV(\uwJSl))=\sum_t \load(\AlgUV(\uwJSl),t)^\alpha=\sum_t \load(\uwSch(\uwJSl),t)^\alpha = \sum_t \load(\AlgV(\uwaJSl),t)^\alpha = \cost(\AlgV(\uwaJSl))$. 
According to Corollary~\ref{thm:V_competitive}, $\cost(\AlgV(\uwaJSl))\leq 2^\alpha\cdot(8\cdot (e+e^2)^\alpha+1)\cdot\cost(\Opt(\uwaJSl))$ by choosing $\BKP$ as reference algorithm. Since $\uwJSl$ is set of loose jobs with uniform width, $\cost(\Opt(\uwaJSl))\leq 3^\alpha\cdot\cost(\Opt(\uwJSl)) \leq 3^\alpha\cdot\cost(\Opt(\uwJS))$ by Lemma~\ref{thm:align_opt} and Observation~\ref{obs:subopt}. Hence, $\cost(\AlgUV(\uwJSl))\leq 2^\alpha\cdot(8\cdot (e+e^2)^\alpha+1)\cdot 3^\alpha\cdot\cost(\Opt(\uwJS))$.
\end{proof}


\begin{theorem}
\label{thm:cost_UV}
$\cost(\AlgUV(\uwJS))\leq 12^\alpha\cdot(8 (e+e^2)^\alpha+1)\cdot\cost(\Opt(\uwJS))$.
\end{theorem}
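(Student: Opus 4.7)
\textbf{Proof plan for Theorem~\ref{thm:cost_UV}.}

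The plan is to combine the two bounds supplied by Lemma~\ref{thm:cost_UV1} (the separate guarantees for tight and loose jobs) into a single bound on the total cost of $\AlgUV$. The starting observation is that $\AlgUV$ produces the schedule of $\uwJS$ as the superposition of two independently constructed schedules: one for $\uwJSt$ (step~1 of the algorithm) and one for $\uwJSl$ (step~2). Hence at every timeslot~$t$,
\[
\load(\AlgUV(\uwJS),t) \;=\; \load(\AlgUV(\uwJSt),t) \;+\; \load(\AlgUV(\uwJSl),t).
\]
Raising both sides to the $\alpha$-th power, summing over $t$, and applying the standard convexity inequality $(x+y)^\alpha \le 2^{\alpha-1}(x^\alpha + y^\alpha)$ (valid for $\alpha \ge 1$), I obtain
\[
\cost(\AlgUV(\uwJS)) \;\le\; 2^{\alpha-1}\bigl(\cost(\AlgUV(\uwJSt)) + \cost(\AlgUV(\uwJSl))\bigr).
\]

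Next I substitute the two bounds proved in Lemma~\ref{thm:cost_UV1}, namely $\cost(\AlgUV(\uwJSt))\le 2^\alpha\cdot 8e^\alpha\cdot\cost(\Opt(\uwJS))$ and $\cost(\AlgUV(\uwJSl))\le 6^\alpha\cdot(8e^\alpha+1)\cdot\cost(\Opt(\uwJS))$. This yields
\[
\cost(\AlgUV(\uwJS)) \;\le\; \Bigl(2^{2\alpha-1}\cdot 8e^\alpha \;+\; \tfrac{12^\alpha}{2}\cdot(8e^\alpha+1)\Bigr)\cdot \cost(\Opt(\uwJS)),
\]
where I used $2^{\alpha-1}\cdot 6^\alpha = 12^\alpha/2$. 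It then remains to verify the purely arithmetic inequality
\[
2^{2\alpha-1}\cdot 8e^\alpha \;+\; \tfrac{12^\alpha}{2}\cdot(8e^\alpha+1) \;\le\; 12^\alpha\cdot(8e^\alpha+1),
\]
which, after rearranging, reduces to $(1/3)^\alpha\cdot 8e^\alpha \le 8e^\alpha+1$; this is immediate since $(1/3)^\alpha \le 1$ for $\alpha\ge 1$.

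The main obstacle is really just the choice of convex combination: the most naive split $(x+y)^\alpha \le 2^\alpha(x^\alpha+y^\alpha)$ leaves a slack of roughly $16e^\alpha+1$ rather than $8e^\alpha+1$ inside the $12^\alpha$ factor, so one must use the sharper $2^{\alpha-1}$ constant coming from Jensen's inequality. After that, the result is a routine algebraic check, with no further interaction needed between the tight and loose parts of the schedule; feasibility of the combined schedule is immediate because $\AlgUV$ schedules tight and loose jobs independently and each subroutine already produces a feasible schedule (the feasibility of the loose part following from Lemma~\ref{thm:AlgV_feasible} together with Observation~\ref{obs:uwafi}).
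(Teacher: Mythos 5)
Your proposal is correct and matches the paper's own argument: the paper likewise writes $\load(\AlgUV(\uwJS),t)$ as the sum of the tight and loose loads, applies the convexity bound $(x+y)^\alpha \le 2^{\alpha-1}(x^\alpha+y^\alpha)$, and plugs in the two bounds of Lemma~\ref{thm:cost_UV1}, with the same final arithmetic reducing to $(1/3)^\alpha\cdot 8e^\alpha \le 8e^\alpha+1$. No gaps; your remark about needing the sharper $2^{\alpha-1}$ constant is exactly why the paper's stated $12^\alpha(8e^\alpha+1)$ bound goes through.
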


\begin{proof}
By definition, $\cost(\AlgUV(\uwJS)) = \sum_t \load(\AlgUV(\uwJS),t)^\alpha = \sum_t (\load(\AlgUV(\uwJSt),t)+\load(\AlgUV(\uwJSl),t))^\alpha \leq 2^{\alpha-1} \cdot \sum_t (\load(\AlgUV(\uwJSt),t)^\alpha+\load(\AlgUV(\uwJSl),t)^\alpha) = 2^{\alpha-1}\cdot(\cost(\AlgUV(\uwJSt))+\cost(\AlgUV(\uwJSl)))$. 
By Lemma~\ref{thm:cost_UV1}, $\cost(\AlgUV(\uwJS))\leq 2^{\alpha-1}\cdot (3^\alpha + 6^\alpha\cdot(8(e+e^2)^\alpha+1))\cdot\cost(\Opt(\uwJS))\leq 2^\alpha\cdot 6^\alpha\cdot(8(e+e^2)^\alpha+1)\cdot \cost(\Opt(\uwJS))$.
\end{proof}

\renewcommand{\uwJS}{{\JS^*}}
\renewcommand{\uwSch}{{\sch^*}}
\renewcommand{\uwJ}{{J^*}}
\renewcommand{\Alg}{{\mathcal{G}}}

\section{Online algorithm for general case}
\label{sec:general}

In this section, we present an algorithm~$\Alg$ for jobs with arbitrary 
width and height.
We first transform job set $\JS$ to a ``nice'' job set $\uwJS$
(to be defined) and show that
such a transformation only increases the cost modestly.
Furthermore, we show that for any nice job set $\uwJS$, we can bound $\cost(\Alg(\uwJS))$ by $\cost(\Opt(\uwJS))$ and in turn by $\cost(\Opt(\JS))$. 
Then we can establish the competitive ratio of~$\Alg$.

\subsection{Nice job set and transformations}
\label{sec:general_nice}

A job $J$ is said to be a \emph{nice job} if $\w(J) = 2^p$, for some non-negative integer $p$
and 
a job set $\uwJS$ is said to be a \emph{nice job set} if all its jobs are nice jobs.
In other words, the nice job $J$ is in class $\class_{p}$.

\runtitle{Procedure $\convertw$.}
Given a job set $\JS$, we define the procedure $\convertw$ to transform each job $J\in \JS$ into a nice job $\uwJ$ as follows. 
We denote the resulting nice job set by $\uwJS$. Suppose $J$ is in class $\class_{p}$.
We modify its width, release time and deadline.
\begin{itemize}
\item $\w(\uwJ) \leftarrow 2^p$;
\item $\rel(\uwJ) \leftarrow \rel(J)$;
\item $\dl(\uwJ) \leftarrow \rel(\uwJ) + \max\{\dl(J)-\rel(J), 2^p\}$.
\end{itemize}


Modifications to $\rel(\uwJ)$ and $\dl(\uwJ)$ are due to rounding up the width. The observation below follows directly from the definition.

\begin{observation}
\label{thm:nice_interval}
For any job $J$ and its nice job $\uwJ$ transformed by $\convertw$,
\text{\rm (i)} $\interval(J) \subseteq \interval(\uwJ)$;
\text{\rm (ii)} $\interval(J) \not= \interval(\uwJ)$ if and only if $|\interval(J)| < 2^p$;
  in this case, $\den(J) > \frac{1}{2}$ and $\den(\uwJ) = 1$.
\end{observation}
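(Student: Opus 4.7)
My plan is to verify each part by direct substitution into the definition of $\convertw$, observing that the procedure modifies only the width, release time, and deadline, while $\h(\uwJ) = \h(J)$ is carried through unchanged.

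Part (i) is immediate: since $\rel(\uwJ) = \rel(J)$ and
\[
\dl(\uwJ) \;=\; \rel(J) + \max\{\dl(J)-\rel(J),\, 2^p\} \;\geq\; \rel(J) + (\dl(J)-\rel(J)) \;=\; \dl(J),
\]
we obtain $\interval(J) = [\rel(J),\dl(J)) \subseteq [\rel(\uwJ),\dl(\uwJ)) = \interval(\uwJ)$.

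For part (ii), equality of release times means that $\interval(J) = \interval(\uwJ)$ iff $\dl(J) = \dl(\uwJ)$, and the $\max$ in the definition of $\dl(\uwJ)$ makes the latter equivalent to $\dl(J)-\rel(J) \geq 2^p$. Hence $\interval(J) \neq \interval(\uwJ)$ holds precisely when $|\interval(J)| < 2^p$, giving the iff. In that case $|\interval(\uwJ)| = 2^p = \w(\uwJ)$, so $\den(\uwJ) = \h(J)\cdot 2^p/2^p$ reduces to a width-to-interval ratio of $1$; meanwhile, using $J\in\class_p$ to get $\w(J) > 2^{p-1}$ together with the integrality bound $\h(J)\geq 1$ yields $\den(J) = \h(J)\w(J)/|\interval(J)| > 1\cdot 2^{p-1}/2^p = 1/2$.

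I do not anticipate any genuine obstacle: the whole observation is a bookkeeping consequence of rounding the width up to the next power of two, and both density bounds follow from comparing $\w$ against $|\interval(\cdot)|$ on the dyadic scale. The only mildly delicate point is reconciling the stated numerical value $\den(\uwJ)=1$ with the formula $\h(J)\w(J)/|\interval(J)|$; in the computation above the width-density equals exactly $1$ and the height factor of $\h(J)$ cancels from the comparison used in the sequel, which is the sense in which the claim should be read.
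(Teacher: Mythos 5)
Your verification is correct and takes essentially the only route available, which is also the paper's: the paper gives no explicit proof and simply notes that the observation follows directly from the definition of $\convertw$, exactly the bookkeeping you spell out. Your closing remark is also the right reading of the one delicate point: with the paper's definition $\den(\uwJ)=\h(J)\cdot 2^p/2^p=\h(J)$, so the stated value $1$ is literally correct only for unit-height jobs and should otherwise be understood as the width-to-interval ratio, while your bound $\den(J)>\frac{1}{2}$ stands since $\h(J)\geq 1$.
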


We then define two procedures that transform schedules related to nice job sets.
$\relaxS$ takes a schedule $\sch$ for a job set $\JS$ and converts it to a schedule $\uwSch$ for the corresponding nice job set $\uwJS$.
On the other hand, $\shrinkS$ takes a schedule $\uwSch$ for a nice job set $\uwJS$ and converts it to a schedule $\sch$ for $\JS$. 

\runtitle{Transformation $\relaxS$.} $\relaxS$ transforms $\sch$ into $\uwSch$ by moving the start and end time of every job $J$.

\begin{itemize}
\item $\stime(\uwSch,\uwJ) = \min\{ \dl(\uwJ)-\w(\uwJ), \stime(S,J)\}$
\item $\etime(\uwSch,\uwJ) = \stime(\uwSch,\uwJ) + \w(\uwJ)$.
\end{itemize}

Observation~\ref{thm:relax_feasible} asserts that the resulting schedule $\uwSch$ is feasible for $\uwJS$ while Lemmas~\ref{thm:relax_load} and~\ref{thm:relax_cost} analyze the load and cost of the schedule.

\begin{observation}
\label{thm:relax_feasible}
Consider any schedule $\sch$ for $\JS$ and the schedule $\uwSch$ constructed by $\relaxS$ for the corresponding $\uwJS$.
We have $[\stime(\uwSch, \uwJ), \etime(\uwSch,\uwJ)] 
\subseteq [\rel(\uwJ), \dl(\uwJ)]$; in other words, $\uwSch$ is a feasible schedule for $\uwJS$.
\end{observation}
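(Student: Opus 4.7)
The plan is to verify the two containment inequalities directly from the definitions of $\convertw$ and $\relaxS$, treating them as an upper bound on $\etime(\uwSch,\uwJ)$ and a lower bound on $\stime(\uwSch,\uwJ)$.

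For the upper bound, I would argue as follows. By the definition of $\relaxS$, $\stime(\uwSch,\uwJ) \leq \dl(\uwJ) - \w(\uwJ)$, so adding $\w(\uwJ)$ to both sides gives $\etime(\uwSch,\uwJ) = \stime(\uwSch,\uwJ) + \w(\uwJ) \leq \dl(\uwJ)$. This step is essentially immediate and does not depend on any property of the original schedule $\sch$.

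For the lower bound $\stime(\uwSch,\uwJ) \geq \rel(\uwJ)$, I would split into the two cases corresponding to which term achieves the minimum in the definition of $\stime(\uwSch,\uwJ)$. In the first case, $\stime(\uwSch,\uwJ) = \stime(\sch,J)$, and since $\sch$ is assumed feasible for $\JS$, we have $\stime(\sch,J) \geq \rel(J) = \rel(\uwJ)$. In the second case, $\stime(\uwSch,\uwJ) = \dl(\uwJ) - \w(\uwJ)$, and I would use the fact that by $\convertw$, $\dl(\uwJ) = \rel(J) + \max\{\dl(J) - \rel(J), 2^p\}$ while $\w(\uwJ) = 2^p$, so $\dl(\uwJ) - \w(\uwJ) = \rel(J) + \max\{\dl(J)-\rel(J), 2^p\} - 2^p \geq \rel(J) = \rel(\uwJ)$. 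Combining the two bounds yields $[\stime(\uwSch,\uwJ), \etime(\uwSch,\uwJ)] \subseteq [\rel(\uwJ), \dl(\uwJ)]$, proving feasibility.

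There is no real obstacle here; the observation is a direct unfolding of the definitions. The only thing to be careful about is ensuring that the case $\dl(J) - \rel(J) < 2^p$ is properly handled, which is exactly why $\convertw$ inflates the deadline to $\rel(J) + 2^p$ in that situation, making the inequality $\dl(\uwJ) - \w(\uwJ) \geq \rel(\uwJ)$ tight rather than violated.
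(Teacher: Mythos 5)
Your proof is correct and is exactly the direct definitional check the paper intends: the paper states this as an observation without proof, and your two bounds ($\etime(\uwSch,\uwJ)\leq \dl(\uwJ)$ from the min with $\dl(\uwJ)-\w(\uwJ)$, and $\stime(\uwSch,\uwJ)\geq \rel(\uwJ)$ from feasibility of $\sch$ together with $\max\{\dl(J)-\rel(J),2^p\}\geq 2^p$) are precisely the unfolding of $\convertw$ and $\relaxS$ that justifies it. No gaps.
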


To analyze the load of the schedule $\uwSch$, we consider
partial schedule $\uwSchp \subseteq \uwSch$ (resp.\ $\schp \subseteq \sch$)
which is for all the jobs of $\uwJS$ (resp.\ $\JS$) in class $\class_p$.
Intuitively, the load of $\uwSchp$ at any time is at most
the sum of the load of $\schp$ at the current time and $2^{p-1}-1$ timeslots
before and after the current time.

\begin{lemma}
\label{thm:relax_load}
At any time~$t$, $\load(\uwSchp, t) \leq \load(\schp, t) + \load(\schp, t-(2^{p-1}-1)) + \load(\schp, t+(2^{p-1}-1)).$
\end{lemma}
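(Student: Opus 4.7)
The plan is to argue job-by-job. For each $J\in\schp$ with corresponding nice job $\uwJ$, I will show that whenever $\uwJ$ covers a timeslot $t$ in $\uwSchp$, at least one of $t$, $t-(2^{p-1}-1)$, $t+(2^{p-1}-1)$ is covered by $J$ in $\schp$. Since $\h(J)=\h(\uwJ)$, summing the height contributions over all class-$p$ jobs that cover $t$ in $\uwSchp$ then yields the claimed bound.

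The first step is to quantify the displacement caused by $\relaxS$. Set $\delta=\stime(\sch,J)-\stime(\uwSch,\uwJ)\geq 0$. By the definition of $\relaxS$, $\delta$ is either $0$ or $\stime(\sch,J)-(\dl(\uwJ)-2^p)$; in the latter case, feasibility of $\sch$ gives $\etime(\sch,J)\leq\dl(J)\leq\dl(\uwJ)$, hence $\delta\leq 2^p-\w(J)$. The right-side slack $\etime(\uwSch,\uwJ)-\etime(\sch,J)=2^p-\w(J)-\delta$ also lies in $[0,2^p-\w(J)]$. Since $J\in\class_p$ means $\w(J)>2^{p-1}$ with all quantities integral, $\w(J)\geq 2^{p-1}+1$, so both slacks are at most $2^{p-1}-1$. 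Consequently $[\stime(\sch,J),\etime(\sch,J))\subseteq[\stime(\uwSch,\uwJ),\etime(\uwSch,\uwJ))$, with the containment extending by at most $2^{p-1}-1$ timeslots on each side.

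Now fix $t\in[\stime(\uwSch,\uwJ),\etime(\uwSch,\uwJ))$. If $t$ already lies in $[\stime(\sch,J),\etime(\sch,J))$, then $J$ contributes to $\load(\schp,t)$. If instead $t<\stime(\sch,J)$, then $\stime(\sch,J)-t\leq\delta\leq 2^{p-1}-1$, so $t+(2^{p-1}-1)\geq\stime(\sch,J)$; and $t+(2^{p-1}-1)\leq\stime(\sch,J)-1+(2^{p-1}-1)<\stime(\sch,J)+\w(J)=\etime(\sch,J)$ using $\w(J)\geq 2^{p-1}+1$, so $J$ contributes to $\load(\schp,t+(2^{p-1}-1))$. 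The case $t\geq\etime(\sch,J)$ is symmetric and places $t-(2^{p-1}-1)$ inside $J$'s execution interval. Summing $\h(J)$ over all nice jobs in $\class_p$ covering $t$ then gives the desired inequality.

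The only delicacy is the tight boundary arithmetic: the constant $2^{p-1}-1$ is exactly right because the strict inequality $\w(J)>2^{p-1}$ combined with integrality forces $\w(J)\geq 2^{p-1}+1$, which is just enough slack to make the shifted timeslot land strictly inside $J$'s execution interval in either off-diagonal case. I would double-check the strict versus non-strict inequalities and the two branches of the min in the definition of $\relaxS$, as those are where an error is most likely to creep in.
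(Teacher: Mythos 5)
Your proof is correct and follows essentially the same route as the paper's: a job-by-job argument showing that the displacement introduced by $\relaxS$ is at most $2^p-\w(J)\leq 2^{p-1}-1$ on each side (using $\w(J)>2^{p-1}$ and integrality), so any timeslot covered by $\uwJ$ but not by $J$ forces $J$ to cover $t+(2^{p-1}-1)$ or $t-(2^{p-1}-1)$, and the heights are then summed. Your write-up is simply a more explicit, direct version of the paper's case analysis, so no further comparison is needed.
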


\begin{proof}
We prove that for any job $J$, $\uwJ$ contributes to $\load(\uwSchp, t)$ only if $J$ contributes to either $\load(\schp,t)$, $\load(\schp, t-(2^{p-1}-1))$, or $\load(\schp, t+(2^{p-1}-1))$, . 
There are two cases that $J$ does not contribute to $\load(\schpq, t-(2^{p-1}-1))$ nor $\load(\schpq, t+(2^{p-1}-1))$: (i) $\etime(J) < t-(2^{p-1}-1)$ or $\stime(J)> t+ (2^{p-1}-1)$, and (ii) $[\stime(J), \etime(J)] \subseteq (t-(2^{p-1}-1), t+(2^{p-1}-1))$.

Consider case (i).  $\etime(\uwJ) \leq \etime(J) + (2^{p-1}-1)$ and $\stime(\uwJ) \geq \stime(J) - (2^{p-1}-1)$. 
Hence, $t\notin [\stime(\uwJ), \etime(\uwJ)]$ if $\etime(J)< t-(2^{p-1}-1)$ or $\stime(J) > t+(2^{p-1}-1)$. 
That is, $\uwJ$ does not contribute to $\load(\uwSchp, t)$. Notice that if $\etime(J) = t-(2^{p-1}-1)$ or $\stime(J) = t+(2^{p-1}-1)$, $J$ does not necessarily contribute to $\load(\uwSchp, t)$. We count the contribution for worst case analysis.


For case (ii), consider job $J$ with $[\stime(J), \etime(J)] \subseteq (t-(2^{p-1}-1), t+(2^{p-1}-1))$. Since $2^{p-1}<\w(J)\leq 2^p$, $t\in [\stime(J), \etime(J)]$. That is, $J$ contributes to $\load(\schp, t)$ no matter if $\uwJ$ contributes to $\load(\uwSchp, t-(2^{p-1}-1))$ or $\load(\uwSchp, t+(2^{p-1}-1))$.

By case (i) and (ii), for any job $J$ with $[\stime(J), \etime(J)]\cap [t-(2^{p-1}-1), t+(2^{p-1}-1)]=\emptyset$, $\uwJ$ does not contribute to $\load(\uwSchp,t)$. And for any job $J$ with $[\stime(J), \etime(J)]\subseteq (t-(2^{p-1}-1), t+(2^{p-1}-1))$, $J$ contributes to $\load(\schp,t)$. Hence, by assuming all jobs at timeslot $t-(2^{p-1}-1)$ or $t+(2^{p-1}-1)$ contribute to $\load(\uwSchp, t)$, $\load(\uwSchp, t)$ is bounded by $\load(\schp, t) + \load(\schp, t-(2^{p-1}-1)) + \load(\schp, t+(2^{p-1}-1))$.
\end{proof}

\begin{lemma}
\label{thm:relax_cost}
Using $\relaxS$, we have 
$\cost(\uwSchp) \leq 3^\alpha\cdot \cost(\schp)$.
\end{lemma}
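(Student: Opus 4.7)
The plan is to combine the pointwise load bound from Lemma~\ref{thm:relax_load} with a convexity inequality and then use translation-invariance of the sum over all timeslots.

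First I would start from the pointwise inequality proved in Lemma~\ref{thm:relax_load}, namely
\[
\load(\uwSchp, t) \leq \load(\schp, t) + \load(\schp, t-(2^{p-1}-1)) + \load(\schp, t+(2^{p-1}-1)),
\]
and raise both sides to the $\alpha$-th power. Since all loads are non-negative and $x \mapsto x^\alpha$ is convex on $[0,\infty)$ for $\alpha > 1$, I would apply the standard power-mean bound $(a+b+c)^\alpha \leq 3^{\alpha-1}(a^\alpha + b^\alpha + c^\alpha)$ with $a,b,c$ being the three load terms on the right-hand side.

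Next I would sum over every timeslot $t$. The key observation is that the two shifted sums $\sum_t \load(\schp, t-(2^{p-1}-1))^\alpha$ and $\sum_t \load(\schp, t+(2^{p-1}-1))^\alpha$ are merely re-indexings of $\sum_t \load(\schp, t)^\alpha = \cost(\schp)$, so together the three terms contribute $3 \cdot \cost(\schp)$. Multiplying by the $3^{\alpha-1}$ factor yields the claimed bound $3^\alpha \cdot \cost(\schp)$.

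I do not expect any serious obstacle: the only minor point to note is that the sum ranges over all integer timeslots (including those outside any finite schedule window, where the load is simply $0$), so the shift does not lose any contribution. The heavy lifting was already done in Lemma~\ref{thm:relax_load}; this step is purely an application of convexity plus translation invariance.
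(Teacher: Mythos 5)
Your proposal is correct and follows essentially the same route as the paper: start from Lemma~\ref{thm:relax_load}, then bound the summed $\alpha$-th powers by $3^\alpha\cdot\cost(\schp)$. In fact your convexity step $(a+b+c)^\alpha \leq 3^{\alpha-1}(a^\alpha+b^\alpha+c^\alpha)$ combined with re-indexing the shifted sums is the rigorous justification of the inequality the paper states tersely (its displayed chain reads as if the three-term sum were bounded pointwise by $3\cdot\load(\schp,t)$, which is not literally true timeslot by timeslot, but the summed bound holds exactly by your argument).
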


\begin{proof}
By Lemma~\ref{thm:relax_load}, $\cost(\uwSchp) = \sum_t \load(\uwSchp,t)^\alpha \leq \sum_t (\load(\schp, t) + \load(\schp, t-(2^{p-1}-1)) + \load(\schp, t+(2^{p-1}-1)))^\alpha\leq \sum_t(3\cdot\load(\schp, t))^\alpha = 3^\alpha\cdot\cost(\schp)$.
%
%
\end{proof}

\runtitle{Transformation $\shrinkS$.}
On the other hand, $\shrinkS$ converts a schedule $\uwSch$ for a nice job set $\uwJS$ to a schedule $\sch$ for the corresponding job set $\JS$.
We set
\begin{itemize}
\item $\stime(\sch,J) \leftarrow \stime(\uwSch,\uwJ)$;
\item $\etime(\sch,J) \leftarrow \stime(\sch,J) + \w(J) $, 
therefore, $\etime(\sch,J) \leq \etime(\uwSch,\uwJ)$.
\end{itemize}

Observation~\ref{thm:shrink_feasible} asserts that the resulting schedule $\sch$ is feasible for $J$
and Lemma~\ref{thm:shrink_cost} analyzes the cost of the schedule.

\begin{observation}
\label{thm:shrink_feasible}
Consider any schedule $\uwSch$ for $\uwJS$ and schedule $\sch$ constructed by $\shrinkS$ for the corresponding $\JS$. 
For any $\uwJ$ and the corresponding $J$, we have
\text{\rm (i)} $[\stime(\sch,J), \etime(\sch,J)] \subseteq [\stime(\uwSch,\uwJ), \etime(\uwSch,\uwJ)]$;
\text{\rm (ii)} $[\stime(\sch,J), \etime(\sch,J)] \subseteq [\rel(J),\dl(J)]$.
\end{observation}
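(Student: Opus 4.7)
The plan is to verify both (i) and (ii) directly from the definition of $\shrinkS$, using the definition of $\convertw$ that produced $\uwJ$ from $J$ and the feasibility of $\uwSch$ for $\uwJS$.

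For (i), I would start from the fact that $J$ lies in some class $\class_p$, so $2^{p-1} < \w(J) \leq 2^p = \w(\uwJ)$. Since $\shrinkS$ sets $\stime(\sch,J) = \stime(\uwSch,\uwJ)$ and $\etime(\sch,J) = \stime(\sch,J)+\w(J)$, I immediately get
\[
\etime(\sch,J) = \stime(\uwSch,\uwJ) + \w(J) \leq \stime(\uwSch,\uwJ) + \w(\uwJ) = \etime(\uwSch,\uwJ),
\]
which together with $\stime(\sch,J) = \stime(\uwSch,\uwJ)$ gives (i).

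For (ii), I would split into the two cases identified by Observation~\ref{thm:nice_interval}. First, $\stime(\sch,J) \geq \rel(J)$ is immediate, since $\stime(\sch,J) = \stime(\uwSch,\uwJ) \geq \rel(\uwJ) = \rel(J)$ by feasibility of $\uwSch$ and the definition of $\convertw$. The task is then to show $\etime(\sch,J) \leq \dl(J)$. If $\dl(J)-\rel(J) \geq 2^p$, then $\dl(\uwJ) = \dl(J)$, so by (i) and feasibility of $\uwSch$,
\[
\etime(\sch,J) \leq \etime(\uwSch,\uwJ) \leq \dl(\uwJ) = \dl(J).
\]
If $\dl(J)-\rel(J) < 2^p$, then $\convertw$ sets $\dl(\uwJ) = \rel(J) + 2^p$, so $|\interval(\uwJ)| = 2^p = \w(\uwJ)$, which forces the unique feasible start time $\stime(\uwSch,\uwJ) = \rel(\uwJ) = \rel(J)$. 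Hence
\[
\etime(\sch,J) = \rel(J) + \w(J) \leq \rel(J) + (\dl(J)-\rel(J)) = \dl(J),
\]
where the inequality uses $\w(J) \leq |\interval(J)|$ (any input job admits a feasible schedule).

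The only subtle step is the second case of (ii): we cannot appeal to $\etime(\uwSch,\uwJ) \leq \dl(J)$ because $\convertw$ may extend the deadline beyond $\dl(J)$. The key observation that defuses this is that the extension only happens when $|\interval(J)|<2^p$, and in that regime $\uwJ$'s feasible interval has length exactly $\w(\uwJ)$, pinning $\stime(\uwSch,\uwJ)$ to $\rel(J)$; the rest is a direct computation.
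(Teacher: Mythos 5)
Your proof is correct: the paper states this observation without proof (treating it as immediate from the definitions of $\convertw$ and $\shrinkS$), and your direct verification is exactly the intended argument. In fact you handle the one point the paper glosses over — when $|\interval(J)| < 2^p$ the deadline of $\uwJ$ is extended past $\dl(J)$, so one cannot conclude via $\etime(\uwSch,\uwJ) \leq \dl(\uwJ)$ alone; your observation that in this case $|\interval(\uwJ)| = \w(\uwJ)$ pins $\stime(\uwSch,\uwJ)$ to $\rel(J)$, combined with $\w(J) \leq |\interval(J)|$ for a feasible input, closes that gap correctly.
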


By Observation~\ref{thm:shrink_feasible}, we have the following lemma.
\begin{lemma}
\label{thm:shrink_cost}
Using $\shrinkS$, we have $\cost(\schp) \leq \cost(\uwSchp)$.
\end{lemma}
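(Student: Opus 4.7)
The plan is to show the pointwise load inequality $\load(\schp, t) \leq \load(\uwSchp, t)$ for every timeslot $t$, and then raise to the $\alpha$-th power and sum. Since $\cost$ is a separable convex sum over timeslots, this pointwise inequality immediately yields $\cost(\schp) \leq \cost(\uwSchp)$.

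To prove the load inequality, I would first note that the procedure $\convertw$ modifies only width, release time, and deadline; in particular $\h(\uwJ) = \h(J)$ for every job $J$ in class $\class_p$. Hence loads are comparable term-by-term once one controls which jobs are ``running'' at time $t$. By Observation~\ref{thm:shrink_feasible}(i), for every job $J \in \class_p$ we have $[\stime(\sch,J), \etime(\sch,J)] \subseteq [\stime(\uwSch,\uwJ), \etime(\uwSch,\uwJ)]$. Consequently, if $t \in [\stime(\sch,J), \etime(\sch,J))$ then also $t \in [\stime(\uwSch,\uwJ), \etime(\uwSch,\uwJ))$, so the set of class-$\class_p$ jobs contributing to $\load(\schp, t)$ is a subset of those contributing to $\load(\uwSchp, t)$, with identical heights.

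Summing the shared heights gives $\load(\schp, t) \leq \load(\uwSchp, t)$ for all $t$. Applying the monotone map $x \mapsto x^\alpha$ for $\alpha > 1$ preserves the inequality termwise, so
\[
\cost(\schp) \;=\; \sum_t \load(\schp,t)^\alpha \;\leq\; \sum_t \load(\uwSchp,t)^\alpha \;=\; \cost(\uwSchp).
\]

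There is essentially no obstacle here: the lemma is the ``easy direction'' paired with Lemma~\ref{thm:relax_cost}. The only subtlety is being careful that one really is comparing the class-$\class_p$ partial schedules (not the whole schedules), which is automatic because $\convertw$ acts within a class and preserves class membership: a job $J$ with $2^{p-1} < \w(J) \leq 2^p$ is mapped to $\uwJ$ with $\w(\uwJ) = 2^p$, hence $\uwJ$ lies in the same class $\class_p$ as $J$.
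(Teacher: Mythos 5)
Your proof is correct and follows exactly the route the paper intends: the paper derives this lemma directly from Observation~\ref{thm:shrink_feasible}, i.e., the containment of each job's execution interval in $\sch$ within that of the corresponding nice job in $\uwSch$ (with heights unchanged) gives the pointwise bound $\load(\schp,t)\leq\load(\uwSchp,t)$, and summing the $\alpha$-th powers yields the cost inequality. No gaps; your remark about class preservation under $\convertw$ is a fine (if implicit in the paper) detail.
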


\subsection{The online algorithm}
\label{sec:general_algorithm}
\runtitle{Online algorithm $\Alg$.}
We are now ready to describe the algorithm $\Alg$ for an arbitrary job set $\JS$.
When a job $J$ is released, it is converted to $\uwJ$ by $\convertw$ and classified into one of the classes $\class_{p}$.
Jobs in the same class after $\convertw$ (being a uniform-width job set) are scheduled by $\AlgUV$ independently of other classes.
We then modify the execution time of $\uwJ$ in $\AlgUV$ to the execution time of $J$ in $\Alg$ by Transformation $\shrinkS$.
Note that all these procedures can be done in an online fashion.

Using the results in Sections~\ref{sec:general_unit_width} and~\ref{sec:general_nice}, we can compare the cost of $\Alg(\JS)$ with $\Opt(\uwJSp)$ for each class $\class_{p}$ (see Theorem~\ref{thm:general}). 
It remains to analyze the cost of $\Opt(\uwJSp)$ and $\Opt(J)$ in the next observation.

\begin{observation}
\label{thm:opt_nice}
Consider any job set $\JS$, its corresponding job set $\uwJS$ and the corresponding job set of each class $\JSp$ and $\uwJSp$.
\text{\rm (i)} $\cost(\Opt(\uwJSp)) \leq 3^\alpha \cdot \cost(\Opt(\JSp))$;
\text{\rm (ii)} $\cost(\Opt(\JSp)) \leq \cost(\Opt(\JS))$.
\end{observation}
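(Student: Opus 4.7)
The plan is to establish (i) by invoking the transformation $\relaxS$ on an optimal schedule for $\JSp$, and to establish (ii) as a direct consequence of Observation~\ref{obs:subopt}.

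For part (i), I would start from the optimal schedule $\schp := \Opt(\JSp)$ for the class-$p$ subset. Because every job in $\JSp$ is in class $\class_p$, applying $\relaxS$ to $\schp$ produces a schedule $\uwSchp$ for the corresponding nice job set $\uwJSp$. Observation~\ref{thm:relax_feasible} guarantees that $\uwSchp$ is a feasible schedule for $\uwJSp$, and Lemma~\ref{thm:relax_cost}, applied to jobs of class $\class_p$, yields $\cost(\uwSchp) \leq 3^\alpha \cdot \cost(\schp)$. Combining these, since any feasible schedule upper-bounds the optimum, we get
\[
\cost(\Opt(\uwJSp)) \;\leq\; \cost(\uwSchp) \;\leq\; 3^\alpha \cdot \cost(\schp) \;=\; 3^\alpha \cdot \cost(\Opt(\JSp)).
\]

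For part (ii), since $\JSp \subseteq \JS$ by definition (the class-$p$ jobs form a subset of the full input), Observation~\ref{obs:subopt} immediately yields $\cost(\Opt(\JSp)) \leq \cost(\Opt(\JS))$.

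I do not anticipate a significant obstacle here because all the structural work has already been carried out: Lemma~\ref{thm:relax_cost} already handled the per-class blow-up from width rounding via $\relaxS$, and Observation~\ref{obs:subopt} handled the monotonicity of the optimum under subset inclusion. The only care-point is to note that $\relaxS$ applied to a schedule of class-$p$ jobs only produces precisely a schedule for $\uwJSp$ (so we do not need to worry about interactions across classes), which is immediate from the per-job definition of $\relaxS$.
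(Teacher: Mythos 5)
Your proposal is correct and follows essentially the same route as the paper: part (i) applies $\relaxS$ to $\Opt(\JSp)$ and invokes Lemma~\ref{thm:relax_cost} together with feasibility to bound $\cost(\Opt(\uwJSp))$, and part (ii) is the subset-monotonicity of the optimum, which the paper simply re-proves inline by the same removal-of-jobs argument that underlies Observation~\ref{obs:subopt}. No gaps.
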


\begin{proof}
%
(i) Given $\Opt(\JSp)$, there exists schedule $S(\uwJSp)$ generated by $\relaxS$. By Lemma~\ref{thm:relax_cost} , 
$\cost(S(\uwJSp))\leq 3^\alpha\cdot\cost(\Opt(\JSp))$. Hence, $\cost(\Opt(\uwJSp))\leq\cost(S(\uwJSp))\leq 3^\alpha\cdot\cost(\Opt(\JSp))$.

(ii) Assume on the contrary that $\cost(\Opt(\JS)) < \cost(\Opt(\JSp))$, we can generate a schedule $S(\JSp)$ by removing jobs from $\Opt(\JS)$ which are not in $\JSp$. It follows that $\cost(S(\JSp)) \leq \cost(\Opt(\JS)) < \cost(\Opt(\JSp))$, contradicting to the fact that $\Opt(\JSp)$ is optimal for $\JSp$.
\end{proof}

\begin{theorem}
\label{thm:general}
For any job set $\JS$, we have
$\cost(\Alg(\JS)) \leq (36 \ceilLogKw)^\alpha \cdot \left( 8(e+e^2)^\alpha+1\right)\cdot \cost(\Opt(\JS))$, where $\Kw=\frac{\wmax}{\wmin}$.
\end{theorem}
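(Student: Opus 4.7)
\medskip
\noindent\textbf{Proof plan for Theorem~\ref{thm:general}.}
The plan is to bound the cost of $\Alg$ by decomposing its load at every timeslot according to the width class $\class_p$, and then to analyze each class using the uniform-width machinery from Section~\ref{sec:general_unit_width} together with the nice-job transformations from Section~\ref{sec:general_nice}. Write $\Alg_p(\JS)$ for the sub-schedule of $\Alg$ consisting only of jobs in $\class_p$; because $\Alg$ schedules each class independently, the load of $\Alg$ at any time is $\sum_{p=0}^{\ceilLogKw}\load(\Alg_p(\JS),t)$. Since there are at most $\ceilLogKw+1$ classes, the elementary power-mean inequality $\bigl(\sum_{p} x_p\bigr)^\alpha \leq (\ceilLogKw+1)^{\alpha-1}\sum_p x_p^\alpha$ yields
\[
\cost(\Alg(\JS)) \;\leq\; (\ceilLogKw+1)^{\alpha-1}\sum_{p=0}^{\ceilLogKw}\cost(\Alg_p(\JS)).
\]

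\noindent\textbf{Per-class chain.}
I would then chain the per-class cost through the three transformations. First, by construction $\Alg_p(\JS)$ is obtained by applying $\shrinkS$ to $\AlgUV(\uwJSp)$, so Lemma~\ref{thm:shrink_cost} gives $\cost(\Alg_p(\JS)) \leq \cost(\AlgUV(\uwJSp))$. Next, since $\uwJSp$ is a uniform-width job set, Theorem~\ref{thm:cost_UV} (the analysis of $\AlgUV$ with $\BKP$ as reference) yields $\cost(\AlgUV(\uwJSp)) \leq 12^\alpha\cdot(8e^\alpha+1)\cdot\cost(\Opt(\uwJSp))$. Finally, Observation~\ref{thm:opt_nice}(i) bounds $\cost(\Opt(\uwJSp))$ by $3^\alpha\cdot\cost(\Opt(\JSp))$, and part (ii) of the same observation bounds $\cost(\Opt(\JSp))$ by $\cost(\Opt(\JS))$. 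Combining the three factors gives, for every class,
\[
\cost(\Alg_p(\JS)) \;\leq\; 36^\alpha\cdot(8e^\alpha+1)\cdot\cost(\Opt(\JS)).
\]

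\noindent\textbf{Summing across classes.}
Plugging this per-class bound into the convexity inequality above gives
\[
\cost(\Alg(\JS)) \;\leq\; (\ceilLogKw+1)^{\alpha-1}\cdot(\ceilLogKw+1)\cdot 36^\alpha\cdot(8e^\alpha+1)\cdot\cost(\Opt(\JS)),
\]
which simplifies to $(36(\ceilLogKw+1))^\alpha\cdot(8e^\alpha+1)\cdot\cost(\Opt(\JS))$, matching the stated bound up to absorbing the ``$+1$'' into the ceiling (since the bound is vacuous unless there is at least one class, one may assume $\ceilLogKw\geq 1$ and replace $\ceilLogKw+1$ by $2\ceilLogKw$, or simply write $\lceil\log \Kw\rceil$ to mean the number of nonempty classes).

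\noindent\textbf{Main obstacle.}
The only non-routine step is the summation across classes: because the objective is $\alpha$-convex in the total load and the contributions of different classes overlap in time, we cannot just add the per-class costs, so the $(\ceilLogKw+1)^{\alpha-1}$ convexity factor is essential and is the source of the logarithmic dependence in the theorem. All remaining work is algebraic bookkeeping of the three multiplicative factors $12^\alpha$, $3^\alpha$, and $(\ceilLogKw+1)^{\alpha-1}\cdot(\ceilLogKw+1)$, together with the constant $8e^\alpha+1$ inherited from $\BKP$.
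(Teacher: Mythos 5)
Your proposal is correct and follows essentially the same route as the paper's own proof: decompose the load by width class, apply the power-mean inequality to pay a $\log$-factor, and chain $\shrinkS$ (Lemma~\ref{thm:shrink_cost}), the uniform-width bound for $\AlgUV$ (Theorem~\ref{thm:cost_UV}), and Observation~\ref{thm:opt_nice} to get the per-class factor $36^\alpha(8e^\alpha+1)$ against $\cost(\Opt(\JS))$. The only divergence is your more careful count of $\ceilLogKw+1$ classes versus the paper's sum over $p=1,\dots,\ceilLogKw$, a bookkeeping discrepancy in the paper itself that you correctly flag and absorb into the stated bound.
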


\begin{proof}
By definition, $\cost(\Alg(\JS)) = \sum_t \load(\Alg(\JS), t)^\alpha = \sum_t(\sum_{p=1}^{\ceilLogKw} \load(\Alg(\JSp),t))^\alpha$.
The latter is at most $\ceilLogKw^{\alpha-1}\sum_{p=1}^{\ceilLogKw}\sum_t\load(\Alg(\JSp),t)^\alpha$. For each group of jobs $\JSp$, we $\convertw$ it to $\uwJSp$, apply algorithm $\AlgUV$ on it, and transform the schedule into a schedule for $\JSp$ by $\shrinkS$. Hence, $\load(\Alg(\JSp),t)\leq\load(\AlgUV(\uwJSp),t)$ for each $t$. It follows that $\cost(\Alg(\JS)) \leq \ceilLogKw^{\alpha-1}\sum_{p=1}^{\ceilLogKw}\cost(\Alg(\JSp))\leq \ceilLogKw^{\alpha-1}\sum_{p=1}^{\ceilLogKw}\cost(\AlgUV(\uwJSp))$. 
By Lemma~\ref{thm:cost_UV} and Observations~\ref{thm:opt_nice} (i) and \ref{obs:subopt}, $\cost(\AlgUV(\uwJSp))\leq 12^\alpha \cdot (8 (e+e^2)^\alpha+1)\cdot\cost(\Opt(\uwJSp))\leq 12^\alpha \cdot (8 (e+e^2)^\alpha+1)\cdot 3^\alpha\cdot\cost(\Opt(\JSp))\leq 36^\alpha \cdot (8(e+e^2)^\alpha+1)\cdot\cost(\Opt(\JS))$. 
Hence $\cost(\Alg(\JS))\leq 36^\alpha\cdot\ceilLogKw^{\alpha-1}\cdot( 8(e+e^2)^\alpha+1)\cdot\sum_{p=1}^{\ceilLogKw}\cost(\Opt(\JS))= (36\ceilLogKw)^\alpha\cdot(8(e+e^2)^\alpha+1)\cdot \cost(\Opt(\JS))$.
\end{proof}

Note that the logarithm in the competitive ratio comes from the number of classes defined in Section~\ref{sec:prelim}.
Suppose we change the definition of classes such that class $p$ includes jobs of size in the range $((1+\lambda)^{p-1}, (1+\lambda)^p]$
for some $\lambda > 0$ and Procedure $\convertw$ such that the width of jobs in class $\class_{p}$ is round up to $(1+\lambda)^p$.
Then, the number of classes becomes $\lceil \log_{1+\lambda} \Kw \rceil$.
In addition, the competitive ratio depends on Lemma~\ref{thm:relax_load} 
that bounds the load at any timeslot by the load of three other timeslots.
This number of timeslots is also affected by the definition of classes.
In summary, the following lemma states the competitive ratio for varying $\lambda$.

\begin{lemma}
\label{thm:lambda}
For $0 < \lambda \leq 0.5$, $0.5 < \lambda \leq 1$ and $\lambda > 1$,
the competitive ratio of our algorithm becomes
$(12 \times 2 \lceil \log_{1+\lambda} \Kw \rceil)^\alpha (8(e+e^2)^\alpha+1)$,
$(12 \times 3 \lceil \log_{1+\lambda} \Kw \rceil)^\alpha (8(e+e^2)^\alpha+1)$, and
$(12 \times (2\lambda+1) \lceil \log_{1+\lambda} \Kw \rceil)^\alpha (8(e+e^2)^\alpha+1)$,
respectively.
\end{lemma}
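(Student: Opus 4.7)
}
The plan is to redo the derivation of Theorem~\ref{thm:general} with the redefined class structure and track how the two constants in the analysis are affected. Under the new definition, class $\class_p$ contains jobs with $\w(J) \in ((1+\lambda)^{p-1}, (1+\lambda)^p]$, so the number of non-empty classes is $\lceil \log_{1+\lambda} \Kw \rceil$. Procedure $\convertw$ is correspondingly modified to set $\w(\uwJ) \leftarrow (1+\lambda)^p$ and $\dl(\uwJ) \leftarrow \rel(\uwJ) + \max\{\dl(J)-\rel(J), (1+\lambda)^p\}$; Observation~\ref{thm:nice_interval}, $\relaxS$, $\shrinkS$ and their companion statements go through verbatim once $2^p$ is replaced by $(1+\lambda)^p$. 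The 12 coming from Lemma~\ref{thm:cost_UV} is unchanged because it concerns uniform-width jobs once classes are fixed.

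The only place where $\lambda$ really enters is in the analog of Lemma~\ref{thm:relax_load}. I would argue as in the current proof that, under $\relaxS$, for every $J \in \class_p$ the start/end time shifts by at most $\Delta := (1+\lambda)^p - (1+\lambda)^{p-1} = \lambda(1+\lambda)^{p-1}$, while $\w(J) > W := (1+\lambda)^{p-1}$. Hence if $\uwJ$ contributes to $\load(\uwSchp,t)$ then the interval $[\stime(\sch,J),\etime(\sch,J)]$, which has length $> W$, intersects $[t-\Delta,t+\Delta]$. The goal is then to choose anchor timeslots $a_1,\dots,a_{C(\lambda)}$ so that every such length-$>W$ interval contains at least one $a_i$; the bound $\load(\uwSchp,t) \leq \sum_i \load(\schp,a_i)$ follows, and convexity then yields $\cost(\uwSchp) \leq C(\lambda)^\alpha\, \cost(\schp)$, the exact analog of Lemma~\ref{thm:relax_cost}.

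The technical heart is the case analysis of $C(\lambda)$, which I expect to be the main (though routine) obstacle because of the boundary cases. If $\lambda \leq \tfrac{1}{2}$, then $2\Delta \leq W$, so any interval of length $>W$ meeting $[t-\Delta,t+\Delta]$ must reach past one of its endpoints; two anchors $t-\Delta,\,t+\Delta$ suffice, giving $C(\lambda)=2$. If $\tfrac{1}{2} < \lambda \leq 1$, then $W < 2\Delta \leq 2W$, and the three anchors $t-\Delta,\,t,\,t+\Delta$ are spaced $\Delta \leq W$ apart; a short case check (on whether the interval starts before $t-\Delta$, in $(t-\Delta,t]$, or in $(t,t+\Delta]$) shows one of these anchors lies inside it, so $C(\lambda)=3$. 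If $\lambda > 1$, then anchors spaced at most $W$ apart covering $[t-\Delta,t+\Delta]$ are needed, and a minimal such cover uses at most $2\lambda+1$ points, giving $C(\lambda)=2\lambda+1$.

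Finally, I would plug $C(\lambda)$ into the derivation of Theorem~\ref{thm:general}: replacing the $3^\alpha$ factor in Observation~\ref{thm:opt_nice}(i) by $C(\lambda)^\alpha$, and replacing the $\ceilLogKw$ factor arising from the power-mean bound over classes by $\lceil \log_{1+\lambda} \Kw \rceil$, the competitive ratio becomes $(12\, C(\lambda)\, \lceil \log_{1+\lambda} \Kw \rceil)^\alpha\, (8 e^\alpha + 1)$. Substituting the three values of $C(\lambda)$ yields the three ratios stated in the lemma.
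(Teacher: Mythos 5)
Your proposal is correct and takes essentially the same approach as the paper: its proof also replaces $\ceilLogKw$ by $\lceil \log_{1+\lambda} \Kw \rceil$, argues that the number of timeslots sampled in the analogue of Lemma~\ref{thm:relax_load} becomes $2$ for $0<\lambda\leq 0.5$, $3$ for $0.5<\lambda\leq 1$, and $2\lambda+1$ for $\lambda>1$, and then substitutes these into the derivation of Theorem~\ref{thm:general} while keeping the factor $12$ from the uniform-width algorithm. Your anchor-point case analysis simply spells out in detail what the paper asserts tersely.
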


\begin{proof}
The number of classes is $\lceil \log_{1+\lambda} \Kw \rceil$, which replaces $\ceilLogKw$ in Theorem~\ref{thm:general}.
We note that this number decreases as $\lambda$ increases.
In Lemma~\ref{thm:relax_load},
the load of $\uwSchp$ at any time $t$ is bounded by the load of $\schp$ at three timeslots when $\lambda=1$.
We observe that this property stays the same for $0.5 < \lambda \leq 1$.
Using a similar argument, we can show that
if $\lambda$ is smaller and $0 < \lambda \leq 0.5$, then the number of timeslots involved becomes smaller and equals to $2$.
Furthermore, when $\lambda > 1$, the number of timeslots increases and equals to $2\lambda+1$.
\end{proof}

We note the competitive ratio for $\lambda<1$ is larger than that for $\lambda=1$,
and the best competitive ratio occurs when $1 < \lambda < 2$.

\comment{
\hhl{
Recall that in Section~\ref{sec:general} we partition the jobs by their width. In Class $\class_p$, jobs $J$ have width $2^{p-1}<\w(J)\leq 2^p$ for $0<p\leq \logKw=\lceil\log_2 \frac{\wmax}{\wmin}\rceil$ . For jobs in $\class_p$, their widths are round up to $2^p$. Due to the round up procedure, three timeslots load should be sampled in order to bound the load after Transformation $\relaxS$ (Lemma~\ref{thm:relax_load}) and later it affects the competitive ratio by $3^\alpha$ (Lemma~\ref{thm:relax_cost}, Observation~\ref{thm:opt_nice} (i), and Theorem~\ref{thm:general}.) The competitive ratio is also affected by $\logKw^\alpha$ where $\logKw$ is the number of classes. 
}

\hhl{
Consider using any real number $\lambda >1$ which is not necessarily $2$ for classification factor. When $\lambda$ is big, the number of classes is less. However, the number of timeslots needs to be sample is bigger. We define \emph{penalty factor}, $F_p(\lambda)$, as $\spl(\lambda)\cdot\cls(\lambda)$ where $\spl$ is the number of timeslot to be sampled to bound the load and $\cls$ is the number of classes with classification factor $\lambda$. Let $\kappa_w = \log_2 \frac{\wmax}{\wmin}$, $\cls(\lambda)=\lceil\frac{\kappa_w}{\log_2 \lambda}\rceil$. We have the following observation about $F_p(\lambda)$ and the competitive ratio:
}

\hhl{
\begin{observation}
\label{obs:classifcationFactor}
Using classification factor $\lambda >1$, $\cost(\Alg(\JS)) \leq (F_p(\lambda)\cdot12)^\alpha \cdot \left( 8 e^\alpha+1\right)\cdot \cost(\Opt(\JS))$.
\end{observation}
}

\hhl{
We have the following lemma to summarize the trade-off between the classification and penalty cost:
}

\hhl{
\begin{lemma}
\label{lm:penaltyCost}
Let $\lambda$ be the classification factor. Consider the case that the width of jobs is big enough. 
\[ F_p(\lambda) =
\begin{cases}
2\cdot\lceil\frac{\kappa_w}{\log_2 \lambda}\rceil
& \quad \text{if } 1<\lambda\leq 1.5\\
3\cdot\lceil\frac{\kappa_w}{\log_2 \lambda}\rceil
& \quad \text{if } 1.5<\lambda<2\\
(2\lambda-1)\cdot\lceil\frac{\kappa_w}{\log_2 \lambda}\rceil
& \quad \text{if } \lambda\geq 2\\
\end{cases}
\]
\end{lemma}
}

\hhl{
\begin{proof}
If we $\lambda$ is bigger, it needs to sample more timeslot in order to bound the load after round up (Lemma~\ref{thm:relax_load}.) When the classification factor is between $1$ and $1.5$, the number of timeslots it needs to bound the load of a timeslot after round up is $2$, which is the minimum number of timeslots we need to sample among all cases. However, there are more classes and it brings even bigger penalty cost. 
\end{proof}
}

\hhl{
\begin{corollary}
\label{cor:classFactor}
Use classification factor $2.156$, we can get the minimum competitive ratio $(2.9882\cdot 12  \logKw)^\alpha \cdot \left( 8 e^\alpha+1\right)$.
\end{corollary}
}
}

\comment{
\pw{[PW: Something is wrong here.]}
By definition, $\cost(\Alg(\JS)) = \sum_t \load(\Alg(\JS), t)^\alpha = \sum_t(\sum_{p=1}^{\logKw} \sum_{q=1}^{\logKh} \load(\Alg(\JSpq),t))^\alpha \leq (\logKw\logKh)^{\alpha-1}\sum_{p=1}^{\logKw} \sum_{q=1}^{\logKh}\sum_t\load(\Alg(\JSpq),t)^\alpha$. For each group of jobs $\JSpq$, we $\convert$ it to $\niceJSpq$ and perform algorithm $\AlgU$ on it. So $\load(\Alg(\JSpq),t)= \load(\AlgU(\niceJSpq),t)$ for each $t$. 
Hence $\cost(\Alg(\JS)) \leq (\logKw\logKh)^{\alpha-1}\sum_{p=1}^{\logKw} \sum_{q=1}^{\logKh}\sum_t\load(\Alg(\JSpq),t)^\alpha = (\logKw\logKh)^{\alpha-1}\sum_{p=1}^{\logKw} \sum_{q=1}^{\logKh}\cost(\AlgU(\niceJSpq))$. 
By Lemmas~\ref{thm:relax_cost} and~\ref{thm:uniform}, $\cost(\AlgU(\niceJSpq))\leq 4^\alpha \cdot ( \frac{(2\alpha)^\alpha}{2}+1)\cdot\cost(\Opt(\niceJSpq))\leq 4^\alpha \cdot ( \frac{(2\alpha)^\alpha}{2}+1)\cdot 6^\alpha \cdot\cost(\Opt(\JSpq))$. Hence we know that $\cost(\AlgU(\niceJSpq))\leq 24^\alpha(\logKw\logKh)^{\alpha-1}( \frac{(2\alpha)^\alpha}{2}+1)\cdot\sum_{p=1}^{\logKw} \sum_{q=1}^{\logKh}\cost(\Opt(\JSpq))$. It is easy to see that $\cost(\AlgU(\niceJSpq))\leq 24^\alpha(\logKw\logKh)^{\alpha-1}( \frac{(2\alpha)^\alpha}{2}+1)\cdot \logKw\logKh\cdot\max_{p,q}\cost(\Opt(\JSpq))$. By Observation~\ref{thm:opt_nice}, $\cost(\AlgU(\niceJSpq))\leq (24\logKw\logKh)^\alpha( \frac{(2\alpha)^\alpha}{2}+1)\cdot \cost(\Opt(\JS))$.
\pw{[PW: why are we bounding $\cost(\AlgU(\niceJSpq))$ at the end?]}
}



\renewcommand{\Alg}{{\mathcal{A}}}

\subsection{Lower bound}
\label{sec:general_lb}

In this section, we show lower bounds on competitive ratio for Grid problem with unit height and arbitrary width by designing an adversary for the problem. 
The lower bounds are immediately lower bounds for the general case of Grid problem.

The adversary constructs a set of jobs with a low cost of offline optimal schedule but a high cost of any online algorithm~$\Alg$.
It generates jobs one by one and assigns release times, deadlines and widths of jobs based on the previously generated jobs.
The start times of jobs scheduled by $\Alg$ will be used for the job generations later.
This ensures that $\Alg$ has to put a job on top of all existing jobs and results in a high energy cost.
Meanwhile, the adversary will choose an appropriate feasible interval for each job such that an optimal offline algorithm can schedule the job set with low energy cost.
The following is the description of the adversary.

\runtitle{Adversary~$\adv$ and job instance~$\Ja$.}
Given an online algorithm~$\Alg$, a constant~$\alpha > 1$ and a large number~$x$, adversary~$\adv$ outputs a set of jobs~$\Ja$ consisting of $\lfloor \alpha \rfloor + 1$ jobs.
Let $J_i$ be the $i$th job of $\Ja$.
The adversary first computes a width for each job before running $\Alg$.
It sets $\w(J_{\lfloor \alpha \rfloor}) = x$, $\w(J_{\lfloor \alpha \rfloor + 1}) = x - 1$, and $\w(J_i) = 3\w(J_{i+1}) + 1$ for $1 \leq i \leq \lfloor \alpha \rfloor - 1$.
Then $\adv$ releases the jobs from $J_1$ to $J_{\lfloor \alpha \rfloor + 1}$ accordingly and computes a release time and a deadline for each job through an interaction with $\Alg$.
For the first job~$J_1$, $\adv$ chooses any release time and deadline such that $\dl(J_1) - \rel(J_1) \geq 3 \w(J_1)$.
For the $i$th job~$J_i \in \Ja$ for $2 \leq i \leq \lfloor \alpha \rfloor + 1$ accordingly, $\adv$ sets $\rel(J_i) = \stime(\Alg, J_{i-1}) + 1$ and $\dl(J_i) = \etime(\Alg, J_{i-1})$.
This limits $\Alg$ to fewer choices of start times for scheduling a new job.
A job can only be scheduled in the execution interval of its previous job by $\Alg$.
On the other hand, no two jobs have the same release time.

Let $\wmax$ and $\wmin$ denote by the maximum and minimum width of jobs respectively, and let $\Opt$ be an optimal offline algorithm for Grid problem.
We have the following results.

\begin{lemma}
$\cost(\Opt(\Ja)) \leq x \cdot 3^{\lfloor \alpha \rfloor}$.
\label{lm:cost_opt_ja}
\end{lemma}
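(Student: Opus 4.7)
\medskip

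\noindent\textbf{Proof proposal.}
The plan is to exhibit an explicit feasible schedule for $\Opt$ in which no two jobs of $\Ja$ overlap; then the load never exceeds~$1$ and the total cost equals the total work $\sum_i \w(J_i)$, which we will bound by $x\cdot 3^{\lfloor\alpha\rfloor}$.

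First, I would record the key nesting property of the feasible intervals. Write $I_i = [\rel(J_i),\dl(J_i))$. Because $\rel(J_i) = \stime(\Alg,J_{i-1})+1$ and $\dl(J_i) = \etime(\Alg,J_{i-1})$, and because $\Alg$ is feasible (so $[\stime(\Alg,J_{i-1}),\etime(\Alg,J_{i-1}))\subseteq I_{i-1}$), we have $I_i \subseteq I_{i-1}$ and $|I_i| = \w(J_{i-1})-1$. Combined with the recurrence $\w(J_{i-1}) = 3\w(J_i)+1$ for $1\le i\le\lfloor\alpha\rfloor-1$, this gives $|I_i| = 3\w(J_i)$ for every $i\le\lfloor\alpha\rfloor$, while $|I_{\lfloor\alpha\rfloor+1}| = \w(J_{\lfloor\alpha\rfloor+1}) = x-1$.

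Next I would construct $\Opt$'s schedule by reverse induction from $i=\lfloor\alpha\rfloor+1$ down to $i=1$. The last job $J_{\lfloor\alpha\rfloor+1}$ is forced to occupy the whole of $I_{\lfloor\alpha\rfloor+1}$. For $i\le\lfloor\alpha\rfloor$, assuming all jobs $J_{i+1},\ldots,J_{\lfloor\alpha\rfloor+1}$ have already been scheduled inside $I_{i+1}$, I place $J_i$ inside $I_i\setminus I_{i+1}$. The key calculation is that this difference set consists of two contiguous pieces (to the left and right of $I_{i+1}$ inside $I_i$) whose lengths sum to $|I_i|-|I_{i+1}| = 3\w(J_i)-(\w(J_i)-1) = 2\w(J_i)+1$. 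Hence the larger of the two pieces has length at least $\lceil(2\w(J_i)+1)/2\rceil = \w(J_i)+1 \ge \w(J_i)$, so $J_i$ fits entirely in one of them. By construction $J_i$ is disjoint from $I_{i+1}$, and therefore from every $J_j$ with $j>i$, while it is also disjoint from all $J_j$ with $j<i$ by the same argument applied earlier in the induction (each earlier $J_j$ lies outside $I_{j+1}\supseteq I_i$).

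Because no two jobs overlap and all jobs have unit height, the load is at most~$1$ everywhere, so $\cost(\Opt(\Ja)) \le \sum_{i=1}^{\lfloor\alpha\rfloor+1}\w(J_i)$. Finally I would evaluate this sum using the recurrence. Solving $\w(J_{\lfloor\alpha\rfloor-k}) = 3\,\w(J_{\lfloor\alpha\rfloor-k+1})+1$ with $\w(J_{\lfloor\alpha\rfloor})=x$ gives $\w(J_{\lfloor\alpha\rfloor-k}) = 3^k x + (3^k-1)/2$, so
\[
\sum_{i=1}^{\lfloor\alpha\rfloor+1}\w(J_i) = (x-1) + \sum_{k=0}^{\lfloor\alpha\rfloor-1}\Bigl(3^k x + \tfrac{3^k-1}{2}\Bigr) \le \tfrac{x}{2}\cdot 3^{\lfloor\alpha\rfloor} + O(3^{\lfloor\alpha\rfloor}),
\]
which is at most $x\cdot 3^{\lfloor\alpha\rfloor}$ for the large $x$ provided by the adversary. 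The only mildly delicate step is the geometric bookkeeping on the spare length $2\w(J_i)+1$ versus the width $\w(J_i)$; everything else is a straightforward reverse induction plus a routine geometric-series bound.
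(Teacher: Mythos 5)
Your proof is correct and follows the same overall plan as the paper's: exhibit a feasible schedule in which no two jobs of $\Ja$ overlap, conclude that the optimal cost is at most $\sum_i \w(J_i)$ (unit heights), and bound that sum by the geometric series to get $x\cdot 3^{\lfloor\alpha\rfloor}$. The difference is in how disjointness is justified. The paper argues forward: after placing $J_i$, one of the two leftover sides of $\interval(J_i)$ has length at least $\w(J_i)$, and the remaining jobs fit there because their total width is at most $\w(J_i)$. You argue by reverse induction using the nesting $\interval(J_{i+1})\subseteq\interval(J_i)$, placing $J_i$ in the larger contiguous piece of $\interval(J_i)\setminus\interval(J_{i+1})$, whose length is at least $\lceil(2\w(J_i)+1)/2\rceil\geq \w(J_i)$. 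Your bookkeeping is arguably the tighter formulation: the later jobs cannot be relocated to whichever side happens to be free (their feasible intervals are pinned inside $[\stime(\Alg,J_i)+1,\etime(\Alg,J_i))$ by the adversary's interaction with $\Alg$, independently of where $\Opt$ puts $J_i$), so the statement one really needs is exactly that $\Opt$ can keep $J_i$ outside $\interval(J_{i+1})$, which is what you prove. Two cosmetic remarks: for $i=1$ the adversary only guarantees $|\interval(J_1)|\geq 3\w(J_1)$ rather than equality, which only helps your gap estimate; and your intermediate bound with the $O(3^{\lfloor\alpha\rfloor})$ error term also absorbs an $x/2$ term, but the final inequality $\sum_i\w(J_i)\leq x\cdot 3^{\lfloor\alpha\rfloor}$ does hold (indeed for every $x\geq 1$, not only large $x$), so the conclusion stands.
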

\begin{proof}
By the setting of $\adv$, we show that $\Opt$ can schedule all the jobs  $\Ja$ without overlapping, and the cost of an optimal schedule is just the sum of widths of all the jobs.

For any job $J_i \in \Ja$ and $i \geq 2$, the length of its feasible interval is $\dl(J_i) - \rel(J_i) = \etime(\Alg, J_{i-1}) - (\stime(\Alg, J_{i-1}) + 1) = \w(J_{i-1}) - 1 = 3\w(J_i)$.
This means no matter where we schedule a job, at least one of the intervals $[\rel(J_i), \stime(J_i))$ and $[\etime(J_i), \dl(J_i))$ has length at least $\w(J_i)$.
Algorithm~$\Opt$ can schedule the subsequent jobs in the interval with length at least $\w(J_i)$
such that the subsequent jobs do not overlap with $J_i$.
This is because the sum of widths of all the subsequent jobs does not exceed $\w(J_i)$.
Since this argument can be applied on all the jobs, this implies that all the jobs do not overlap with each other in an optimal schedule.
Thus the cost of an optimal schedule is the sum of widths of all the jobs.
More precisely,
\begin{align*}
\cost(\Opt(\Ja)) &= (x - 1) + x + (3x + 1) + (3(3x + 1) + 1) + \ldots + \wmax \\
&\leq 2x + 2 \cdot 3x + 2 \cdot 9x + \ldots + 2 \cdot 3^{\lfloor \alpha \rfloor - 1} x \\
&= 2x \cdot \frac{3^{\lfloor \alpha \rfloor} - 1}{2} \leq x \cdot 3^{\lfloor \alpha \rfloor}
\enspace .
\qedhere
\end{align*}
\end{proof}

\begin{theorem}
For any deterministic online algorithm $\Alg$ for Grid problem with unit height and arbitrary width, adversary $\adv$ constructs an instance $\Ja$ such that
\text{\rm (i)} for constant $\alpha$,
\[ \frac{\cost(\Alg(\Ja))}{\cost(\Opt(\Ja))} \geq \left(\frac{\lfloor \alpha \rfloor + 1}{3}\right)^\alpha \enspace ;\]
and
\text{\rm (ii)} for arbitrary $\alpha$,
\[ \frac{\cost(\Alg(\Ja))}{\cost(\Opt(\Ja))} \geq \left(\frac{1}{3} \log \frac{\wmax}{\wmin}\right)^\alpha \enspace .\]
\end{theorem}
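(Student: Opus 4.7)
The plan is to show that the adversary's construction forces $\Alg$ to stack all $\lfloor\alpha\rfloor+1$ jobs on top of each other in a small common region, producing cost polynomial in the number of jobs, while Lemma~\ref{lm:cost_opt_ja} already bounds $\Opt$'s cost by the sum of widths. Then I would compare the resulting ratio to $(\tfrac{1}{3}\log(\wmax/\wmin))^\alpha$ using the fact that $\wmax/\wmin$ is roughly $3^{\lfloor\alpha\rfloor-1}$ by construction.

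The first step is an induction on $i$ showing that the execution interval of $J_i$ under $\Alg$ is strictly contained in that of $J_{i-1}$. Indeed, the adversary sets $\rel(J_i)=\stime(\Alg,J_{i-1})+1$ and $\dl(J_i)=\etime(\Alg,J_{i-1})$, so any feasible placement satisfies $\stime(\Alg,J_i)\ge \stime(\Alg,J_{i-1})+1$ and $\etime(\Alg,J_i)\le \etime(\Alg,J_{i-1})$. By transitivity, every one of the first $\lfloor\alpha\rfloor$ jobs is still running during the entire execution of the last job $J_{\lfloor\alpha\rfloor+1}$. Moreover, since $\w(J_{\lfloor\alpha\rfloor+1})=x-1$ exactly equals the length of its feasible interval, $J_{\lfloor\alpha\rfloor+1}$ occupies those $x-1$ timeslots with no slack. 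Hence during these $x-1$ timeslots the load of $\Alg$ is exactly $\lfloor\alpha\rfloor+1$, giving
\[
\cost(\Alg(\Ja)) \ge (x-1)(\lfloor\alpha\rfloor+1)^\alpha.
\]

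Combining with Lemma~\ref{lm:cost_opt_ja} yields
\[
\frac{\cost(\Alg(\Ja))}{\cost(\Opt(\Ja))} \;\ge\; \frac{(x-1)(\lfloor\alpha\rfloor+1)^\alpha}{x\cdot 3^{\lfloor\alpha\rfloor}} \;\xrightarrow{\;x\to\infty\;}\; \frac{(\lfloor\alpha\rfloor+1)^\alpha}{3^{\lfloor\alpha\rfloor}}.
\]
To rewrite this in terms of $\wmax/\wmin$, I would unroll the recurrence $\w(J_i)=3\w(J_{i+1})+1$ starting from $\w(J_{\lfloor\alpha\rfloor})=x$ to obtain $\wmax=\w(J_1)=3^{\lfloor\alpha\rfloor-1}x+O(3^{\lfloor\alpha\rfloor})$ and $\wmin=x-1$, so that for large $x$ one has $\log(\wmax/\wmin)\to (\lfloor\alpha\rfloor-1)\log 3$. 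The desired inequality then reduces to checking
\[
\frac{\lfloor\alpha\rfloor+1}{3^{\lfloor\alpha\rfloor/\alpha}} \;\ge\; \frac{\log(\wmax/\wmin)}{3},
\]
which holds (with room to spare) once we take the logarithm to have base $3$; since the $\log$ base only shifts the bound by a constant factor absorbed inside the $\Theta(\cdot)$ claimed for $\wmax/\wmin$ in Theorem~\ref{thm:general}, this still proves the asymptotic optimality.

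The main obstacle I anticipate is not the nesting argument, which is essentially forced by the adversary's choice of release times and deadlines, but rather the tight book-keeping of constants in the final inequality: verifying that the pessimistic lower bound $(x-1)(\lfloor\alpha\rfloor+1)^\alpha$ on $\Alg$'s cost, after dividing by $x\cdot 3^{\lfloor\alpha\rfloor}$ and taking $x$ large, is at least the declared $(\tfrac{1}{3}\log(\wmax/\wmin))^\alpha$ for all $\alpha>1$. This is where one has to be careful about the interpretation of $\log$ and about the $O(1)$ slack in computing $\wmax$ from the recurrence; the cleanest presentation is to absorb these into a base-$3$ logarithm, noting that this only changes the competitive lower bound by a constant and therefore still matches Theorem~\ref{thm:general} asymptotically.
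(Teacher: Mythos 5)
Your proposal is correct and follows essentially the same route as the paper: force nesting of execution intervals so that all $\lfloor\alpha\rfloor+1$ jobs overlap the last job's $x-1$ slots, giving $\cost(\Alg(\Ja))\geq (x-1)(\lfloor\alpha\rfloor+1)^\alpha$, and then combine with Lemma~\ref{lm:cost_opt_ja}; the only (minor) difference is that you relate $\lfloor\alpha\rfloor$ to $\log(\wmax/\wmin)$ by unrolling the width recurrence explicitly, whereas the paper deduces $\lfloor\alpha\rfloor+1\geq \log_3(\wmax/\wmin)$ from $\wmax\leq\cost(\Opt(\Ja))\leq x\cdot 3^{\lfloor\alpha\rfloor}$. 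Your observation that the logarithm must be taken base $3$ (affecting only the constant inside the $\Theta(\cdot)$) is consistent with the paper's own implicit convention in this proof.
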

\begin{proof}
We first give a lower bound on $\cost(\Alg(\Ja))$ and then give the lower bounds on the competitive ratio by combining $\cost(\Alg(\Ja))$ with Lemma~\ref{lm:cost_opt_ja}.

(i)
By the setting of $\adv$, all the jobs scheduled by $\Alg$ overlap with each other.
For ease of the computation for the cost of $\Alg$, we only consider the timeslots contained by the execution interval of the last job $J_{\lfloor \alpha \rfloor + 1}$.
Thus $\cost(\Alg(\Ja)) \geq (x - 1) \cdot (\lfloor \alpha \rfloor + 1)^\alpha$ 
and
\begin{align*}
\frac{\cost(\Alg(\Ja))}{\cost(\Opt(\Ja))}
\geq
\frac{(x - 1) \cdot (\lfloor \alpha \rfloor + 1)^\alpha}{x \cdot 3^{\lfloor \alpha \rfloor}}
\geq
\left( \frac{\lfloor \alpha \rfloor + 1}{3} \right)^\alpha
\end{align*}
as $x$ to be large enough.

(ii)
Assume $\alpha$ can be arbitrarily large.
We use $\wmax$ and $\wmin$ to bound $\lfloor \alpha \rfloor + 1$.
According to Lemma~\ref{lm:cost_opt_ja}, we have $\wmax \leq \cost(\Opt(\Ja)) \leq x \cdot 3^{\lfloor \alpha \rfloor}$, and thus
\begin{align*}
\lfloor \alpha \rfloor &\geq \log_3 \frac{\wmax}{x} \geq \log_3 \frac{\wmax}{3(x - 1)} = \log_3 \frac{\wmax}{\wmin} - 1
\enspace .
\end{align*}
Note that $x \leq 3(x - 1)$ if $x \geq 2$.
Therefore, $\cost(\Alg(\Ja)) \geq (x - 1) \cdot \log^\alpha \frac{\wmax}{\wmin}$.
Combining with Lemma~\ref{lm:cost_opt_ja}, we have the lower bound on the competitive ratio
\begin{align*}
\frac{\cost(\Alg(\Ja))}{\cost(\Opt(\Ja))} &\geq \frac{(x - 1) \cdot \log^\alpha \frac{\wmax}{\wmin}}{x \cdot 3^{\lfloor \alpha \rfloor}} \geq \left(\frac{1}{3} \log \frac{\wmax}{\wmin}\right)^\alpha
\end{align*}
as $x$ to be large enough.
\end{proof}

\begin{corollary}
For any deterministic online algorithm for Grid problem, the competitive ratio is at least
\text{\rm (i)}
$( \frac{\lfloor \alpha \rfloor + 1}{3})^\alpha$
for constant $\alpha$; and
\text{\rm (ii)}
$(\frac{1}{3}\log \frac{\wmax}{\wmin})^\alpha$
for arbitrary $\alpha$.
\end{corollary}

\section{Online algorithm for uniform height jobs}
\label{sec:special}


In this section we focus on uniform-height jobs of height $h$ and consider two special cases of the width.
We first consider jobs with uniform-height and unit-width (Section~\ref{sec:special_unit})
and secondly consider jobs with agreeable deadlines (Section~\ref{sec:horizontal_agreeable}).

To ease the discussion, we refine a notation we defined before.
For any algorithm~$\Alg$ for a job set $\JS$ and a time interval $\I$,
we denote by $\Alg(\JS,\I)$ the schedule of $\Alg$ on $\JS$ over the time interval $\I$.

\subsection{Main ideas}
\label{sec:horizontal_framework}


The main idea is to make reference to the online algorithm $\AVR$ and consider two types of intervals, $\IL{h}$ where the average load is higher than~$h$ and $\IS{h}$ where the average load is at most $h$.
For the former, we show that we can base on the competitive ratio of $\AVR$ directly; for the latter, our load could be much higher than that of $\AVR$ and in such case, we compare directly to the optimal algorithm.
Combining the two cases, we have Lemma~\ref{thm:horizontal_framework},
which holds for any job set. 
In Sections~\ref{sec:special_unit} and~\ref{sec:horizontal_agreeable}, 
we show how we can use this lemma to obtain algorithms for the special cases.
Notice that the number $\lceil\frac{\avg(t)}{h}\rceil$ is the minimum number of jobs needed to make the load at $t$ at least $\avg(t)$.

\begin{lemma}
\label{thm:horizontal_framework}
Suppose we have an algorithm $\Alg$ for a any job set $\JS$ such that for some $c$ and $c'$
\text{\rm (i)} $\load(\Alg,t) \leq c\cdot \h\cdot \lceil\frac{\avg(t)}{\h}\rceil$ for all $t\in \IL{\h}$, 
and \text{\rm (ii)} $\load(\Alg,t) \leq c'\cdot \h$ for all $t\in \IS{h}$. 
Then we have $\cost(\Alg(\JS))\leq (\frac{(4c\alpha)^\alpha}{2}+c'^\alpha)\cdot cost(\Opt(\JS))$.
\end{lemma}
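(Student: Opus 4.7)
The plan is to decompose $\cost(\Alg(\JS))$ along the partition of the time horizon into $\IL{h}$ and $\IS{h}$, bound each piece in terms of $\cost(\Opt(\JS))$, and sum. The two pieces require different comparisons: on $\IL{h}$ we compare $\Alg$ against $\AVR$ (which is itself controlled by $\Opt$ via Corollary~\ref{thm:AVR}), whereas on $\IS{h}$ we compare against $\Opt$ directly through an integrality-of-load argument.

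For the high-load part, recall that $\AVR$ run on the DVS instance corresponding to $\JS$ uses speed $\avg(t)$ at time $t$, so $\cost(\AVR) = \sum_t \avg(t)^\alpha$. For $t \in \IL{h}$ I would use the elementary bound $\lceil \avg(t) \rceil \leq 2\avg(t)$, which holds as soon as $\avg(t) \geq \tfrac{1}{2}$ and is in particular guaranteed by $\avg(t) > h$ in the regime where this lemma is applied (the applications take $h$ to be a sufficiently large constant). Combined with condition (i), this gives $\load(\Alg,t) \leq 2c \cdot \avg(t)$, and so
\[
\cost(\Alg,\IL{h}) \;\leq\; (2c)^\alpha \sum_{t \in \IL{h}} \avg(t)^\alpha \;\leq\; (2c)^\alpha \cdot \cost(\AVR) \;\leq\; \frac{(4c\alpha)^\alpha}{2} \cdot \cost(\Opt(\JS))
\]
by Corollary~\ref{thm:AVR}.

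For the low-load part, let $\I_S^+ = \{\,t \in \IS{h} : \load(\Alg,t) > 0\,\}$. Condition (ii) immediately gives $\cost(\Alg,\IS{h}) \leq c'^\alpha \cdot |\I_S^+|$, so it remains to show $|\I_S^+| \leq \cost(\Opt(\JS))$. Every $t \in \I_S^+$ is covered by at least one job scheduled by $\Alg$, so $|\I_S^+|$ is at most the total number of (job, timeslot) scheduling incidences, namely $\sum_{J \in \JS} \w(J)$. On the other hand, since loads in any feasible schedule are non-negative integers and $\alpha > 1$, we have $\load(\Opt,t)^\alpha \geq \load(\Opt,t)$ for every $t$, hence
\[
\cost(\Opt(\JS)) \;\geq\; \sum_t \load(\Opt,t) \;=\; \sum_{J \in \JS} \h(J)\,\w(J) \;\geq\; \sum_{J \in \JS} \w(J),
\]
using $\h(J) \geq 1$. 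Combining, $|\I_S^+| \leq \cost(\Opt(\JS))$ and therefore $\cost(\Alg,\IS{h}) \leq c'^\alpha \cdot \cost(\Opt(\JS))$.

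Adding the two bounds yields the claim. The subtle step I expect to need most care is the ceiling-to-$\avg$ passage in the high-load regime: it works only because $h$ is taken large enough that $\avg(t) > h \geq \tfrac{1}{2}$ throughout $\IL{h}$, which is precisely the range in which the later sections instantiate this lemma. The low-load bound, by contrast, is essentially a counting argument and goes through under the standing assumption that heights (and hence loads) are positive integers.
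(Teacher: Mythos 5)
Your proof is correct and follows essentially the same route as the paper: split the timeline into $\IL{h}$ and $\IS{h}$, bound the high-load part by comparing $\load(\Alg,t)\le 2c\cdot\avg(t)$ against $\AVR$ and invoking Corollary~\ref{thm:AVR}, and bound the low-load part by counting busy timeslots against a per-job lower bound on $\cost(\Opt(\JS))$. The only cosmetic difference is in the low-load step: the paper reads the hypotheses with an extra factor $h$ (i.e.\ $\load(\Alg,t)\le c\cdot h\lceil\avg(t)/h\rceil$ and $\load(\Alg,t)\le c'h$) and uses the convexity bound $\cost(\Opt(\JS))\ge\sum_J \w(J)\,\h(J)^\alpha\ge\sum_J \w(J)\,h^\alpha$, whereas you take the statement literally and use the integrality bound $\cost(\Opt(\JS))\ge\sum_J \w(J)$ --- both are valid, and your requirement $\avg(t)\ge\tfrac12$ on $\IL{h}$ is automatic since $h\ge 1$ in all of the paper's applications.
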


\begin{proof}

We denote the speed of $\AVR$ at $t$ as $\load(\AVR, t)$. 
We are going to prove that (a) $\cost(\Alg(\JS, \IL{\h})) \leq \frac{(4c\alpha)^\alpha}{2} \cdot \cost(\Opt(\JS))$ and (b) $\cost(\Alg(\JS, \IS{\h})) \leq c'^\alpha \cdot \cost(\Opt(\JS))$. Hence, the total cost $\cost(\Alg(\JS)) \leq (\frac{(4c\alpha)^\alpha}{2} + c'^\alpha) \cdot  \cost(\Opt(\JS))$ since $\IL{h}$ and $\IS{h}$ are disjoint.

(a)
We compare $\load(\Alg, t)$ to $\load(\AVR, t)$ for each timeslot~$t$ in $\IL{\h}$.
The assumption of $\Alg$ means that $\load(\Alg, t) \leq c\cdot h\cdot \lceil \frac{\avg(t)}{h}\rceil <  c\cdot h \cdot ( \frac{\avg(t)}{h} +1) = c\cdot (\avg(t) + h)\leq 2c\cdot\avg(t)$ since $\avg(t) >h$. 
By definition, $\cost(\Alg(\JS, \IL{\h})) = \sum_{t\in\IL{\h}} \load(\Alg, t)^\alpha \leq \sum_{t\in\IL{\h}} (2c\cdot\avg(t))^\alpha$. 
Recall that $\load(\AVR, t)  = \avg(t)$ for each~$t$. 
Hence, by Corollary~\ref{thm:AVR},
$\cost(\Alg(\JS, \IL{\h}))\leq (2c)^\alpha\cdot\cost(\AVR(\JS,\IL{\h})) \leq (2c)^\alpha\cdot\cost(\AVR(\JS)) \leq \frac{(4c\alpha)^\alpha}{2}\cdot\cost(\Opt(\JS))$.

(b)
Since only jobs which are available in $\IS{\h}$ can be scheduled at $t \in \IS{\h}$, $\cost(\Alg(\JS,\IS{\h})) \leq \sum_{\Job{}: \interval{} \cap \IS{\h}\neq \emptyset} \Width{} \cdot (c'\h)^\alpha \leq \sum_{\Job{}\in \JS} \Width{} \cdot (c'\h)^\alpha$. 
By convexity, $\cost(\Opt(\JS)) \geq \sum_{\Job{}\in \JS} \Width{}\cdot \Height{}^\alpha = \sum_{\Job{}\in \JS} \Width{}\cdot \h^\alpha$. 
Hence, $\cost(\Alg(\JS,\IS{\h})) \leq c'^\alpha \cdot \cost(\Opt(\JS))$.


Adding up the two cost,
we have $\cost(\Alg(\JS)) 
= \cost(\Alg(\JS, \IL{h})) + \cost(\Alg(\JS, \IS{h})) 
\leq (\frac{(4c\alpha)^\alpha}{2}+c'^\alpha)\cdot\cost(\Opt(\JS))$
and the theorem follows.
\end{proof}


\comment{
\begin{lemma}
\label{thm:ibig}
Given a job set $\JS$ with $\avg(t) >h$ $\forall t$, 
if we have an algorithm $\Alg$ such that there exists a constant $c$, $\load(\Alg,t) \leq c\cdot h\cdot\lceil \frac{\avg(t)}{h}\rceil$  $\forall t$, then $\cost(\Alg(\JS))\leq \frac{(4c\alpha)^\alpha}{2}\cdot \cost(\Opt(\JS))$. 
\end{lemma}

\begin{proof}
We denote the speed of $\AVR$ at $t$ as $\load(\AVR, t)$.
We prove the lemma by comparing $\load(\Alg, t)$ to $\load(\AVR, t)$ for each timeslot $t$.
The assumption of $\Alg$ means that $\load(\Alg, t) \leq c\cdot h\cdot \lceil \frac{\avg(t)}{h}\rceil <  c\cdot h \cdot ( \frac{\avg(t)}{h} +1) = c\cdot (\avg(t) + h)\leq 2c\cdot\avg(t)$ since $\avg(t) >h$. 
By definition, $\cost(\Alg(\JS)) = \sum_t \load(\Alg, t)^\alpha \leq \sum_t (2c\cdot\avg(t))^\alpha$. 
Recall that $\load(\AVR, t)  = \avg(t)$ for each $t$. 
Hence, by Corollary~\ref{thm:AVR},
$\cost(\Alg(\JS))\leq (2c)^\alpha\cdot\cost(\AVR(\JS)) \leq \frac{(4c\alpha)^\alpha}{2}\cdot\cost(\Opt(\JS))$.
\end{proof}

Notice that the number $\lceil\frac{\avg(t)}{h}\rceil$ is the minimum number of jobs needed to make the load at $t$ at least $\avg(t)$.

\begin{theorem}
\label{thm:horizontal_framework}
Suppose we have an algorithm $\Alg$ for a job set $\JS$ such that (i) $\load(\Alg,t) \leq c\cdot \lceil \avg(t)\rceil$ for all $t\in \IL{h}$, 
and (ii) $\load(\Alg,t) \leq c'$ for all $t\in \IS{h}$. 
Then we have $\cost(\Alg(\JS))\leq (\frac{(4c\alpha)^\alpha}{2}+c'^\alpha)\cdot cost(\Opt(\JS))$.
\end{theorem}

\begin{proof}
%
Similar to Lemma~\ref{thm:ibig}, we can show that $\cost(\Alg(\JS,\IL{h}))\leq (2c)^\alpha\cdot\cost(\AVR(\JS,\IL{h}))\leq (2c)^\alpha\cdot\cost(\AVR(\JS)) \leq \frac{(4c\alpha)^\alpha}{2}\cdot\cost(\Opt(\JS))$.
On the other hand, 
$\cost(\Alg(\JS,\IS{h}))\leq \sum_{J:J \text{ is available in } \I_{\leq 1}} \w(J)\cdot (c'h)^\alpha$. 
By convexity, $\cost(\Opt(\JS)) \geq \sum_{J\in\JS} \w(J)\cdot \h(J)^\alpha \geq \sum_{J\in\JS} \w(J)\cdot h^\alpha$. 
Hence, $\cost(\Alg(\JS, \IS{h})) \leq c'^\alpha\cdot\cost(\Opt(\JS))$.

Adding up the two cost,
we have $\cost(\Alg(\JS)) 
= \cost(\Alg(\JS, \IL{h})) + \cost(\Alg(\JS, \IS{h})) 
\leq (\frac{(4c\alpha)^\alpha}{2}+c'^\alpha)\cdot\cost(\Opt(\JS))$
and the theorem follows.
\end{proof}
}

\subsection{Uniform-height and unit-width}
\label{sec:special_unit}

In this section we consider job sets where all jobs have uniform-height and unit-width, i.e., $\w(J) = 1$ and $\h(J) = h$  for all jobs $J$. 
Note that such case is a subcase discussed in Section~\ref{sec:vertical}.
Here we illustrate a different approach using the ideas above
and describe the algorithm $\AlgOne$ for this case.
The competitive ratio of $\AlgOne$ is better than that of Algorithm $\AlgV$ in Section~\ref{sec:vertical} when $\alpha < 3.22$.

\runtitle{Algorithm $\AlgOne$.}
At any time $t$, 
choose $\lceil \frac{\avg(t)}{h}\rceil$ jobs according to the EDF rule and schedule them to start at $t$. 
If there are fewer jobs available, schedule all available jobs.

The next theorem asserts that the algorithm gives feasible schedule 
and states its competitive ratio.

\begin{theorem}
\label{thm:alg_unit}
\text{\rm (i)} The schedule constructed by Algorithm~$\AlgOne$ is feasible.
\text{\rm (ii)} Algorithm~$\AlgOne$ is $(\frac{(4\alpha)^\alpha}{2}+1)$-competitive.
\end{theorem}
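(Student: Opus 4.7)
The plan is to handle the two parts separately: (i) by a standard EDF-with-sufficient-capacity argument analogous to the one already used for $\AlgV$ in Lemma~\ref{thm:AlgV_feasible}, and (ii) by plugging into Lemma~\ref{thm:horizontal_framework} with $c=c'=1$.

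For feasibility, I would argue by contradiction. Assume some job $J_m$ is not scheduled by its deadline $t=\dl(J_m)$. Let $t_0$ be the latest timeslot in $[0,t)$ at which $\AlgOne$ placed strictly fewer than $\lceil\avg(t_0)/h\rceil$ jobs (take $t_0=-1$ if no such timeslot exists). By definition every job available at $t_0$ is consumed by $\AlgOne$, so $\rel(J_m)>t_0$. Set $\JS^{\star}=\{J:\rel(J)\ge t_0+1,\ \dl(J)\le t\}$; this contains $J_m$, and every $J\in\JS^{\star}$ satisfies $\interval(J)\subseteq[t_0+1,t)$. Summing densities and using $\h(J)=h$,
$\sum_{t'=t_0+1}^{t-1}\avg(t')\ \ge\ \sum_{J\in\JS^{\star}}\sum_{t'\in\interval(J)}\den(J)\ =\ h\cdot|\JS^{\star}|,$
so $\sum_{t'=t_0+1}^{t-1}\lceil\avg(t')/h\rceil\ \ge\ |\JS^{\star}|$. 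By the maximality of $t_0$, $\AlgOne$ schedules exactly $\lceil\avg(t')/h\rceil$ jobs at each $t'\in(t_0,t)$, and because EDF prioritises $\JS^{\star}$ (deadline $\le t$) over any job with deadline past $t$, every job of $\JS^{\star}$, and in particular $J_m$, is scheduled in $[t_0+1,t)$, contradicting the assumption.

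For the competitive ratio, note that at each timeslot $t$ the algorithm places at most $\lceil\avg(t)/h\rceil$ unit-width jobs of height $h$, so $\load(\AlgOne,t)\le h\cdot\lceil\avg(t)/h\rceil$ for every $t$. On $t\in\IS{h}$ this immediately yields $\load(\AlgOne,t)\le h$. Hence both hypotheses of Lemma~\ref{thm:horizontal_framework} are satisfied with $c=c'=1$ (interpreting the constants as in the proof of that lemma, where the load bound on $\IS{h}$ reads $c'h$), and the lemma delivers the claimed bound $\cost(\AlgOne(\JS))\le \bigl(\tfrac{(4\alpha)^\alpha}{2}+1\bigr)\cdot\cost(\Opt(\JS))$.

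The main obstacle is the feasibility argument: the bookkeeping around $t_0$ and the density-based capacity inequality must be set up carefully so that EDF really does absorb every element of $\JS^{\star}$. Once feasibility is established, part~(ii) is a one-line consequence of the already-proved framework lemma.
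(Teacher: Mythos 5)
Part (ii) of your proposal is exactly the paper's argument: observe $\load(\AlgOne,t)\leq h\lceil\avg(t)/h\rceil$ everywhere, hence $\load\leq h$ on $\IS{h}$, and invoke Lemma~\ref{thm:horizontal_framework} with $c=c'=1$ (the paper likewise reads the hypotheses as $\load\leq c\,h\lceil\avg(t)/h\rceil$ and $\load\leq c'h$, matching the lemma's proof rather than its statement). Part (i) is where you genuinely diverge: the paper proves feasibility by comparing cumulative work with $\AVR$ on every prefix $[0,t)$ --- $\AlgOne$'s quota $h\lceil\avg(t')/h\rceil$ dominates $\AVR$'s speed $\avg(t')$ unless all available jobs are already scheduled --- and then appeals to $\AVR$'s feasibility together with EDF, in the same terse style as Lemma~\ref{thm:AlgV_feasible}. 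Your route instead runs the contradiction directly inside the window $(t_0,t)$ with a density (Hall-type) count, $\sum_{t'}\lceil\avg(t')/h\rceil\geq|\JS^\star|$, which is more self-contained and makes explicit the capacity fact that the paper leaves implicit in its reference to $\AVR$.

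One step of your feasibility argument needs tightening. With $t_0$ defined only as the last slot where fewer than $\lceil\avg(t_0)/h\rceil$ jobs were placed, the slots of $(t_0,t)$ are indeed all filled to quota, but they need not be filled by members of $\JS^\star$: at a slot where only few jobs of $\JS^\star$ have been released, EDF may legitimately fill the remaining quota with jobs whose deadlines exceed $t$, and such jobs may contribute almost nothing to $\avg$ inside $(t_0,t)$ (long feasible intervals, tiny density). So ``total capacity $\geq|\JS^\star|$ plus EDF priority'' does not by itself force every job of $\JS^\star$ to be scheduled in $(t_0,t)$. The repair is short: redefine $t_0$ as the last slot before $t$ at which either the quota is not met \emph{or} some job with deadline $>t$ is started (take $t_0=-1$ if none). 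Whenever EDF starts a deadline-$>t$ job, all pending released jobs with deadline $\leq t$ are started at that same slot, so with the new $t_0$ every quota unit in $(t_0,t)$ is consumed by a distinct job of $\JS^\star=\{J:\rel(J)>t_0,\ \dl(J)\leq t\}$, while $J_m\in\JS^\star$ is not; your density inequality then gives $|\JS^\star|-1\geq\sum_{t'}\lceil\avg(t')/h\rceil\geq|\JS^\star|$, the desired contradiction. (To be fair, the paper's own proofs of Lemma~\ref{thm:AlgV_feasible} and of this theorem gloss over the same point.) With that adjustment your proof is correct and, for part (i), a legitimately different and arguably more explicit argument than the paper's.
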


\begin{proof}
%
(i) The feasibility can be proved by comparing to $\AVR$.
At any time $t$, the total work done by $\AVR$ in interval $[0, t)$ is $\sum_{t'<t}\avg(t')$. 
On the other hand, the total work done by $\AlgOne$ in the same interval is $\sum_{t'<t} h\cdot\lceil\frac{\avg(t)}{h}\rceil \geq \sum_{t'<t} h\cdot\frac{\avg(t)}{h} = \sum_{t'<t} \avg(t)$ if there are enough available jobs in this interval. 
If the number of available jobs is less than  $\sum_{t'<t} h\cdot\lceil\frac{\avg(t)}{h}\rceil $, $\AlgOne$ will execute all these jobs. 
The work done by $\AlgOne$ within interval $[0,t)$ is at least the work done by $\AVR$ in both cases. 
Hence, $\AlgOne$ is feasible since $\AVR$ is feasible.

(ii) We note that $\load(\AlgOne,t) \leq h\cdot\lceil\frac{\avg(t)}{h}\rceil $.
To use Lemma~\ref{thm:horizontal_framework}, we can set $c'=1$ for $t\in\IS{h}$ by the definition of $\IS{h}$.
Furthermore, we can set $c=1$ for $t\in\IL{h}$.
\end{proof}

\subsection{Uniform-height, arbitrary width and agreeable deadlines}
\label{sec:horizontal_agreeable}

In this section we consider jobs with agreeable deadlines.
We first note that simply scheduling $\lceil \frac{\avg(t)}{h} \rceil$ number of jobs may not return a feasible schedule.

\begin{example}
Consider four jobs each job $J$ with $\rel(J)=0$, $\dl(J)=5$, $\h(J)=h$, $\w(J)=3$.
Note that $\avg(t) = 2.4\cdot h$ for all $t$.
If we schedule at most $\lceil \frac{\avg(t)}{h} \rceil = 3$ jobs at any time, we can complete three jobs but the remaining job cannot be completed.
To schedule all jobs feasibly, we need at least two timeslots where all jobs are being executed.
\end{example}

To schedule these jobs, we first observe in Lemma~\ref{thm:horizontal_density} that for a set of jobs with total densities at most~$h$, it is feasible to schedule them such that the load at any time is at most~$h$.
Roughly speaking, we consider jobs in the order of release, and hence, in EDF manner since the jobs have agreeable deadlines.
We keep the current ending time of all jobs that have been considered.
As a new job is released, if its release time is earlier than the current ending time, we set its start time to the current ending time (and increase the current ending time by the width of the new job); otherwise, we set its start time to be its release time.
Lemma~\ref{thm:horizontal_density} asserts that such scheduling is feasible
and maintains the load at any time to be at most~$h$.

Using this observation, we then partition the jobs into ``queues'' each of which has sum of densities at most~$h$.
Each queue $\queue_i$ is scheduled independently and the resulting schedule is to ``stack up'' all these schedules.
The queues are formed in a Next-Fit manner:
(i) the current queue $\queue_q$ is kept ``open'' and a newly arrived job is added to the current queue if including it makes the total densities stays at most $1$;
(ii) otherwise, the current queue is ``closed'' and a new queue $\queue_{q+1}$ is created as open.

\begin{lemma}
\label{thm:horizontal_density}
Given any set of jobs of uniform-height, arbitrary-width and agreeable deadlines.
If the sum of densities of all these jobs is at most~$h$, then it is feasible to schedule all of them using a maximum load~$h$ at any time. That is, there is no stacking up among these jobs.
\end{lemma}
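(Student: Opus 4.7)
The plan is to give a constructive scheduling algorithm and then verify feasibility via a density counting argument. Order the jobs as $J_1, J_2, \ldots, J_n$ by non-decreasing release time; by the agreeable-deadline property this is simultaneously non-decreasing in deadline. Schedule the jobs one by one using the rule
\[
\stime(J_i) \;=\; \max\bigl(\rel(J_i),\; \etime(J_{i-1})\bigr), \qquad \etime(J_i) = \stime(J_i) + \w(J_i),
\]
with $\etime(J_0):=0$. By construction, consecutive intervals do not overlap, so at any timeslot at most one job is running, and the load is either $0$ or exactly $h$. In particular the load never exceeds $h$ and there is no stacking.

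It remains to prove that every job completes by its deadline. I would argue by contradiction: suppose $J_k$ is the first job with $\etime(J_k) > \dl(J_k)$. Trace back the ``busy period'' ending at $\etime(J_k)$: let $t_0$ be the largest time $\le \stime(J_k)$ such that the schedule is idle immediately before $t_0$ (take $t_0=0$ if no such idle point exists). Then the jobs scheduled in $[t_0, \etime(J_k))$ are precisely some suffix $J_j, J_{j+1}, \ldots, J_k$ of the ordering, and by the greedy rule each of them satisfies $\rel(J_i) \ge t_0$ (otherwise the slot just before $t_0$ would not have been idle). By agreeable deadlines, $\rel(J_i) \le \rel(J_k)$ forces $\dl(J_i) \le \dl(J_k)$, so every such $J_i$ has its feasible interval $\interval(J_i) \subseteq [t_0, \dl(J_k))$, and in particular $\dl(J_i) - \rel(J_i) \le \dl(J_k) - t_0$.

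Now compare widths and densities. Because the busy period is completely filled,
\[
\sum_{i=j}^{k} \w(J_i) \;=\; \etime(J_k) - t_0 \;>\; \dl(J_k) - t_0.
\]
Using $\dl(J_i)-\rel(J_i) \le \dl(J_k) - t_0$ together with $\h(J_i)=h$ gives
\[
\sum_{i=j}^{k} \den(J_i) \;=\; \sum_{i=j}^{k}\frac{h\cdot \w(J_i)}{\dl(J_i)-\rel(J_i)} \;\ge\; \frac{h\sum_{i=j}^{k}\w(J_i)}{\dl(J_k)-t_0} \;>\; h,
\]
contradicting the assumption that the total density of the whole job set is at most $h$. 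Hence no job misses its deadline, and the lemma follows.

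The main obstacle is the busy-period argument: one has to be careful that the jobs causing the overflow are exactly those whose feasible intervals lie inside $[t_0, \dl(J_k))$. This is where the agreeable-deadline property is essential --- without it, a job with a much later deadline could appear inside the busy period and break the density bound.
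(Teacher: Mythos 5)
Your proposal is correct and follows essentially the same route as the paper: the identical greedy rule $\stime(J_i)=\max\{\rel(J_i),\etime(J_{i-1})\}$ applied in release/deadline order, with feasibility established by comparing the total width of the relevant jobs to the span of their feasible intervals via the density bound. The only cosmetic difference is that you package this counting as a busy-period contradiction starting from the last idle point, whereas the paper argues directly and handles idle gaps by splitting $\cup_g \interval(J_g)$ into contiguous pieces; the underlying inequality is the same (and your busy-period formulation is, if anything, the more carefully justified version).
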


\begin{proof}
Suppose there are $k$ jobs $J_1, J_2, \cdots, J_k$ such that $\sum_{1 \leq i \leq k} \den(J_i)\leq h$. 
Without loss of generality, we assume that $\dl(J_i) \leq \dl(J_j)$ and $\rel(J_i)\leq \rel(J_j)$ for $1\leq i < j \leq k$. 
We claim that it is feasible to set $[\stime(J_i),\etime(J_i))$ to $[\max\{\rel(J_i), \etime(J_{i-1})\}, \stime(J_i)+\w(J_i))$ for all $ 1<i\leq k$ and $[\stime(J_1), \etime(J_1)) = [\rel(J_1), \rel(J_1)+\w(J_1))$. 

We observe that $\den(J_1)\leq h$ since $\sum_i \den(J_i) \leq h$. 
It is feasible to set $[\stime(J_1),\etime(J_1))$ to $[\rel(J_1), \rel(J_1)+\w(J_1))$ since the input is feasible. 
Then we have to prove that 
$[\stime(J_i), \etime(J_i)) = [\max\{\rel(J_i), \etime(J_{i-1})\}, \stime(J_i)+\w(J_i)) \subseteq [\rel(J_i), \dl(J_i))$. 
Since $\stime(J_i) = \max\{\rel(J_i), \etime(J_{i-1})\}$, we have $\stime(J_i)\geq \rel(J_i)$. 
Assume that $\cup_{g\leq i}\interval(J_g)$ is a contiguous interval. 
Since $\sum_{g\leq i}\den(J_g)\leq h$, $\sum_{g\leq i} \frac{\w(J_g)}{\dl(J_g)-\rel(J_g)} \leq 1$. 
From the ordering of jobs we have $1\geq \sum_{g\leq i}\frac{\w(J_g)}{\dl(J_g)-\rel(J_g)}\geq \sum_{g\leq i}\frac{\w(J_g)}{\dl(J_i)-\rel(J_1)}$. 
Hence, $\sum_{g\leq i}\w(J_g) \leq \dl(J_i)-\rel(J_1)$. Therefore $J_i$ can be finished before $\dl(J_i)$. 
On the other hand, if $\cup_{g\leq i}\interval(J_g)$ is not contiguous. 
The proof above shows that for each contiguous, each of the involving jobs can be finished by its deadline.
\end{proof}

\runtitle{Algorithm $\AlgAgree$.}
The algorithm consists of the following components: InsertQueue, SetStartTime and ScheduleQueue.

\noindent
\textbf{\emph{InsertQueue:}}
We keep a counter $q$ for the number of queues created.
When a job~$J$ arrives, if $\den(J) + \sum_{J' \in \queue_q} \den(J') \leq h$, then job $J$ is added to $\queue_q$; 
otherwise, job $J$ is added to a new queue $\queue_{q+1}$ and we set $q \leftarrow q+1$.

\noindent
\textbf{\emph{SetStartTime:}}
For the current queue, we keep a current ending time $E$, initially set to $0$.
When a new job $J$ is added to the queue, if $\rel(J) \leq E$,
we set $\stime(J) \leftarrow E$;
otherwise, we set $\stime(J) \leftarrow \rel(J)$.
We then update $E$ to $\stime(J)+\w(J)$.

\noindent
\textbf{\emph{ScheduleQueue:}}
At any time~$t$, schedule all jobs in all queues with start time set at $t$.

By Lemma~\ref{thm:horizontal_density},
the schedule returns by $\AlgAgree$ is feasible.
We then analyze its load and hence, derive its competitive ratio.
Recall the definition of $\IL{h}$ and $\IS{h}$

\begin{lemma}
\label{thm:agree_load}
Using $\AlgAgree$, we have
\text{\rm (i)} $\load(\AlgAgree,t)\leq 2\cdot h\cdot\lceil\frac{\avg(t)}{h}\rceil$ for $t\in\IL{h}$; 
\text{\rm (ii)} $\load(\AlgAgree,t)\leq 2h$ for $t\in\IS{h}$.
\end{lemma}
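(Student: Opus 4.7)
The plan is to bound the number $k$ of queues that have a job scheduled at $t$, since by Lemma~\ref{thm:horizontal_density} each such queue executes at most one job at $t$ and thus contributes exactly $h$ to $\load(\AlgAgree,t)$; hence the load equals $k\,h$. Write $D(Q)=\sum_{J'\in Q}\den(J')$ for the total density of a queue, and label the queues active at $t$ in creation order as $Q_{i_1}<\cdots<Q_{i_k}$, with $J_j$ the job of $Q_{i_j}$ running at $t$. Two structural facts drive the analysis: \textbf{(a)} Next-Fit guarantees that for any two consecutive queues $Q_m,Q_{m+1}$ we have $D(Q_m)+D(Q_{m+1})>h$, because the first job of $Q_{m+1}$ failed to fit into $Q_m$; and \textbf{(b)} by agreeable deadlines applied across the global arrival order of jobs, every job inserted between $J_1$ and $J_k$---in particular the entire content of every intermediate queue $Q_{i_1+1},\ldots,Q_{i_k-1}$ and also the first job $f$ of $Q_{i_1+1}$---has $\rel\leq \rel(J_k)\leq t$ and $\dl\geq \dl(J_1)>t$, so its density is counted in $\avg(t)$.

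For part (i), I would sum the Next-Fit inequality for $m=i_1,\ldots,i_k-1$ to obtain
\[
D(Q_{i_1})+2\sum_{m=i_1+1}^{i_k-1}D(Q_m)+D(Q_{i_k})>(i_k-i_1)\,h.
\]
Since $D(Q_{i_1}),D(Q_{i_k})\leq h$, this gives $\sum_{m=i_1+1}^{i_k-1}D(Q_m)>(i_k-i_1-2)h/2$, and by fact \textbf{(b)} this interior sum is at most $\avg(t)$. Hence $k\leq i_k-i_1+1\leq 2\lceil\avg(t)/h\rceil+2$. Since $t\in\IL{h}$ forces $\lceil\avg(t)/h\rceil\geq 2$, this is at most $3\lceil\avg(t)/h\rceil$, so $\load(\AlgAgree,t)=k\,h\leq 3h\lceil\avg(t)/h\rceil$.

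For part (ii) the target is $k=1$. Assume $k\geq 2$. If there is at least one queue strictly between $Q_{i_1}$ and $Q_{i_k}$, applying fact \textbf{(a)} to that intermediate queue and its successor together with \textbf{(b)} (which guarantees that both are available at $t$) immediately gives $\avg(t)>h$, contradicting $t\in\IS{h}$. The main obstacle is the edge case of two consecutive active queues ($k=2$ with $i_2=i_1+1$): a prefix of $Q_{i_1}$ may have deadlines $\leq t$ and therefore not be counted in $\avg(t)$, so the Next-Fit closing inequality $D(Q_{i_1})+\den(f)>h$ does not directly translate into $\avg(t)>h$. Overcoming this requires exploiting the tight back-to-back packing of SetStartTime to show that the expired prefix of $Q_{i_1}$ completes strictly before $t$, and then arguing that the surviving suffix of $Q_{i_1}$ starting at $J_1$---whose density is counted in $\avg(t)$---together with $\den(f)$ suffices to push the available density past $h$, yielding the contradiction and thus $\load(\AlgAgree,t)\leq h$.
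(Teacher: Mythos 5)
Your part (i) is correct and is essentially the paper's argument: the Next-Fit closing inequality between consecutive queues, the fact that every job of a queue lying strictly between the first and the last queue active at $t$ is itself available at $t$, and integer arithmetic turning $\avg(t)>h$ into a bound of $3\lceil\avg(t)/h\rceil$ on the number of active queues. The paper pairs up the interior queues to get $\avg(t)>(\lfloor k/2\rfloor-1)h$, whereas you telescope over all consecutive pairs and use $D(\queue_{i_1}),D(\queue_{i_k})\leq h$; the two computations are interchangeable. Your explicit justification of availability via agreeable deadlines and the two jobs executing at $t$ is in fact more careful than the paper, which simply writes $\avg(t)=\sum_i D_i$.

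Part (ii), however, has a genuine gap, and it cannot be closed along the lines you sketch. You correctly isolate the hard case: two consecutive active queues where $\queue_{i_1}$ also contains jobs with deadlines at most $t$, so the closing inequality $D(\queue_{i_1})+\den(f)>h$ involves density not counted in $\avg(t)$. But the repair you propose---that the surviving suffix of $\queue_{i_1}$ plus $\den(f)$ already exceeds $h$---is not available: the expired prefix may be exactly what closed the queue, and nothing bounds its share. Concretely, with $h=1$ take $J_a$ with $\rel=0,\dl=4,\w=1$ ($\den=1/4$), $J_b$ with $\rel=0,\dl=12,\w=8$ ($\den=2/3$), and $J_c$ with $\rel=4,\dl=16,\w=2$ ($\den=1/6$); deadlines are agreeable. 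InsertQueue puts $J_a,J_b$ in $\queue_1$ (total $11/12$) and opens $\queue_2$ for $J_c$; SetStartTime runs $J_a$ on $[0,1)$, $J_b$ on $[1,9)$, $J_c$ on $[4,6)$. At $t=4$ we have $\avg(4)=2/3+1/6\leq h$, so $t\in\IS{h}$, yet $\load(\AlgAgree,4)=2h$. Thus in the very case you flagged, the claimed bound $\load(\AlgAgree,t)\leq h$ fails for the algorithm as stated; your intermediate-queue argument (with the small fix of charging only the first job of the successor queue, which your fact (b) does cover) shows at most two active queues on $\IS{h}$, so what is provable there is $\load(\AlgAgree,t)\leq 2h$, i.e.\ $c'=2$ in Lemma~\ref{thm:horizontal_framework}. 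For what it is worth, the paper's own proof of (ii) is the one-line assertion that all jobs available at $t$ lie in a single queue, which the same instance defeats, so your instinct that this edge case is the crux was right; it just cannot be resolved as you hope without modifying the algorithm (e.g.\ discounting expired jobs in InsertQueue) or weakening the statement.
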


\begin{proof}
For timeslot $t$, suppose there are $k$ queues ($Q_1, Q_2, \cdots, Q_k$) which contains jobs available at $t$. 
According to our algorithm, $\load(\AlgAgree,t)\leq k\cdot h$.

Let $D_i$ be the sum of densities of jobs in $Q_i$. 
Consider $t\in\IL{h}$. 
By the InsertQueue procedure, $D_i+D_{i+1}>h$ for $1 \leq i <k-1$. 
Therefore, if $k\geq 3$, $\avg(t) = \sum_{1\leq i\leq k}D_i> \sum_{1\leq i \leq k-1} D_i \geq \lfloor\frac{k-1}{2}\rfloor \cdot h$. 
It can be shown that $k\leq 2\cdot \lceil\frac{\avg(t)}{\h}\rceil$ since $\avg(t)>1$.\footnote{%
Let $r = \frac{\avg(t)}{\h}$, since $t\in \IL{\h}$, $r>1$. Since there are $k$ queues at $t$, the total density at $t$ is at least $\lceil \frac{k-1}{2} \rceil$. Hence, $\lceil\frac{k-1}{2}\rceil < r \leq \lceil r \rceil$.
Since $\lceil\frac{k-1}{2}\rceil$ is a natural number, $\lceil \frac{k-1}{2} \rceil \leq \lceil r \rceil-1$. 
It follows that $\lceil r \rceil -1 > \frac{k-1}{2}-1$. 
Therefore, $k \leq 2\lceil r \rceil = 2\lceil \frac{\avg(t)}{\h} \rceil$ since $k$ is a natural number.}
That is, $\load(\AlgAgree,t) = k\cdot h\leq 2\cdot h\cdot\lceil \frac{\avg(t)}{\h}\rceil$ for $t\in\IL{\h}$. 
On the other hand, if $k<3$, $\load(\AlgAgree, t) \leq 2\h < 2\cdot \avg(t) \leq 2\cdot \h \cdot \lceil \frac{\avg(t)}{\h} \rceil$.

For $t\in\IS{h}$, $\avg(t)\leq \h$ by definition. That is, the sum of densities of all available jobs at $t$ is no more than $\h$. 
By the InsertQueue procedure all jobs will be in at most two adjacent queues. Hence, $\load(\AlgAgree,t)\leq 2\h$ for $t\in\IS{\h}$.
\end{proof}

By Lemma~\ref{thm:agree_load} and Lemma~\ref{thm:horizontal_framework}, we have Theorem~\ref{thm:agree_cr} by setting $c=2$ and $c'=2$.
\begin{theorem}
\label{thm:agree_cr}
For jobs with uniform height, arbitrary width and agreeable deadlines, $\AlgAgree$ is $(\frac{(8\alpha)^\alpha}{2}+2^\alpha)$-competitive.
\end{theorem}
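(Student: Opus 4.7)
The plan is to derive the competitive ratio as a direct application of Lemma~\ref{thm:horizontal_framework} combined with the load bounds established in Lemma~\ref{thm:agree_load}. Since Lemma~\ref{thm:horizontal_framework} is stated for arbitrary job sets provided we can exhibit constants $c$ and $c'$ controlling the algorithm's load on $\IL{h}$ and $\IS{h}$ respectively, the entire work of the proof is to match up constants and quote the framework.

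First I would briefly confirm that $\AlgAgree$ produces a feasible schedule. The algorithm partitions jobs into queues $\queue_1, \queue_2, \ldots$ so that each queue has total density at most $h$, and schedules each queue independently via SetStartTime. Since the input (and hence each queue) inherits agreeable deadlines, Lemma~\ref{thm:horizontal_density} applies to each $\queue_i$ individually and guarantees that every job meets its deadline while contributing load at most $h$ within its own queue. Feasibility of the overall schedule is then the union over queues.

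Next I would invoke Lemma~\ref{thm:agree_load}: part (i) gives $\load(\AlgAgree, t) \leq 3 \cdot h \cdot \lceil \avg(t)/h \rceil$ on $\IL{h}$, which matches the hypothesis of Lemma~\ref{thm:horizontal_framework} with $c = 3$; part (ii) gives $\load(\AlgAgree, t) \leq h$ on $\IS{h}$, which matches the second hypothesis with $c' = 1$. Plugging $c = 3$ and $c' = 1$ into the bound $\left(\tfrac{(4c\alpha)^\alpha}{2} + c'^\alpha\right)\cdot \cost(\Opt(\JS))$ from Lemma~\ref{thm:horizontal_framework} yields exactly $\left(\tfrac{(12\alpha)^\alpha}{2} + 1\right)\cdot \cost(\Opt(\JS))$, which is the claimed competitive ratio.

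There is no real obstacle here, since both ingredients (feasibility via Lemma~\ref{thm:horizontal_density}, load bounds via Lemma~\ref{thm:agree_load}) have already been proved. The only point worth stating explicitly is that the load bound in Lemma~\ref{thm:agree_load}(i) is phrased as $3 h \lceil \avg(t)/h \rceil$ rather than $3 \lceil \avg(t) \rceil$, but this is the precise quantity fed into the proof of Lemma~\ref{thm:horizontal_framework}, so the identification $c = 3$ is unambiguous. Hence the proof reduces to a one-line citation of the two lemmas with the specified constants.
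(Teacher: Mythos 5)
Your proposal is correct and is essentially identical to the paper's own argument: the paper proves the theorem exactly by citing Lemma~\ref{thm:agree_load} for the load bounds and plugging $c=3$, $c'=1$ into Lemma~\ref{thm:horizontal_framework} (with feasibility coming from Lemma~\ref{thm:horizontal_density}). Your remark about the $3h\lceil\avg(t)/h\rceil$ versus $3\lceil\avg(t)\rceil$ phrasing correctly resolves the paper's minor notational mismatch, since the framework's proof indeed works with the quantity $c\,h\,\lceil\avg(t)/h\rceil$.
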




\section{Exact Algorithms}
\label{sec:exact}

In this section, we propose exact algorithms and derive lower bounds on the running time of exact algorithms.
Table~\ref{tab:fpt_summary} gives a summary of our results.

\begin{table}[t]
\begin{center}
\begin{tabular}{|c|c|c|}
\hline
\bf{Width} & \bf{Height} & \bf{Time complexity} \\ \hline \hline
Arbitrary & Arbitrary & $\wmax^{2m} \cdot (\win_{\max} + 1)^{4m} \cdot O(n^2)$ \\ \hline
Arbitrary & Arbitrary & $(4m \cdot \wmax^2)^{2m} \cdot O(n^2)$ \\ \hline
Unit & Arbitrary & $2^{O(N)}$ \\ \hline
\end{tabular}
\caption{Summary of our exact algorithms.}
\label{tab:fpt_summary}
\end{center}
\end{table}

\subsection{Fixed parameter algorithms}
\label{sec:exact_fpt}

In parameterized complexity theory, the complexity of a problem is not only measured in terms of the input size, but also in terms of parameters.
The theory focuses on situations where the parameters can be assumed to be small, and the time complexity depends mainly on these small parameters.
The problems having such small parameters are captured by the concept ``fixed-parameter tractability''.
An algorithm with parameters $p_1, p_2, \ldots$ is said to be an \emph{fixed parameter algorithm} if it runs in $f(p_1, p_2, \ldots) \cdot O(g(N))$ time for any function $f$ and any polynomial function $g$, where $N$ is the size of input.
A parameterized problem is \emph{fixed-parameter tractable} if it can be solved by a fixed parameter algorithm.
In this section, we show that the general case of Grid problem, jobs with arbitrary release times, deadlines, width and height, is fixed-parameter tractable with respect to a few small parameters.

\subsubsection{Key notions}

We design two fixed parameter algorithms that are based on a dynamic programming fashion.
Roughly speaking, we divide the timeline into $k$ contiguous windows in a specific way, where each window~$\win_i$ represents a time interval $[\bound_i, \bound_{i+1})$ for $1 \leq i \leq k$.
The algorithm visits all windows accordingly from the left to the right and maintains a candidate set of schedules for the visited windows that no optimal solution is deleted from the set.
In the first fixed parameter algorithm, the parameters of the algorithm are the maximum width of jobs, the maximum number of overlapped feasible intervals and the maximum size of windows, where the latter two can be parameterized if we interpret the input job set as an ``interval graph''.
We will drop out the last parameter in the second algorithm.
All parameters do not increase necessarily as the number of jobs grows, and can be assumed to be small in practice.
For example, a width of a job is a requested amount of time to run an appliance, and the running time is usually a few hours, which is small when we make a timeslot to be an hour.
And the number of overlapped feasible intervals is at most the number of appliances.

\runtitle{Interval graph.}
A graph $G = (V,E)$ is an \emph{interval graph} if it captures the intersection relation for some set of intervals on the real line.
Formally, for each $v \in V$, we can associate $v$ to an interval $I_v$ such that $(u,v)$ is in $E$ if and only if $I_u \cap I_v \not= \emptyset$.
It has been shown in~\cite{fulkerson1965incidence,halin1982some} that an interval graph has a ``consecutive clique arrangement'', i.e., its maximal cliques can be linearly ordered in a way that for every vertex~$v$ in the graph, the maximal cliques containing $v$ occur consecutively in the linear order.
For any instance of the Grid problem, we can transform it into an interval graph $G=(V,E)$: For each job~$J$ with interval~$I(J)$, we create a vertex~$v(J) \in V$ and an edge is added between $v(J)$ and $v(J')$ if and only if
$I(J)$ intersects $I(J')$.
We can then obtain a set of maximal cliques in linear order, $C_1$, $C_2$, $\cdots$, $C_k$, by sweeping a vertical line from the left to the right, where $k$ denotes the number of maximal cliques thus obtained.
The parameter of our algorithm, the maximum number of overlapped feasible intervals, is just the maximum size of these maximal cliques.

\runtitle{Boundaries and windows.}
Based on the maximal cliques described above, we define some ``windows'' $\win_1$, $\win_2$, $\cdots$, $\win_k$ with ``boundaries'' $\bound_1$, $\bound_2$, $\cdots$, $\bound_{k+1}$ as follows.
We first give the definition of boundaries for the first algorithm.
This definition will be generalized in section~\ref{sec:exact_fpt2} for the second algorithm.
For $1 \leq i \leq k$, the $i$-th \emph{boundary} $\bound_i$ is defined as the earliest release time of jobs in clique $C_i$ but not in cliques before $C_i$, precisely, $b_i = \min \{ t \mid t = r(J) \text{ and } J \in  C_i\setminus(\cup_{s=1}^{i-1} C_s) \}$.
The rightmost boundary~$b_{k+1}$ is defined as the latest deadline among all jobs.
With the boundaries, we partition the timeslots into contiguous intervals called \emph{windows}.
The $i$-th window~$\win_i$ is defined as $[\bound_i, \bound_{i+1})$.

\runtitle{Example.}
Figure~\ref{fig:exampleIntervalGraph} is an example of a set of jobs, its corresponding interval graph and the corresponding maximal cliques.
The cliques are put in such a way that any vertex appears consecutively if there is two or more of it.
The boundaries of windows are determined by the leftmost vertex of the maximal cliques.
\begin{figure}[t]
\includegraphics[width=.48\linewidth]{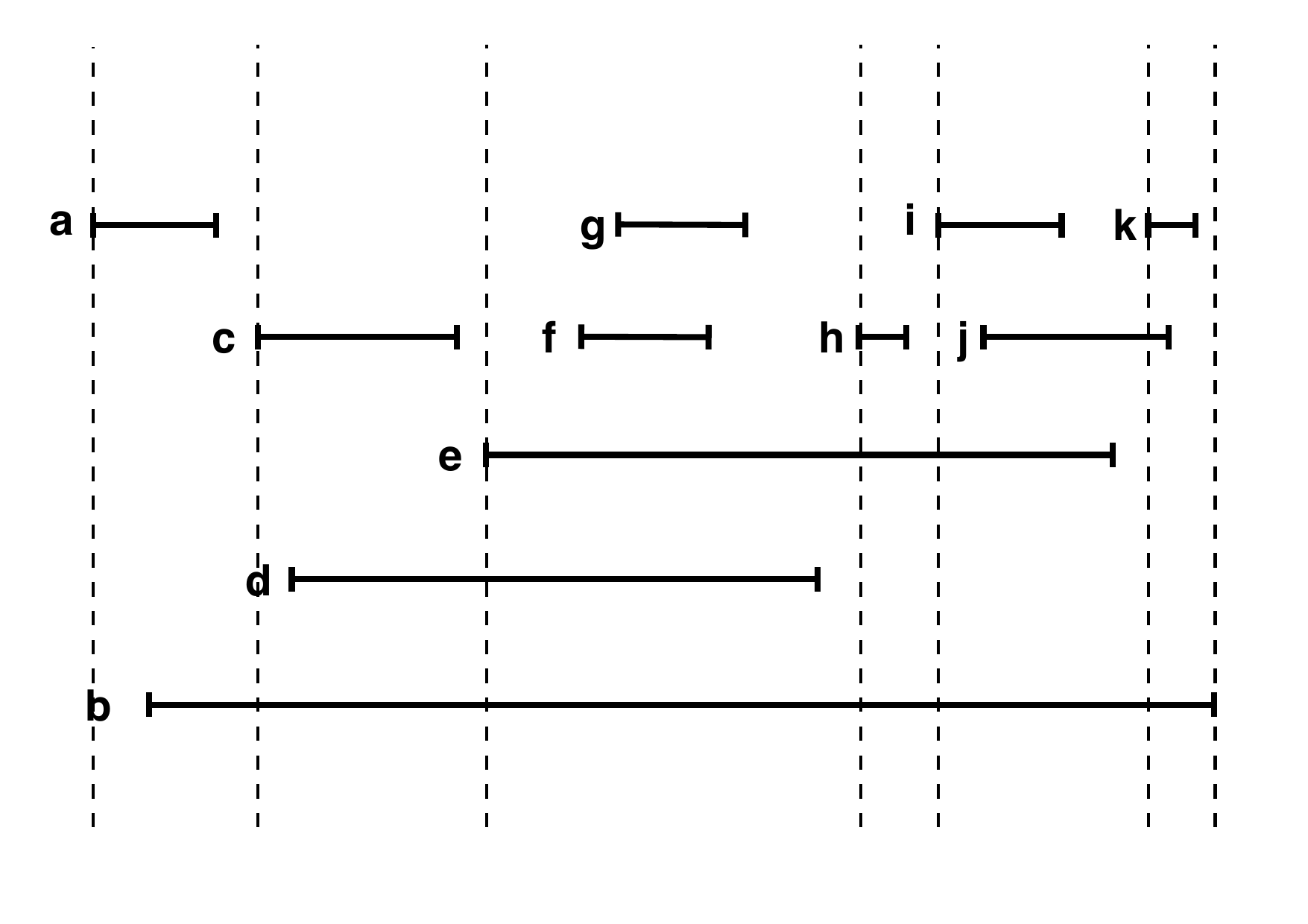}
\includegraphics[width=.48\linewidth]{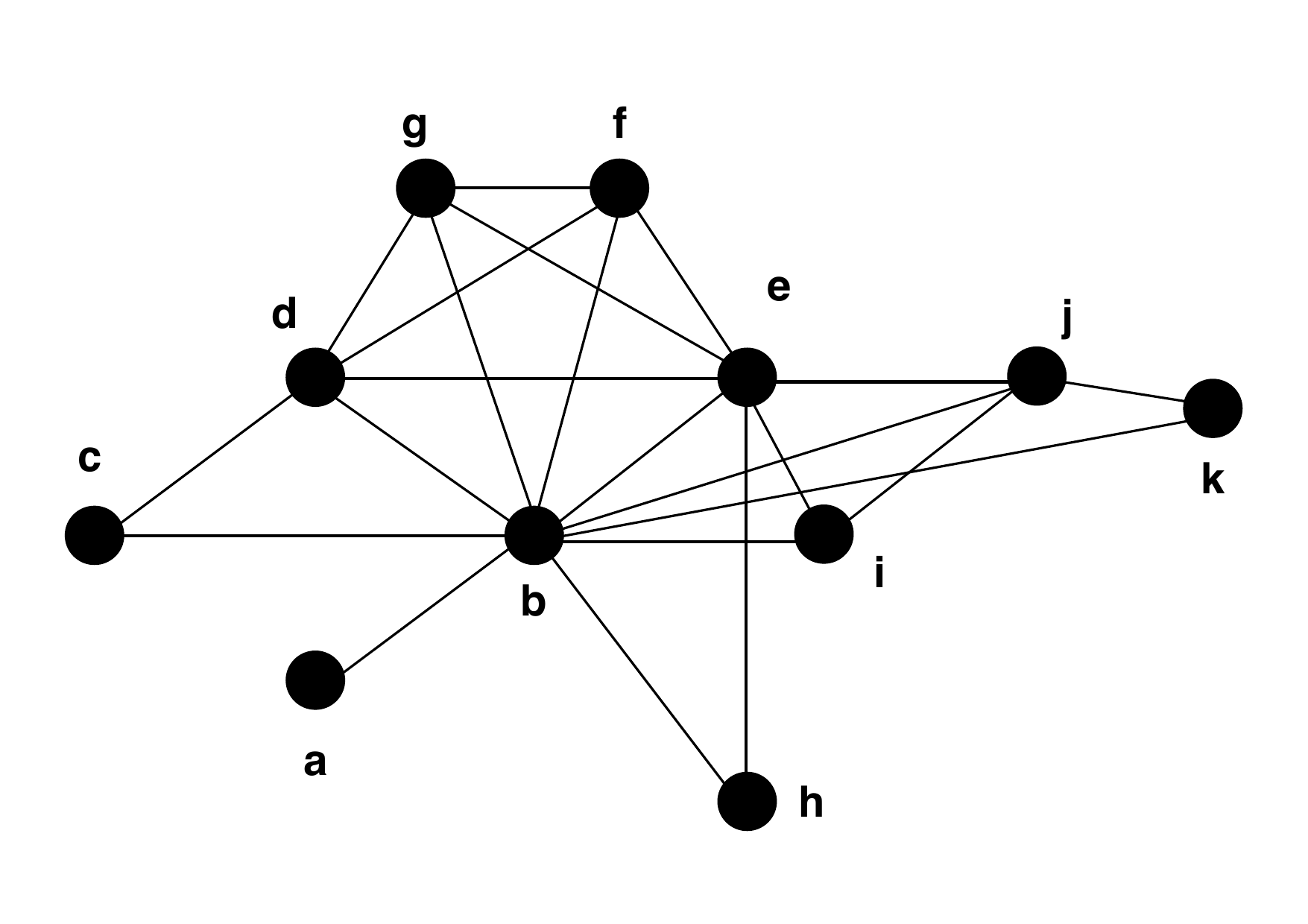}
\begin{center}
\includegraphics[clip=true,trim=0 7.5cm 0 7.5cm,width=.5\linewidth]{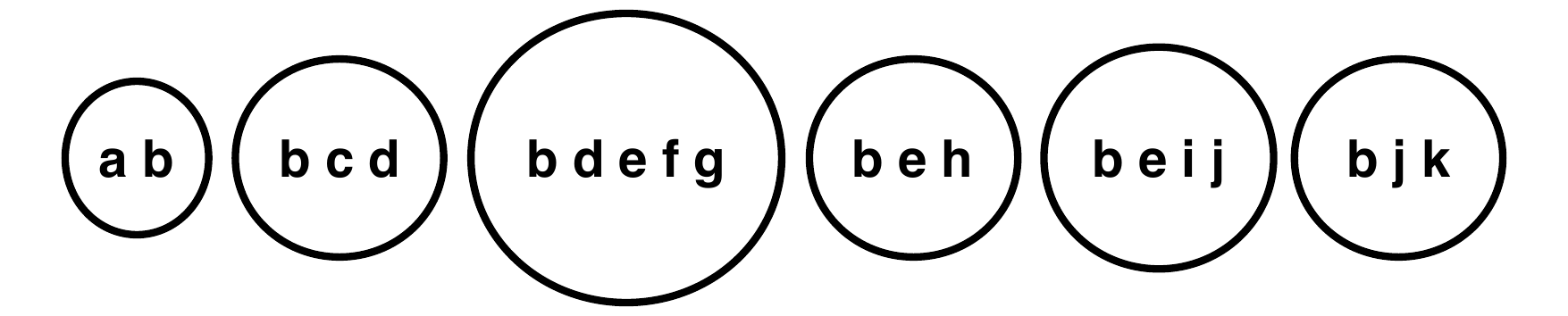}
\end{center}
\caption{The figure on the left is a set of jobs, where the horizontal line segments are the feasible time intervals of jobs and the vertical lines are boundaries of windows. The figure on the right is an interval graph of the corresponding job set. And the figure at the bottom is a set of all the maximal cliques in the interval graph.}
\label{fig:exampleIntervalGraph}
\end{figure}

\subsubsection{Framework of the algorithms}

We propose two exact algorithms, both of which runs in $k$ stages corresponding to each of the $k$ windows.
We maintain a table $\Tl$ that stores all ``valid'' configurations of jobs in all the windows that have been considered so far.
A configuration of a job corresponds to an execution segment.
And a row in the table consists of the configurations of all the jobs.
In addition, for each window~$\win_i$, we compute a table $\Tr_i$ to store all possible configurations of start and end time of jobs available in~$\win_i$.
The configurations in $\Tr_i$ would then be ``concatenated'' to some configurations in $\Tl$ that are ``compatible'' with each other.
These merged configurations will be filtered to remove those non-optimal ones.
The remaining configurations will become the new $\Tl$ for the next window.
To describe the details of the algorithm, we explain several notions below.
We denote by $\win_{\text{left}}$ the union of the windows corresponding to $\Tl$.
More formally, in the $i$-th stage, $\win_{\text{left}} = \cup_{j<i}\win_j$.
And we use $\Tr$ to denote $\Tr_i$ when the context is clear.

\runtitle{Configurations.}
A \emph{configuration}~$\config_i(J)$ of job~$J$ in window~$\win_i$ is an execution segment, denoted by $[\stime_i(J), \etime_i(J))$ contained completely by $\win_i$.
That is, $\stime_i(J)\in \{\bound_i, \bound_i +1, \cdots, \bound_{i+1}-1\} \cup \{\bound_i - 1, \bound_{i+1}\}$ and $\etime_i(J)\in \{\bound_i+1, \bound_i+2, \cdots, \bound_{i+1}\} \cup \{\bound_i, \bound_{i+1} + 1\}$.
Setting $\stime_i(J) = \bound_i - 1$ and $\etime_i(J) = \bound_i$ means $J$ is executed completely before $\win_i$.
Similarly, setting $\stime_i(J) = \bound_{i+1}$ and $\etime_i(J) = \bound_{i+1} + 1$ means $J$ starts execution after $\win_i$.
Also, setting $\stime_i(J) = \bound_i - 1$ and $\etime_i(J) = \bound_{i+1} + 1$ means $J$ starts execution before $\win_i$, crosses the whole window $\win_i$, and ends execution after $\win_i$.
We say that $\stime_i(J) \in \win_i$ or $\etime_i(J) \in \win_i$ if $\stime_i(J) \in [\bound_i, \bound_{i+1})$ or $\etime_i(J) \in (\bound_i, \bound_{i+1}]$ respectively.
And $J$ is executed in $\win_i$ if both $\stime_i(J) \in \win_i$ and $\etime_i(J) \in \win_i$ hold.
For a collection $C$ of jobs, we use $\config_i(C)$ to denote the set of configurations of all jobs in $C$, and $\configL(J)$ and $\configL(C)$ for the counterparts corresponding to $\Tl$. 
The cost of $\config_i(C)$ is the cost corresponding to the execution segments in $\config_i(C)$. That is, $\cost(\config_i(C)) = \sum_{t\in \win_i} (\sum_{J\in C:t\in \config_i(J)}\h(J))^\alpha$.

\runtitle{Validity.}
A configuration~$\config_i(J)$ is \emph{invalid} if one of the following conditions hold:
(i) $\stime_i(J) \geq \etime_i(J)$;
(ii) $\etime_i(J) > \stime_i(J) + \w(J)$ meaning that the length of execution segment of $J$ is larger than the width of $J$;
(iii) $(\etime_i(J) < \stime_i(J) + \w(J)) \land (\stime_i(J) \geq b_i) \land (\etime_i(J) \leq b_{i+1})$ meaning that the length of execution segment of $J$ is smaller than the width of $J$;
(iv) $(\stime_i(J) < \rel(J)) \land (\stime_i(J) < \bound_{i+1})$ meaning that the start time of $J$ is earlier than the release time of $J$;
(v) $(\etime_i(J) > \dl(J)) \land (\etime_i(J) > \bound_i)$ meaning that the end time of $J$ exceeds the deadline of $J$.
Note that for $\configL(J)$, the validity is defined on the boundaries~$\bound_1$ (instead of $\bound_i$) and $\bound_{i+1}$.
And for $\Tl$, $\configL(J)$ is also invalid if $\stime_{\text{left}}(J) = \bound_1 - 1$ since there is no window on the left of $\win_{\text{left}}$.
Similarly, $\config_k(J)$ is invalid if $\etime_k(J) = \bound_{k+1} + 1$.
A configuration $\config_i(C)$ is \emph{invalid} if there exists $J\in C$ such that $\config_i(J)$ is invalid.

\runtitle{Compatibility.}
For job~$J$, the two configurations $\configL(J)$ and $\config_i(J)$ are \emph{compatible} if
(i) $J$ is executed in $\win_{\text{left}}$ for $\configL(J)$, and $J$ is executed before $\win_i$ for $\config_i(J)$;
(ii) $J$ starts execution in $\win_{\text{left}}$ and ends execution after $\win_{\text{left}}$ for $\configL(J)$, and $J$ starts execution before $\win_i$ and ends execution either in $\win_i$ or after $\win_i$ for $\config_i(J)$;
(iii) $J$ is executed completely after $\win_{\text{left}}$ for $\configL(J)$, and $J$ does not start before $\win_i$ for $\config_i(J)$.

\runtitle{Concatenating configurations.}
To concatenate two configurations $\configL(J)$ and $\config_i(J)$, we create a new $\configL(J)$ by the following setting based on the three types of compatible configurations described in the previous paragraph:
for type (i), $\stime_{\text{left}}(J)$ and $\etime_{\text{left}}(J)$ leave unchanged;
for type (ii), $\stime_{\text{left}}(J)$ leaves unchanged and set $\etime_{\text{left}}(J) \gets \etime_i(J)$;
and for type (iii), set $\stime_{\text{left}}(J) \gets \stime_i(J)$ and $\etime_{\text{left}}(J) \gets \etime_i(J)$.
\emph{Concatenating} $\configL(C)$ and $\config_i(C)$ is to concatenate the configurations of each job in $C$. 
The corresponding cost is simply adding the cost of the two configurations. 

\runtitle{Uncertainty and identity.}
A configuration~$\config_i(J)$ is \emph{uncertain} if $\etime_i(J) = \bound_{i+1} + 1$ meaning that the end time of $J$ is not determined yet, and we are not sure at the $i$-th stage whether $\config_i(J)$ will be valid after concatenating $\config_i(J)$ and $\config_{i+1}(J)$.
Two configurations $\config_i(C)$ and $\config'_i(C)$ are \emph{identical} if
(i) $\config_i(J)$ is uncertain if and only if $\config'_i(J)$ is uncertain for all job $J \in C$; and
(ii) the start time of $\config_i(J)$ is equal to the start time of $\config'_i(J)$ for all uncertain configuration $\config_i(J)$ and $J \in C$.
That is, we only consider the differences among the start times of those jobs with uncertain configurations when we distinguish two configurations of a set of jobs.

\subsubsection{An algorithm with three parameters}

\runtitle{Algorithm $\AlgE$.}
The algorithm consists of three components: ListConfigurations, ConcatenateTables and FilterTable.
In the algorithm, we first transform the input job set~$\JS$ to an interval graph, and obtain the maximal cliques~$C_i$ for $1 \leq i \leq k$ and the corresponding windows~$\win_i$.
We start with $\Tl$ containing the only configuration, which sets $\stime_0(J) = \bound_1$ and $\etime_0(J) = \bound_1 + 1$ for all jobs $J$.
That is, the configuration treats all the jobs to be not yet executed.
Then we visit the windows from the left to the right.

\noindent
\textbf{\emph{ListConfigurations:}}
For window~$\win_i$ and jobs in $C_i$, we construct $\Tr$ storing all configurations of $J \in C_i$.
We enumerate all $\stime_i(J) \in \win_i$ and $\etime_i(J) \in \win_i$ for each job $J \in C_i$, list all the combinations of all the jobs~$J$ with all of its start times and end times, and store the results in $\Tr$ in the way that one row is for one configuration $\config_i(C_i)$.
In another words, $\Tr$ stores all the combinations of execution segments in $\win_i$ for all jobs $J \in C_i$.
Note that the jobs with release time later than $\win_i$ are considered to execute after $\win_i$ and the jobs with deadline earlier than $\win_i$ are considered to execute before $\win_i$.
For each configuration~$\config_i(C_i)$, we also store its cost contribution $\cost(\config_i(C_i))$ together.
We also check each of the configurations and delete those invalid ones.

\noindent
\textbf{\emph{ConcatenateTables:}}
We then concatenate compatible configurations in $\Tl$ and $\Tr$. 
The resulting table is the new $\Tl$.
More specifically, for each configuration~$\configL(C)$ in $\Tl$ and each configuration~$\configR(C)$ in $\Tr$, we concatenate $\configL(C)$ and $\configR(C)$ if they are compatible, and store the result to a new row in $\Tl$.
We also check each of the configurations in the new $\Tl$ and delete those invalid ones.

\noindent
\textbf{\emph{FilterTable:}}
After concatenation, we filter non-optimal configurations.
We classify all the configurations in $\Tl$ into groups such that the configurations in a group are identical and no two configurations from different groups are identical.
For each group, we only leave the configuration with the lowest cost (choosing anyone to break tie if any) and remove the others in the group.
In the current $\Tl$, no two configurations are identical.

After processing all windows, the only configuration in the final $\Tl$ is returned as the solution. Algorithm~\ref{alg:fpt} is the pseudocode of this algorithm.

\begin{algorithm}
\caption{The fixed parameter algorithm $\AlgE$}
\label{alg:fpt}
\begin{algorithmic}
\State \textbf{Input:} a set of job $\JS$
\State \textbf{Output:} an optimal configuration of $\JS$
\State $\{(\win_i, C_i)\}_{i=1}^k \gets$ the windows and their corresponding cliques of $\JS$
\State $\Tl \gets$ a configuration that sets all jobs $J \in \JS$ to be not yet executed
\For{$i$ from $1$ to $k$}
	\State $\Tr \gets$ ListConfigurations($\win_i$, $C_i$)
	\State $\Tl \gets$ ConcatenateTables($\Tl$, $\Tr$)
	\State $\Tl \gets$ FilterTable($\Tl$)
\EndFor
\Return any configuration in $\Tl$
\end{algorithmic}
\end{algorithm}

\begin{lemma}
Algorithm~$\AlgE$ outputs an optimal solution.
\label{lm:alg_e_opt}
\end{lemma}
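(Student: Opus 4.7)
The plan is to prove correctness by a dynamic-programming style induction on the window index~$i$, establishing the loop invariant that after processing window~$\win_i$ the table~$\Tl$ contains, for each equivalence class of configurations (under the identity relation restricted to what still matters for future windows), a minimum-cost valid partial schedule representative of that class. Optimality of the final output then follows because after processing~$\win_k$ no configuration can be uncertain, so every configuration in~$\Tl$ describes a complete feasible schedule of~$\JS$ and all of them lie in a single equivalence class; the \textsc{FilterTable} step therefore leaves exactly the minimum-cost one.

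First I would argue that the enumeration in \textsc{ListConfigurations} together with \textsc{ConcatenateTables} is exhaustive: every feasible schedule $\sch$ of~$\JS$ restricted to $\win_{\text{left}} \cup \win_i$ decomposes, for each job $J$ whose feasible interval meets~$\win_i$, into a configuration $\configL(J)$ on $\win_{\text{left}}$ and $\configR(J)$ on $\win_i$ that are compatible in one of the three cases listed in the definition of compatibility; jobs whose feasible intervals lie entirely outside $\win_{\text{left}} \cup \win_i$ are handled by the ``not yet executed'' / ``already executed'' placeholder configurations. Since $\Tr$ enumerates every start/end time pair inside~$\win_i$ and the initial~$\Tl$ starts from the placeholder, an easy induction shows that before filtering, $\Tl$ contains a representative of every valid partial schedule on $\win_{\text{left}}$.

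Next I would justify \textsc{FilterTable}. The crucial observation is that two identical configurations $\configL(C)$ and $\configL'(C)$ have, by definition, the same set of uncertain jobs and the same start times for those jobs; all other jobs have already been entirely scheduled in $\win_{\text{left}}$. Therefore any valid continuation (i.e., any sequence of configurations on $\win_{i+1},\ldots,\win_k$ that concatenates validly to $\configL$) concatenates equally validly to $\configL'$, and conversely. Moreover, the total cost of a complete schedule splits as the already-accrued cost on $\win_{\text{left}}$ plus a cost on the remaining windows that depends only on the identity class. Hence replacing the more expensive of two identical configurations by the cheaper one cannot discard the minimum-cost completion; by induction on the stage, keeping only the cheapest representative in each class preserves at least one optimal schedule.

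The main obstacle, and the point I would take most care over, is the bookkeeping around jobs that straddle window boundaries, because the notions of validity, compatibility and identity are all tailored to them. I would explicitly verify, case by case on the three compatibility types, that the concatenated $\configL(J)$ produced by the algorithm coincides with the restriction of the true schedule to $\win_{\text{left}} \cup \win_i$, and that the cost assigned by the algorithm equals $\cost(\sch, \win_{\text{left}} \cup \win_i)$. Once this is done, the induction closes: in the base case $i=0$ the initial $\Tl$ trivially satisfies the invariant, the inductive step uses the two claims above, and at $i=k$ the final $\Tl$ contains an optimal schedule, from which the lemma follows.
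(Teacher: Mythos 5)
Your proposal is correct and follows essentially the same argument as the paper: both rest on the facts that the enumeration is exhaustive and that two identical configurations admit exactly the same valid continuations with the same future cost, so discarding the more expensive one (and invalid ones) never loses an optimal schedule. The only difference is presentational — you package this as an explicit loop invariant maintained by induction over the windows, whereas the paper argues by contradiction that a deleted configuration cannot be part of an optimal solution; your formulation is in fact a bit more careful about exhaustiveness and tie-breaking, but it is not a different route.
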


\begin{proof}
In each stage, we list all possible configurations.
A configuration is deleted only when it is invalid or it is identical to another configuration with lower cost.
Hence, an invalid configuration cannot be optimal.
So we focus on the other case.
Given two identical configurations $\config_{\text{left}}(C)$ and $\config'_{\text{left}}(C)$ with $\cost(\config_{\text{left}}(C)) < \cost(\config'_{\text{left}}(C))$, we show that $\config'_{\text{left}}(C)$ cannot be optimal.
Suppose there is an optimal solution $\config^*$ containing $\config'_{\text{left}}(C)$, which means each execution segment $\config'_{\text{left}}(J)$ in $\config'_{\text{left}}(C)$ is completely contained by the corresponding execution interval of $J$ in $\config^*$.
Since $\config_{\text{left}}(C)$ and $\config'_{\text{left}}(C)$ are identical, the start times of $J$ are the same in the two configurations for all uncertain jobs $J$.
In $\win_{\text{left}}$, this means the uncertain jobs do not make the costs of the two configurations to be different, and the jobs $\JS_c$ that are not uncertain do.
Note that $\JS_c$ is consisted of the jobs with their end times being determined.
This means we can replace the configurations of $\JS_c$ in $\config'_{\text{left}}(C)$ by the configurations of $\JS_c$ in $\config_{\text{left}}(C)$ and this action will not affect the procedures in the algorithm thereafter.
However, this also results in a solution of lower cost and contradicts the assumption that $\config^*$ is optimal.
Thus $\config'_{\text{left}}(C)$ cannot be optimal.
Therefore, none of the deleted configuration can be part of an optimal schedule.
That is, no optimal schedule would be removed through out the whole process.
\end{proof}

\begin{theorem}
Algorithm~$\AlgE$ computes an optimal solution in
$O(k \cdot \wmax^{2m} \cdot (\win_{\max} + 1)^{4m} \cdot n)$ time,
where $n$ is the number of jobs,
$\wmax$ is the maximum width of jobs,
$m$ is the maximum size of cliques,
$\win_{\max}$ is the maximum length of windows,
and $k$ is the number of windows.
\label{thm:fpt1}
\end{theorem}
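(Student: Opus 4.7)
The plan is straightforward: correctness is already settled by Lemma~\ref{lm:alg_e_opt}, so the task reduces to bounding the work done in each of the $k$ stages and showing that ConcatenateTables dominates. I would carry out three counting bounds in sequence, then multiply by $k$.

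First, I would bound $|\Tr_i|$ produced by ListConfigurations. A single configuration $\config_i(J)$ is determined by the pair $(\stime_i(J), \etime_i(J))$; by the definition of configurations each coordinate ranges over the $\win_{\max}$ timeslots inside $\win_i$ together with the two sentinel values ($\bound_i-1$ or $\bound_{i+1}$ on one side, $\bound_i$ or $\bound_{i+1}+1$ on the other), giving at most $(\win_{\max}+2)^2$ options per job. Only jobs in $C_i$ need a fresh configuration, and $|C_i|\leq m$, so $|\Tr_i|\leq(\win_{\max}+1)^{2m}$ after absorbing constants, and constructing it plus scanning away invalid rows runs in time linear in $|\Tr_i|$ times $m$.

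Second, I would bound the size of $\Tl$ maintained after FilterTable. Here I would invoke the identity relation: two configurations are collapsed unless they differ on the set of uncertain jobs or on the start time of some uncertain job. The jobs that can possibly be uncertain at the boundary $\bound_{i+1}$ are exactly those whose feasible interval straddles $\bound_{i+1}$, and these all lie in a single maximal clique by the consecutive clique arrangement of the interval graph, hence there are at most $m$ such jobs. For each such job the start time is constrained to an interval of length less than $\w(J)\leq\wmax$ (if it is crossing) or to a single sentinel value (if it has not started), so the number of identity classes, and thus $|\Tl|$ after filtering, is at most $\wmax^{2m}$; the loose exponent $2m$ comfortably absorbs the choice of uncertainty pattern together with the start-time choice. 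I will observe separately that FilterTable itself costs only $O(|\Tl|\cdot|\Tr_i|)$ using hashing by the identity key, which is dominated by the concatenation cost below.

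Third, I would analyze ConcatenateTables, which is the dominant step per stage. It iterates over all pairs $(\configL(C),\config_i(C))$ with $\configL\in\Tl$ and $\config_i\in\Tr_i$; for each pair it checks compatibility and produces the concatenation in $O(n)$ time, since each of the $n$ jobs must be inspected to verify one of the three compatibility cases and to update start/end times. Multiplying the bounds, one stage costs $O(|\Tl|\cdot|\Tr_i|\cdot n)=O(\wmax^{2m}\cdot(\win_{\max}+1)^{4m}\cdot n)$, where the doubled exponent on $(\win_{\max}+1)$ reflects that $|\Tl|$ at the start of a stage already carries a factor of $(\win_{\max}+1)^{2m}$ from the previous ConcatenateTables before being refiltered. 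Summing over the $k$ stages gives the claimed bound. The main obstacle is the identity-class counting in step two; once one is convinced that only the $\leq m$ boundary-crossing jobs contribute to the identity key and each brings only $O(\wmax)$ distinguishable states, the rest is bookkeeping.
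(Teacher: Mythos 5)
Your proposal is correct in substance and its counting matches the paper's key steps: correctness is delegated to Lemma~\ref{lm:alg_e_opt} (exactly as the paper does), $|\Tr| \leq (\win_{\max}+1)^{2m}$ comes from the at most $m$ jobs of $C_i$ with $O(\win_{\max})$ choices of start and end time each, and the filtered $\Tl$ is bounded through the identity relation by noting that only the at most $m$ boundary-straddling jobs can vary and each contributes at most $\wmax$ distinguishable start times (the paper states $\wmax^{m}$; your $\wmax^{2m}$ is a harmless relaxation). Where you genuinely diverge is in which component dominates, and hence where the $(\win_{\max}+1)^{4m}$ factor comes from. In the paper, FilterTable is charged quadratic time: it groups the concatenated table of size $\wmax^{m}(\win_{\max}+1)^{2m}$ by pairwise comparison, and squaring that size is precisely what produces $\wmax^{2m}(\win_{\max}+1)^{4m}$, while ConcatenateTables costs only $O(\wmax^{m}(\win_{\max}+1)^{2m}\cdot n)$ per stage. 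You instead assume a hash-based, essentially linear filter and let concatenation dominate; that is a legitimate (indeed stronger) route, and with your own step-two estimate it gives the sharper per-stage bound $O(\wmax^{2m}(\win_{\max}+1)^{2m}\cdot n)$, which still implies the stated theorem since it is an upper bound. However, the sentence you use to recover the $4m$ exponent --- that $\Tl$ enters a stage still carrying a $(\win_{\max}+1)^{2m}$ factor ``before being refiltered'' --- misreads Algorithm~$\AlgE$: FilterTable is the last operation of every iteration, so the table fed to ConcatenateTables is always the filtered one, and the claim also contradicts your own bound $|\Tl|\leq \wmax^{2m}$ after filtering. Either drop that sentence and simply observe that your sharper per-stage bound implies the claimed $O(k\cdot\wmax^{2m}(\win_{\max}+1)^{4m}\cdot n)$, or follow the paper and attribute the $4m$ exponent to a quadratic-time implementation of FilterTable; as written, the bound you reach is right but the stated reason for it is not.
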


\begin{proof}
We first compute the time complexities for the three components of the algorithm, and then compute the total time complexity.
For ListConfigurations, there are at most $(\win_{\max} + 1)^{2m}$ configurations in the outputted table $\Tr$, since there are at most $\win_{\max} + 1$ possible start times and end times respectively and at most $m$ jobs that should be considered in the current window.
For each configuration, it takes $O(n)$ time for construction and validity checking.
It also takes $O(n \win_{\max})$ to compute the cost of a configuration.
So, the time complexity for ListConfigurations is
\[
O((\win_{\max} + 1)^{2m} \cdot n \win_{\max})
=
O((\win_{\max} + 1)^{2m+1} \cdot n)
\enspace .
\]

Before computing the time complexities of the other components, we focus on the number of configurations of $\Tl$ at the end of each iteration in the algorithm.
Since $\Tl$ is filtered to have no identical configurations, the number of configurations can be upper bounded.
This number depends on the number of different start times of uncertain jobs.
There are at most $m$ uncertain jobs, and for each such job, the number of start times is at most $\wmax$.
Note that the end times of these jobs are all set to be later than the current window and will not affect the number of configurations.
So the number of configurations of $\Tl$ at the end of each iteration is at most $\wmax^m$.

For ConcatenateTables, there are at most $\wmax^m \cdot (\win_{\max} + 1)^{2m}$ configurations in the outputted table $\Tl$.
This is because for each configuration in the input $\Tl$, we need to compare it with all the configurations in $\Tr$ for compatibility checking.
For each configuration, it takes $O(n)$ time for compatibility checking, concatenation and validity checking.
Thus the time complexity for ConcatenateTables is $O(\wmax^m \cdot (\win_{\max} + 1)^{2m} \cdot n)$.

For FilterTable, the number of configurations in the outputted table~$\Tl$ is at most the number of configurations outputted by ConcatenateTables.
Also, the number of groups is at most its number of configurations.
Thus it takes
\[
O([\wmax^m \cdot (\win_{\max} + 1)^{2m}]^2 \cdot n)
=
O(\wmax^{2m} \cdot (\win_{\max} + 1)^{4m} \cdot n)
\]
time for classification.
And it takes $O(\wmax^m \cdot (\win_{\max} + 1)^{2m})$ time for deletion.
So the time complexity for FilterTable is $O(\wmax^{2m} \cdot (\win_{\max} + 1)^{4m} \cdot n)$.
Since there are $k$ iterations, the total time complexity is $O(k \cdot \wmax^{2m} \cdot (\win_{\max} + 1)^{4m} \cdot n)$.
\end{proof}

In the worst case, there are at most $O(n)$ windows. So algorithm~$\AlgE$ also runs in $f(\wmax, m, \win_{\max}) \cdot O(n^2)$ time where $f(\wmax, m, \win_{\max}) = \wmax^{2m} \cdot (\win_{\max} + 1)^{4m}$.

\begin{corollary}
Grid problem is fixed parameter tractable with respect to
the maximum width of jobs,
the maximum number of overlapped feasible intervals,
and the maximum length of windows.
\end{corollary}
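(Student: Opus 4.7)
The plan is to observe that this corollary is essentially an immediate packaging of Theorem~\ref{thm:fpt1} together with the definition of fixed-parameter tractability stated at the start of Section~\ref{sec:exact_fpt}. I would not introduce any new algorithmic machinery; the work has already been done.

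First I would recall that an algorithm is a fixed parameter algorithm with respect to parameters $p_1,p_2,\ldots$ if its running time has the form $f(p_1,p_2,\ldots)\cdot O(g(N))$ for some function $f$ and some polynomial $g$, where $N$ is the input size. In our setting, the relevant input size is the number of jobs $n$ (together with the bit-lengths of the release times, deadlines, widths and heights, which only affect lower-order factors in the arithmetic), and the candidate parameters are $\wmax$, $m$ (the maximum number of overlapped feasible intervals, equivalently the maximum clique size in the associated interval graph), and $\win_{\max}$.

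Next I would invoke Theorem~\ref{thm:fpt1}, which guarantees that Algorithm $\AlgE$ returns an optimal schedule in time $O(k\cdot \wmax^{2m}\cdot(\win_{\max}+1)^{4m}\cdot n)$, where $k$ is the number of windows produced by the consecutive maximal-clique arrangement of the interval graph of $\JS$. The key observation I would then make explicit is that the number of windows is bounded by the number of jobs, i.e.\ $k\le n$: each window boundary $\bound_i$ is by construction the release time of some job not appearing in any earlier clique, so distinct boundaries correspond to distinct jobs, giving at most $n$ boundaries and hence at most $n$ windows. This is precisely the remark already made immediately after Theorem~\ref{thm:fpt1}.

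Combining these two facts, the running time simplifies to
\[
f(\wmax, m, \win_{\max}) \cdot O(n^2), \qquad f(\wmax, m, \win_{\max}) = \wmax^{2m}\cdot(\win_{\max}+1)^{4m},
\]
which is exactly the form required by the definition of a fixed parameter algorithm with parameters $\wmax$, $m$, and $\win_{\max}$; the polynomial $g$ is $g(n)=n^2$, and $f$ depends only on the parameters. The corollary follows. There is no real obstacle here: the only step that is not already literal content of Theorem~\ref{thm:fpt1} is the trivial bound $k\le n$, and that bound is immediate from how the boundaries $\bound_i$ are defined from the leftmost jobs of each maximal clique.
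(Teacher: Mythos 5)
Your argument is exactly the paper's: Theorem~\ref{thm:fpt1} gives the running time $O(k\cdot\wmax^{2m}\cdot(\win_{\max}+1)^{4m}\cdot n)$, and the remark that $k\le n$ (each boundary being the release time of a job new to its clique) turns this into $\wmax^{2m}(\win_{\max}+1)^{4m}\cdot O(n^2)$, matching the definition of fixed-parameter tractability in the parameters $\wmax$, $m$, and $\win_{\max}$. The proposal is correct and coincides with the paper's reasoning.
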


\subsubsection{An algorithm with two parameters}
\label{sec:exact_fpt2}

This section describes how to drop out the parameter $\win_{\max}$ in the previous algorithm by generalizing the definitions of windows and boundaries.

At the beginning of Algorithm~$\AlgE$, we transform a set of jobs to its corresponding interval graph and obtain a sequence of windows by the set of maximal cliques in the interval graph.
We require in the algorithm that all the cliques should be maximal.
However, the algorithm is still optimal and has parameterized bound of time complexity if we divide a maximal clique into multiple non-maximal cliques in a specific way.
Given a maximal clique $C_i$ and its corresponding window $\win_i$, we divide $\win_i$ into a set of contiguous windows $\win_{i_1}, \win_{i_2}, \ldots$ such that $\win_i = \cup_j \win_{i_j}$.
Note that the set of jobs $C_{i_j}$ corresponding to $W_{i_j}$ is a clique in the interval graph since $C_i$ is a clique and $C_{i_j} \subseteq C_i$.
In this way, the number of jobs in the window $W_{i_j}$ is still at most $m$.
Furthermore, since this window division does not affect the proof of lemma~\ref{lm:alg_e_opt}, the algorithm is still optimal.
Thus we have the following observation.

\begin{observation}
Algorithm~$\AlgE$ outputs an optimal solution if it receives a set of contiguous windows containing all the jobs such that each window represents a clique (not necessarily maximal) in the interval graph of the input jobs.
And we have the number of jobs in each window is at most the maximum number of overlapped feasible intervals.
\label{obs:alg_e_general}
\end{observation}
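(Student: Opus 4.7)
The plan is to verify that the algorithm's correctness proof (Lemma~\ref{lm:alg_e_opt}) goes through unchanged when maximal cliques are replaced by arbitrary cliques arising from a refinement of the window partition. First I would check the structural invariants the algorithm relies on. The construction of $\win_i$ only uses the windows to (a) enumerate configurations of jobs in the associated clique $C_i$ via ListConfigurations, and (b) concatenate compatible configurations from $\Tl$ with those of $\Tr_i$. Neither operation requires $C_i$ to be a maximal clique; it only requires that every job executed within $\win_i$ has its feasible interval intersecting $\win_i$, which is automatic because $C_i$ contains exactly the jobs whose feasible intervals cover $\win_i$.

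Next I would argue that subdividing a maximal window $\win_i$ into contiguous sub-windows $\win_{i_1},\win_{i_2},\ldots$ with $\win_i=\bigcup_j \win_{i_j}$ preserves the inductive invariant that $\Tl$, after processing the first few sub-windows, contains every schedule of minimum cost consistent with the execution segments committed so far. The key observation is that for a job $J$ whose execution crosses a sub-window boundary, the uncertainty mechanism (configurations with $\etime(J)=\bound+1$) already defers finalization of $\etime(J)$ until a later stage, and the compatibility definition allows $J$ to span arbitrarily many subsequent windows. Thus splitting a window merely interleaves additional concatenation/filtering steps, which cannot introduce spurious configurations or delete valid ones.

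For the filtering step I would re-examine the identicality argument in the proof of Lemma~\ref{lm:alg_e_opt}. That argument shows that between two identical configurations of $\configL(C)$, the one with larger cost can be replaced by the cheaper one in any optimal extension, because identicality guarantees they agree on the committed (non-uncertain) portion of the schedule, and the uncertain portion is decided in the same way. Nothing in this argument uses that the current window is a maximal clique; it only uses that the configuration of $\configL(C)$ records enough information about jobs still in progress. Since a sub-clique $C_{i_j}\subseteq C_i$ still has at most $m$ members, the number of uncertain jobs at any stage remains bounded by $m$, so the parameterization is preserved.

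The main obstacle I anticipate is making the second sentence of the observation rigorous: one must confirm that, under the sub-clique interpretation, every window still contains only jobs from a clique of size at most $m$. I would handle this by noting that if $C_{i_j}$ is taken to be the set of jobs whose feasible interval contains $\win_{i_j}$, then $C_{i_j}$ is a subset of some maximal clique in the interval graph (namely any maximal clique containing the vertices corresponding to $C_{i_j}$), and is therefore itself a clique of size at most $m$. Combining this with the refined correctness argument above yields the observation.
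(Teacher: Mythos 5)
Your proposal is correct and follows essentially the same route as the paper: the paper likewise observes that each sub-window's job set is a subset of a maximal clique (hence a clique of size at most $m$) and that the window subdivision does not affect the correctness argument of Lemma~\ref{lm:alg_e_opt}. Your write-up simply makes explicit the points the paper leaves implicit, in particular that the uncertainty/compatibility mechanism handles jobs crossing sub-window boundaries and that the filtering argument never uses maximality.
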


To drop out the parameter $\win_{\max}$ in the previous algorithm, we divide windows into smaller ones such that the number of configurations in a window can be bounded by $\wmax$ and $m$.
In the new algorithm, we set the locations of boundaries at the release times and deadlines of all the jobs and construct the windows bases on these boundaries.
In this way, there is no job being released or attaining its deadline in the middle of a window, and all the jobs in the window can be put anywhere in the window.
Thus the number of used timeslots is at most $m \cdot \wmax + 2 (\wmax - 1)$.
This is because in the worst case, all jobs in a window are scheduled such that no job overlaps to another and these jobs consume at most $m \cdot \wmax$ timeslots.
In addition, we need to consider the cases that a job's start time is earlier than the window or its deadline is later than the window.
Both cases consume at most $\wmax - 1$ timeslots respectively.
Note that this window division results in a set of windows that their sizes are smaller than their original counterparts, and thus observation~\ref{obs:alg_e_general} can be applied.
Based on this new window division, we have the following algorithm.

\runtitle{Algorithm~$\AlgEPlus$.}
This algorithm is similar to algorithm~$\AlgE$ except the definitions of boundaries and the component ListConfigurations.
Given a set of jobs $\JS$, the algorithm uses the set of boundaries $\{\rel(J) \mid J \in \JS\} \cup \{\dl(J) \mid J \in \JS\}$ to construct the windows and obtain the corresponding cliques.
Let $k$ denotes by the number of windows.
There are $k$ stages for the algorithm.
At the $i$-th stage, the algorithm runs ListConfigurations, ConcatenateTables and FilterTable accordingly as algorithm~\ref{alg:fpt} does.
It finally outputs the only configuration in $\Tl$.
For the component ListConfigurations, we only consider to schedule jobs on the timeslots used instead of enumerating all possibilities of start times and end times.
The algorithm tries all $m \cdot \wmax$ timeslots (the worst case described in the previous paragraph) as the start time of a job, and also the $2 (\wmax - 1)$ schedules that a job is partially executed in the window.
In addition, the component shall includes the cases that either a job is completely executed before the window, it is completely executed after the window, or it crosses the window.

\begin{theorem}
Algorithm~$\AlgEPlus$ computes an optimal solution in
$f(\wmax, m) \cdot O(n^2)$ time,
where $n$ is the number of jobs,
$\wmax$ is the maximum width of jobs,
$m$ is the maximum size of cliques,
and $f(\wmax, m) = (4 m \cdot \wmax^2)^{2m}$.
\end{theorem}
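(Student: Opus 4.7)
The plan is to split the proof into two parts: (a) argue that $\AlgEPlus$ still returns an optimal schedule, by reducing to $\AlgE$ run on the finer window partition; and (b) bound the running time of each of the three components (ListConfigurations, ConcatenateTables, FilterTable) under the new definition of windows, and then multiply by the number of windows.

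For optimality, I would first observe that the boundary set used by $\AlgEPlus$, namely $\{\rel(J)\} \cup \{\dl(J)\}$, is a refinement of the boundary set used by $\AlgE$. Each new window $\win_i$ is contained in some maximal-clique window of $\AlgE$, and the set of jobs available in $\win_i$ is a subset of the corresponding maximal clique, hence still a clique of size at most $m$ in the interval graph. Therefore Observation~\ref{obs:alg_e_general} applies, and optimality follows as soon as one checks that the pruned ListConfigurations does not discard any execution segment that could appear in an optimal schedule. This last point uses the crucial property that no release time or deadline falls strictly inside $\win_i$: within $\win_i$, every available job may be freely shifted, so in any optimal schedule the jobs of $C_i$ can be rearranged (preserving feasibility and cost) so that each job either (i) is placed entirely inside $\win_i$ starting at one of at most $m\cdot\wmax$ canonical positions determined by the left boundary of $\win_i$ and the widths of the at most $m$ jobs, (ii) partially overlaps the window boundary in one of $2(\wmax-1)$ positions, or (iii) falls into one of the three fixed cases \emph{entirely before}, \emph{entirely after}, or \emph{crossing}. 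This is exactly what ListConfigurations enumerates.

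For the running time, I would bound the table sizes. ListConfigurations produces $\Tr$ with at most $(m\wmax + 2(\wmax-1) + 3)^m \leq (4m\wmax)^m$ rows, since each of the at most $m$ jobs has at most $m\wmax + 2\wmax + 1$ candidate configurations; every row is constructed and validity-checked in $O(n)$ time. For $\Tl$, the same argument as in Theorem~\ref{thm:fpt1} applies: after FilterTable no two rows are identical, and identity only depends on the start times of the at most $m$ uncertain jobs, each having at most $\wmax$ possible values, so $|\Tl| \leq \wmax^m$. Therefore ConcatenateTables outputs at most $\wmax^m \cdot (4m\wmax)^m = (4m\wmax^2)^m$ rows in $O\bigl((4m\wmax^2)^m \cdot n\bigr)$ time. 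FilterTable classifies these rows by pairwise comparisons in $O\bigl((4m\wmax^2)^{2m}\cdot n\bigr)$ time, which dominates. Since the number of windows $k$ is at most $2n$, the overall running time is $(4m\wmax^2)^{2m} \cdot O(n^2)$, as claimed.

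The main obstacle is the exchange argument used in the optimality part: one must verify that any optimal schedule can be transformed into one whose restriction to each $\win_i$ uses only the canonical $m\wmax + 2(\wmax-1) + 3$ configurations per job, without increasing the cost. The essential ingredient is that release times and deadlines only occur at window boundaries, so shifts of jobs that stay inside $\win_i$ never violate feasibility, and overlaps that cross a boundary can be handled by the $2(\wmax-1)$ partial-overlap positions. Once this exchange is in place, everything else is a routine counting argument over table sizes.
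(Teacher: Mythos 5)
Your proposal is correct and follows essentially the same route as the paper: the running-time analysis (at most $(m\wmax+2(\wmax-1)+3)^m\leq(4m\wmax)^m$ rows in $\Tr$, at most $\wmax^m$ rows in the filtered $\Tl$, FilterTable dominating at $O((4m\wmax^2)^{2m}n)$ per window, and $k=O(n)$ windows) is exactly the paper's counting, while your optimality/exchange argument is the same idea the paper establishes just before the theorem via Observation~\ref{obs:alg_e_general} and the canonical-position discussion for ListConfigurations. No substantive differences to report.
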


\begin{proof}
As in the proof of theorem~\ref{thm:fpt1}, we compute the running time of the three components and then the total time complexity.
For the component ListConfigurations, there are at most $(m \cdot \wmax + 2 (\wmax - 1) + 3)^m$ outputted configurations, since there are at most $m \cdot \wmax + 2 (\wmax - 1) + 3$ schedules for a job (see the description in the previous paragraph) and at most $m$ jobs in a window.
It takes $O(n (m \cdot \wmax + 2 (\wmax - 1))) \leq O(n \cdot m \cdot \wmax)$ time to compute the cost for each configuration.
Thus the time complexity for ListConfigurations is at most
\[
O((m \cdot \wmax + 2 (\wmax - 1) + 3)^m \cdot (n \cdot m \cdot \wmax))
\leq
O((4 m \cdot \wmax)^{m+1} \cdot n)
\enspace .
\]

The time complexities of ConcatenateTables and FilterTable are similar to that in the proof of theorem~\ref{thm:fpt1} except the number of outputted configurations.
For ConcatenateTables and FilterTable, both the number of outputted configurations are at most $\wmax^m \cdot (4 m \cdot \wmax)^m$.
Thus their running time are at most $O(\wmax^{2m} \cdot (4 m \cdot \wmax)^{2m} \cdot n)$.
Since there are $k = O(n)$ iterations, the total time complexity of the algorithm is at most
\[
O((4 m \cdot \wmax^2)^{2m} \cdot n^2)
=
f(\wmax, m) \cdot O(n^2)
\enspace .
\qedhere
\]
\end{proof}

\begin{corollary}
Grid problem is fixed parameter tractable with respect to
the maximum width of jobs,
and the maximum number of overlapped feasible intervals.
\end{corollary}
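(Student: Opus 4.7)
The plan is to obtain the corollary as a direct consequence of the preceding theorem on $\AlgEPlus$. Recall the definition given at the start of Section~\ref{sec:exact_fpt}: a parameterized problem is fixed parameter tractable with respect to parameters $p_1, p_2, \ldots$ precisely when it admits an algorithm running in $f(p_1, p_2, \ldots) \cdot O(g(N))$ time for some function $f$ and some polynomial $g$, where $N$ is the size of the input. So all that needs to be verified is that the running time bound just established for $\AlgEPlus$ fits this template with $p_1 = \wmax$ and $p_2 = m$.

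First I would note that the theorem gives a running time of $(4m \cdot \wmax^2)^{2m} \cdot O(n^2)$ on an instance with $n$ jobs. The input size $N$ satisfies $N = \Omega(n)$, so $O(n^2) = O(N^2)$ is polynomial in $N$. The remaining factor $(4m \cdot \wmax^2)^{2m}$ depends only on $\wmax$ and $m$, so it can be written as $f(\wmax, m)$ for the function $f(x, y) = (4y x^2)^{2y}$.

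Putting these two observations together, the running time is of the form $f(\wmax, m) \cdot O(g(N))$ with $g(N) = N^2$, which matches the definition of a fixed parameter algorithm with respect to the two parameters $\wmax$ (the maximum width of jobs) and $m$ (the maximum size of a clique in the interval graph, i.e.\ the maximum number of overlapped feasible intervals). This immediately yields the corollary. There is no genuine obstacle here, since the heavy lifting (correctness of $\AlgE$ on non-maximal clique windows via Observation~\ref{obs:alg_e_general}, and the counting argument for the number of configurations per window after aligning boundaries to release times and deadlines) has already been carried out in proving the theorem; the corollary is merely a rephrasing in the language of parameterized complexity.
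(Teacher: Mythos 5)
Your proposal is correct and matches the paper's intent exactly: the corollary is stated as an immediate consequence of the theorem on $\AlgEPlus$, whose $f(\wmax,m)\cdot O(n^2)$ running time (with $f(\wmax,m)=(4m\cdot\wmax^2)^{2m}$ and $n \leq N$) fits the definition of fixed parameter tractability given at the start of Section~\ref{sec:exact_fpt}, with the clique-size parameter $m$ identified with the maximum number of overlapped feasible intervals as in the paper's interval-graph construction.
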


\subsection{An exact algorithm without parameter}
\label{sec:exact_exp}

For the case with unit width and arbitrary height of Grid problem, we can use algorithm~$\AlgE$ to design an exact algorithm that its time complexity is only measured in the size of the input.

In Algorithm~$\AlgE$, we maintain two tables $\Tl$ and $\Tr$ for each stage.
At each stage, the core operations are to construct $\Tr$, merge $\Tl$ and $\Tr$, and filter the resulting table.
In the case with unit width and arbitrary height, one may observe that the functionalities of these core operations are not affected by the length of the windows representing $\Tl$ and $\Tr$.
For example, we can restrict the window length to be a constant but not be related to the cliques in the interval graph, and the algorithm still works correctly.
By fixing the lengths of all windows, a new exact algorithm is obtained.
Without loss of generality, we assume that the number of timeslots~$\tau$ is even.
We enforce all windows to have length 2, i.e. we have $\tau / 2$ windows in total.
By this setting, the new algorithm runs in $O((\tau / 2) \cdot 4^{2n} \cdot n)$ time where $n$ is the number of jobs.
This is because the numbers of configurations for the three components in the algorithm are at most $4^n$.
Note that the input size $N$ of the problem is $3 n \log \tau + n \log \hmax$ where $\hmax$ is the maximum height over all jobs.
Since $\log \tau = O(N)$, the running time becomes $2^{O(N)}$.
Thus we have the following theorem.

\begin{theorem}
There is an exact algorithm running in $2^{O(N)}$ time for the Grid problem with unit width and arbitrary height where $N$ is the length of the input.
\end{theorem}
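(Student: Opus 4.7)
The plan is to instantiate Algorithm~$\AlgE$ with a simpler, input-independent window partition and then count configurations directly. First I would observe that the correctness proof of $\AlgE$ (Lemma~\ref{lm:alg_e_opt}) never actually used the fact that the windows come from \emph{maximal} cliques of the interval graph; it only used that the windows are contiguous, cover every job's feasible interval, and allow the ListConfigurations/ConcatenateTables/FilterTable loop to enumerate every way jobs can be split across window boundaries. In particular, Observation~\ref{obs:alg_e_general} already generalizes the partition to any contiguous window sequence whose cliques fit in the graph. So I am free to forget the interval-graph construction entirely and choose windows for algorithmic convenience.

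Next I would fix the window length to $2$, giving $k=\lceil \tau/2\rceil$ windows. With unit width ($\wmax = 1$), each job $J$ has at most a constant number of per-window configurations: for a given window $\win_i$, $J$ can start at $\bound_i$, start at $\bound_i+1$, be entirely before $\win_i$, or be entirely after $\win_i$ (the cross-the-whole-window case of $\AlgE$ cannot occur, since a unit-width job cannot span a length-$2$ window). Thus $|\Tr_i| \le 4^n$, and after concatenating with $\Tl$ and filtering we still have $|\Tl| \le 4^n$ because two identical configurations differ only in the start times of uncertain jobs, of which there can be at most~$n$, each with at most a constant number of choices. Each of the three subroutines at stage~$i$ then does $O(4^{2n}\cdot n)$ work (pairwise concatenation dominates), and summing over $k$ stages gives total running time $O((\tau/2)\cdot 4^{2n}\cdot n)$.

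Finally I would convert this bound to $2^{O(N)}$ using the input-size accounting from the excerpt, namely $N = 3n\log\tau + n\log\hmax$. Since $n \le N$ and $\log\tau \le N$, the factor $4^{2n} = 2^{4n} = 2^{O(N)}$, the factor $\tau \le 2^{O(N)}$, and the polynomial factor $n$ is also $2^{O(N)}$; multiplying gives $2^{O(N)}$ as required.

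The main obstacle I anticipate is justifying that the generalized window partition preserves optimality without any dependence on cliques. Concretely, I need to check that the validity and compatibility definitions of $\AlgE$ still correctly classify configurations when a window may contain fewer than $m$ jobs and when a job's feasible interval spans several windows, and that the identical-configuration filter in FilterTable still only discards suboptimal schedules (since FilterTable's correctness rests on the uncertain-jobs argument in Lemma~\ref{lm:alg_e_opt}, which is about \emph{end times} extending past the current boundary and is unaffected by how boundaries are chosen). Once this is in place, the running-time calculation and the reduction to $2^{O(N)}$ are routine.
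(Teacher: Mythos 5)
Your proposal matches the paper's own argument essentially step for step: fix windows of length $2$ independent of the interval-graph cliques, bound the number of configurations per table by $4^n$ for unit-width jobs, obtain running time $O((\tau/2)\cdot 4^{2n}\cdot n)$, and convert to $2^{O(N)}$ via $N = 3n\log\tau + n\log\hmax$. The correctness concern you flag (that the window partition need not come from cliques) is handled at the same level of detail in the paper, which likewise appeals to the fact that the core operations and the filtering argument of Lemma~\ref{lm:alg_e_opt} are unaffected by the choice of window boundaries.
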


Our algorithm is highly more efficient than a brute force search.
Such naive method would enumerate all possible schedules and check if they are feasible and optimal, which requires $O(\tau^n n)$ time.
The running time can be rewritten as $2^{O(Nn)}$ or more clearly, $(2^{O(N)})^n$.
The exact algorithm modified from our fixed parameter algorithm indeed crosses out an `$n$' in the exponent.

\subsection{Lower bound}
\label{sec:exact_lb}

This section provides two lower bounds on the running time of the Grid problem under a certain condition. 

Jansen et al. \cite{DBLP:journals/siamdm/JansenLL16} derived several lower bounds for scheduling and packing problems which can be used to develop lower bounds for our problem.
Their lower bounds assume \emph{Exponential Time Hypothesis} (ETH) holds, which conjectures that there is a positive real $\epsilon$ such that 3-\textsc{Sat} cannot be decided in time $2^{\epsilon n} N^{O(1)}$ where $n$ is the number of variables in the formula and $N$ is the length of the input.
A lower bound for other problems can be shown by making use of strong reductions, i.e. reductions that increase the parameter at most linearly.
Through a sequence of strong reductions, they obtain two lower bounds for \textsc{Partition}, $2^{o(n)} N^{O(1)}$ and $2^{o(\sqrt{N})}$ where $n$ is the cardinality of the given set and $N$ is the length of the input.

\runtitle{Reduction.}
We design a strong reduction from \textsc{Partition} to the decision version of Grid problem with unit width and arbitrary height.
Here is a sketch of the reduction.
Recall that \textsc{Partition} is a decision problem that decides if a given set $S$ of integers can be partitioned into two disjoint subsets such that the two subsets have equal sum.
For each integer $s \in S$, we convert it to a job $J$ with $\rel(J) = 0$, $\dl(J) = 2$, $\w(J) = 1$ and $\h(J) = 2s$.
We claim that $S$ is a partition if and only if the set of jobs $J$ can be scheduled with cost at most $2 (\sum_{s \in S} s)^\alpha$.
Note that the specified cost appears when jobs can be put into two timeslots with equal loads.
By setting the length of the input as the parameter, we observe that the parameter increases at most linearly from \textsc{Partition} to our problem. 
(Note that a strong reduction from \textsc{Partition} to the case with unit height and arbitrary width can be done similarly, and the results also apply on that case.)
Furthermore, we can choose the number of jobs as a parameter of the problem.
Note that the reduction above does not increase this parameter with respect to the parameter of \textsc{Partition}, which is the number of integers. 

\begin{theorem}
There is a lower bound of $2^{o(\sqrt{N})}$ and a lower bound of $2^{o(n)} N^{O(1)}$
on the running time for the Grid problem unless ETH fails, where $n$ is the number of jobs and $N$ is the length of the input.
\end{theorem}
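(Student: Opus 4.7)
The plan is to transfer two ETH-based lower bounds on \textsc{Partition} proved by Jansen et al.~\cite{jansen2013bounding} through the strong reduction sketched in the ``Reduction'' paragraph. Under ETH, Jansen et al.\ show that \textsc{Partition} admits no $2^{o(\sqrt{M})}$-time algorithm and no $2^{o(m)} M^{O(1)}$-time algorithm, where $m$ is the number of input integers and $M$ is the total bit-length. If our reduction converts a \textsc{Partition} instance with parameters $m,M$ into a Grid instance with $n = O(m)$ jobs and input length $N = O(M)$, and if a Grid algorithm solves the decision question, then any $2^{o(\sqrt{N})}$ or $2^{o(n)} N^{O(1)}$ algorithm for Grid would contradict ETH.

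First I will formalize the reduction. Given an instance $S = \{s_1,\ldots,s_m\}$ of \textsc{Partition} with $T = \sum_{s \in S} s$, I construct $m$ jobs, one per integer, each with $\rel(J) = 0$, $\dl(J) = 2$, $\w(J) = 1$ and $\h(J) = 2s$. Any feasible schedule assigns each job to timeslot $0$ or timeslot $1$, inducing loads $L_0$ and $L_1$ with $L_0 + L_1 = 2T$. The decision question I ask the Grid oracle is whether the optimal cost is at most $2 T^{\alpha}$. For the ``if'' direction, a balanced partition $S = A \cup B$ with $\sum A = \sum B = T/2$ yields loads $L_0 = L_1 = T$ and cost exactly $2T^\alpha$. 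For the ``only if'' direction, the key fact is strict convexity of $x^\alpha$ for $\alpha > 1$: for fixed $L_0 + L_1 = 2T$ the function $L_0^\alpha + L_1^\alpha$ attains its unique minimum $2T^\alpha$ at $L_0 = L_1 = T$, so any schedule of cost $\leq 2T^\alpha$ must split the heights exactly in half, giving the desired partition.

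Next I will verify the reduction is \emph{strong} in both parameters simultaneously. The number of jobs produced is exactly $m$, so $n = m$. The Grid instance encodes each job using $O(1)$ bits for its release time, deadline and width, plus $O(\log s_i)$ bits for its height, so the total input length is $N = O(M)$. Consequently a $2^{o(\sqrt{N})}$ algorithm for Grid would give a $2^{o(\sqrt{M})}$ algorithm for \textsc{Partition}, and a $2^{o(n)} N^{O(1)}$ algorithm for Grid would give a $2^{o(m)} M^{O(1)}$ algorithm for \textsc{Partition}; either contradicts ETH. Since the decision version reduces to the optimization version (solve Grid, compare the optimum with $2T^\alpha$), the same bounds hold for the optimization version addressed by our exact algorithms in Sections~\ref{sec:exact_fpt} and~\ref{sec:exact_exp}.

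The main subtlety I expect is making the convexity argument watertight enough to justify the ``only if'' direction as a clean polynomial-time decision equivalence (rather than just an approximation), but strict convexity of $x^\alpha$ for $\alpha>1$ handles this cleanly. A minor bookkeeping point is confirming that the heights $2s$ remain integral as required by our input model, which is immediate from integrality of $s$. With those checks in place the two ETH lower bounds on \textsc{Partition} transfer verbatim to Grid.
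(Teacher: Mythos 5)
Your proposal is correct and follows essentially the same route as the paper: the identical strong reduction from \textsc{Partition} (one unit-width job per integer with $\rel=0$, $\dl=2$, $\h=2s$, threshold cost $2(\sum_{s\in S} s)^\alpha$), the same parameter accounting $n=m$, $N=O(M)$, and the same transfer of the two ETH-based bounds of Jansen et al. The only addition is that you spell out the strict-convexity argument for the ``only if'' direction, which the paper leaves implicit.
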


\renewcommand{\Alg}{{\mathcal{G}}}

\section{Minimizing peak and non-preemptive machine minimization}
\label{sec:peak}

In this section, we investigate extension of our solutions to other objectives and other problems.
In particular, we consider the objective of minimizing peak electricity cost in the $\grid$ model
and we name it the $\peak$ problem.
We also consider the classical non-preemptive machine minimization problem denoted as $\minimization$.

\runtitle{The $\peak$ problem.} 
The input is the same as the $\grid$ problem.
The goal is to find a feasible non-preemptive schedule such that the maximum load over the time horizon is minimized. 
$\peak$ has been proven to be NP-hard~\cite{tang2013smoothing}
and approximation algorithms are known for requests having common feasible interval with approximation ratio~$4$
and for requests having agreeable deadlines with approximation ratio $O(\log \frac{\wmax}{\wmin})$.

\runtitle{The $\minimization$ problem.} 
The input is a set of jobs each with a processing time $\Work{}$, release time $\Rtime{}$, and deadline $\Dline{}$. 
Each job has to be scheduled non-preemptively on one of the (infinite number of) machines.
For each machine, at most one job can be executed at any time. 
The goal is to minimize the number of machines used.

The $\minimization$ problem can be seen as a special case of the $\peak$ problem where jobs have unit height.
A lower bound of $\log_3 \frac{\wmax}{{\wmin}}$ on the competitive ratio of any online algorithm has been shown in~\cite{DBLP:conf/fsttcs/Saha13}.
As a result, this lower bound also applies to $\peak$.
The author also provided an $O(\log \frac{p_{\max}}{p_{\min}})$-competitive algorithm for the $\minimization$ problem. 
The algorithm classifies jobs by processing times and applies the constant approximate algorithm
for each class.\footnote{The paper did not state explicitly which algorithm to use and a result~\cite{DBLP:conf/approx/ChuzhoyC09} cited in the paper 
has later been retreated by the same authors~\cite{DBLP:conf/approx/ChuzhoyC09a}.}
The employed algorithm can be the $6$-competitive algorithm proposed by Yu and Zhang~\cite{DBLP:journals/orl/YuZ09} or our algorithm for uniform widths jobs (to be analyzed in Section~\ref{subsubsec:uniform_width}). 

In this section, we show that our online algorithm solves the $\peak$ problem optimally in an asymptotical sense and provide an alternative asymptotically optimal competitive algorithm for the $\minimization$ problem.


\subsection{Online algorithms}
\label{sec:peak_online}

In this section, 
we prove that the online algorithm~$\Alg$ proposed in Section~\ref{sec:general} is asymptotically optimal for the $\peak$ problem.
We first state two properties, one for $\newBKP$ that~$\Alg$ is based on and the other for the optimal algorithm w.r.t.\ the peak objective.
Let function $\p(\Sch)$ denote the maximum load (cf.\ speed) of any schedule $\Sch$,
i.e., $\p(\Sch) = \max_{t} \Load{\Sch}{t}$. 
Combining Lemma~\ref{lm:bkp_load} and the fact that $\BKP$ is $e$-competitive w.r.t.\ maximum speed~\cite{DBLP:journals/jacm/BansalKP07},
we have the following observation.

\begin{observation}
\label{lem:BKPL_peak}
The $\newBKP$ algorithm is $e(1+e)$-competitive with respect to maximum speed.
\end{observation}

\begin{proof}
By Lemma~\ref{lm:bkp_load}, for any integral $t$ and $0<\Delta<1$, $\load(\BKP, t+\Delta) \leq (1+e)\cdot \load(\BKP,t)$. 
Also, by $\newBKP$, $\load(\newBKP,t) \geq \load(\BKP,t+\Delta)$ for any $0<\Delta<1$. 
Hence, $\p(\newBKP) \leq (1+e)\cdot \p(\BKP)$.
\end{proof}

On the other hand, $\YDS$ guarantees that the maximum speed is minimized~\cite{DBLP:journals/jacm/BansalKP07}.
Similar to Observation~\ref{thm:opt_dvs_grid}, $\YDS$ gives a lower bound for the $\peak$ problem. 

\begin{observation}
\label{ob:opt_peak}
Let $\Opt_D$ and $\Opt_G$ be the optimal schedule for the $\dvs$ and $\peak$ problem, respectively. 
Given a job set $\Jobsetij{G}$ for the $\peak$ problem. 
Let $\Jobsetij{D}$ denote the job set after converting $\Jobsetij{G}$ into a job set for the $\dvs$ problem. 
Then, $\p(\Opt_D(\Jobsetij{D})) \leq \p(\Opt_G(\Jobsetij{G}))$. 
\end{observation}

Recall that in Sections~\ref{sec:general_unit_width} and~\ref{sec:general},
we have presented three algorithms 
$\AlgV$, $\AlgUV$, and $\Alg$ for unit-width jobs, uniform-width jobs, and arbitrary width jobs, respectively.
Here, we analyze their performance w.r.t.\ $\peak$.
In summary, we show that for $\peak$, we have

\begin{itemize}
\item $\AlgV$ is $2(e+e^2)$-competitive for unit-width job sets (Theorem~\ref{thm:algV_peak});
\item $\AlgUV$ is $(6(e+e^2)+3)$-competitive for unit-width job sets (Theorem~\ref{thm:UV_peak}); and
\item $\Alg$ is $(18(e+e^2)+9)\cdot \lceil\log \frac{\wmax}{\wmin} \rceil$-competitive for arbitrary-width job sets  (Theorem~\ref{thm:alg_peak}).
\end{itemize}

\subsubsection{Unit-width jobs.}
Recall that for each timeslot t, $\AlgV$ schedules jobs to start at $t$ 
such that $\Load{\AlgV}{t}$ is at least $\Load{\newBKP}{t} = (1+e)\cdot\Load{\BKP}{t}$ or until all available jobs have been scheduled.
We prove that although $\Load{\AlgV}{t}$ might be higher than $\Load{\newBKP}{t}$, 
the peak of $\AlgV$ is no more than $2(e+e^2)$ times of the peak of the optimal. 

\begin{theorem}
\label{thm:algV_peak}
For any job set $\Jobsetij{}$ where for each job has unit width, 
$\p(\AlgV(\Jobsetij{})) \leq 2(e+e^2)\cdot \p(\Opt(\Jobsetij{}))$. 
\end{theorem}

\begin{proof}
Let $\hmax(\AlgV,t)$ be the maximum height of jobs scheduled at $t$ by $\AlgV$; we set $\hmax(\AlgV, t)= 0$ if $\AlgV$ assigns no job at $t$.
We classify each timeslot $t$ into two types: 
(i) $\hmax(\AlgV,t) < \Load{\newBKP}{t}$, and (ii) $\hmax(\AlgV,t) \geq \Load{\newBKP}{t}$. 
We denote by $\I_1$ and $\I_2$ the union of all timeslots of Type (i) and (ii), respectively. 
(Notice that $\I_1$ and $\I_2$ can be empty and the union of $\I_1$ and $\I_2$ covers the entire time line.) 

We first prove that for any job set $\Jobsetij{}$ where for each job $\Job{} \in \Jobsetij{}$, $\Width{} =1$, 
$\p(\AlgV(\Jobsetij{}),\I_1) \leq 2(e+e^2)\cdot \p(\Opt(\Jobsetij{}))$; and $\p(\AlgV(\Jobsetij{}),\I_2) \leq 2 \cdot \p(\Opt(\Jobsetij{}))$. 
For every timeslot $t \in \I_1$, $\Load{\AlgV}{t} < \Load{\newBKP}{t} + \hmax(\AlgV,t) \leq 2\cdot \Load{\newBKP}{t}\leq2(1+e)\cdot\Load{\BKP}{t}$. 
By definition, let $t$ bet the timeslot where $\Load{\AlgV}{t} \geq \Load{\AlgV}{t'}$ for any other $t'$, $\p(\AlgV, \I_1) = \Load{\AlgV}{t} \leq 2(1+e)\cdot \Load{\BKP}{t}$. 
Therefore, $\p(\AlgV, \I_1)\leq 2e(1+e)\cdot\p(\Opt)$ by Observation~\ref{ob:opt_peak} and Lemma~\ref{lem:BKPL_peak}.
For every timeslot $t \in \I_2$, $\Load{\AlgV}{t} < \Load{\newBKP}{t} + \hmax(\AlgV,t) \leq 2\cdot \hmax(\AlgV,t)$. 
In the optimal schedule, the job with height $\hmax(\AlgV,t)$ has to be scheduled somewhere or the schedule is not feasible, so $\p(\Opt) \geq \hmax(\AlgV,t)$. 
Hence, $\p(\AlgV, \I_2) = \Load{\AlgV}{t} \leq 2\hmax(\AlgV,t) \leq 2\cdot \p(\Opt)$ where $t$ is the timeslot with peak power.

Since $\I_1$ and $\I_2$ are disjoiont, $\p(\AlgV) =\max\{\p(\AlgV, \I_1), \p(\AlgV, \I_2)\}$. 
Therefore, $\p(\AlgV) = \max\{2(e+e^2)\cdot \p(\Opt), 2\cdot \p(\Opt)\} = 2(e+e^2)\cdot \p(\Opt)$.
\end{proof}

\subsubsection{Uniform-width jobs.}
\label{subsubsec:uniform_width}
Recall that in handling uniform-width jobs, we classify jobs into tight and loose jobs.
Let $\niceJobsetij{}$ denote the input set with uniform width jobs, 
$\niceJStight$ and $\niceJSloose$ denote the set of tight jobs and loose jobs in $\niceJobsetij{}$, respectively. 
We first prove that any feasible schedule for tight jobs is $3$-competitive due to the ``inflexibility'' (Lemma~\ref{lm:tight_peak}).
Then, we prove that $\AlgUV$ is $O(1)$-competitive for loose jobs (Lemma~\ref{lm:loose_peak}).

\begin{lemma}
\label{lm:tight_peak}
For any feasible schedule $\Schij{}$, $\p(\Schij{}(\niceJStight)) \leq 3\cdot \p(\Opt(\niceJobsetij{}))$.
\end{lemma}

\begin{proof}
We prove it by showing that even if the execution intervals of jobs are considered as the whole feasible interval, the peak is not too much larger than the peak in the optimal schedule. 

We first \emph{extend} jobs $\Job{}\in\niceJStight$ to $J^+$ as follows: $\niceRtime{} = \Rtime{}$, $\niceDline{}= \Dline{}$, $\niceWidth{}=\Dline{}-\Rtime{}$, and $\niceHeight{} = \Height{}$. 
That is, every job has its width as the length of its feasible interval. 
We denote the resulting job set by $\Jobset^+$. 
Since each job in $\Jobset^+$ are not shiftable, there is only one feasible schedule for $\Jobset^+$ and it is optimal. 
It is clear that $\p(\Sch(\niceJStight)) \leq \p(\Opt(\Jobset^+))$. 

Similar to Lemma~\ref{thm:cost_UV1} (i), we bound the load at time $t$ of $\Opt(\Jobset^+)$ by the loads of constant number of timeslots in $\Sch(\niceJStight)$. 
Consider the job $\Job{}$ corresponding to $J^+$, the execution interval of~$\Job{}$ in any feasible schedule must contain either timeslot $t-(\w-1)$, $t+(\w-1)$, or $t$.  
Hence, we can upper bound the load at time $t$ in $\Opt(\Jobset^+)$ as follows: 
$\Load{\Opt(\Jobset^+)}{t}\leq \Load{\Opt(\niceJStight)}{t-(\w-1)}+\Load{\Opt(\niceJStight)}{t+(\w-1)} + \Load{\Opt(\niceJStight)}{t}$.
Hence, $\p(\Sch(\niceJStight)) \leq\p(\Opt(\Jobset^+)) \leq 3\cdot \p(\Opt(\niceJSloose))$.
\end{proof}

\comment{
\hhl{For the loose jobs set $\niceJSloose$ where jobs have uniform width $\w$, we transform it to $\alignJobsetij{}$ by $\convertfi$ into jobs with release times and deadlines being at $i\cdot \w$ for certain integers $i$. 
The schedule $\AlgUV(\niceJSloose)$ is transformed from the schedule $\AlgV(\alignJobsetij{})$ by $\freeS$. 
We prove that $\AlgUV(\niceJSloose)$ is constant competitive (Lemma~\ref{lm:loose_peak}). 
For the tight jobs in $\niceJStight{}$, each job is executed once it is released. We prove that any feasible schedule for tight jobs is $3$-competitive due to the ``inflexibility'' (Lemma~\ref{lm:tight_peak}).}
}

\begin{lemma}
\label{lm:loose_peak}
For loose jobs set $\niceJSloose$ where jobs have uniform width, $\p(\AlgUV(\niceJSloose)) \leq 6(e+e^2) \cdot \p(\Opt(\niceJobsetij{}))$. 
\end{lemma}

\begin{proof}

According to $\AlgUV$, the job set $\niceJSloose$ is transformed into a job set $\alignJobsetij{}$ by $\convertfi$ and $\AlgV$ is run on~$\alignJobsetij{}$. 
Then, the schedule $\AlgV(\alignJobsetij{})$ is transformed to a schedule of $\niceJSloose$ by Transformation $\freeS$. 
Hence, by Theorem~\ref{thm:algV_peak}, $\p(\AlgUV(\niceJSloose)) \leq \p(\AlgV(\alignJobsetij{}))\leq 2(e+e^2)\cdot \p(\Opt(\alignJobsetij{}))$.

By Observation~\ref{thm:align_load}, given the optimal schedule of $\niceJSloose$, we have shown the load at any time in the schedule $\alignSchij{}$ generated by $\alignS$ is no more than the sum of the load of the optimal schedule at three timeslots and hence no more than three times the peak of the optimal schedule. 
Therefore, $\p(\Opt(\alignJobsetij{})) \leq \p(\alignSchij{}) \leq 3 \cdot \p(\Opt(\niceJSloose))$

In summary, $\p(\AlgUV(\niceJSloose))\leq 2(e+e^2)\cdot \p(\Opt(\alignJobsetij{}))\leq 6(e+e^2) \cdot \p(\Opt(\niceJSloose))$.
%
\end{proof}


\begin{theorem}
\label{thm:UV_peak}
For any jobs set $\niceJobsetij{}$ where jobs have uniform width, $\p(\AlgUV(\niceJobsetij{})) \leq (6(e+e^2)+3)\cdot \p(\Opt(\niceJobsetij{}))$. 
\end{theorem}

\begin{proof}
By definition, $\p(\AlgUV(\niceJobsetij{})) \leq \p(\AlgUV(\niceJStight)) + \p(\AlgUV(\niceJSloose))$. 
By Lemma~\ref{lm:tight_peak} and~\ref{lm:loose_peak}, $\p(\AlgUV(\niceJobsetij{})) \leq 3\cdot\p(\Opt(\niceJobsetij{})) + 6(e+e^2)\cdot \p(\Opt(\niceJobsetij{})) = {(6(e+e^2)+3)}\cdot \p(\Opt(\niceJobsetij{}))$. 
\end{proof}

\subsubsection{Arbitrary width jobs.}
Finally, we bound the competitive ratio of $\Alg$.

\begin{theorem}
\label{thm:alg_peak}
For any job set $\Jobsetij{}$, {$\p(\Alg(\Jobsetij{})) \leq {(18(e+e^2)+9)}\cdot \lceil\log \frac{\wmax}{\wmin} \rceil \cdot \p(\Opt(\Jobsetij{}))$}.
\end{theorem}

\begin{proof}
Recall that $\Alg(\Jobset)$ partition jobs into subsets $\Jobsetij{p}$ such that in each $\Jobsetij{p}$ jobs have bounded widths. 
For each $\Jobsetij{p}$, it is transformed into $\niceJobsetij{p}$ and $\AlgUV$ is applied independently on each class. 
Then,~$\AlgUV(\niceJobsetij{p})$ is transformed into a schedule for $\Jobsetij{p}$ by Transformation $\shrinkS$. 
By Observation~\ref{thm:shrink_feasible}, $\Load{\Alg(\Jobsetij{p})}{t} \leq \Load{\AlgUV(\niceJobsetij{p})}{t}$. 
Hence, $\Load{\Alg(\Jobset)}{t} = \sum_p \Load{\Alg(\Jobsetij{p})}{t} \leq \sum_p \Load{\AlgUV(\niceJobsetij{p})}{t}$ for all~$t$.

Consider the timeslot $t$ with peak load in $\Alg(\Jobsetij{})$, $\p(\Alg(\Jobsetij{})) = \Load{\Alg(\Jobsetij{})}{t} \leq\sum_p \Load{\AlgUV(\niceJobsetij{p})}{t} \leq \sum_p \p(\AlgUV(\niceJobsetij{p}))$. 
By Theorem~\ref{thm:UV_peak}, $\p(\Alg(\Jobsetij{})) \leq \sum_p {(6(e+e^2)+3)}\cdot \p(\Opt(\niceJobsetij{p}))$. 

Now we prove that $\p(\Opt(\niceJobsetij{p})) \leq 3 \cdot \p(\Opt(\Jobsetij{}))$. 
Consider the optimal schedule $\Opt(\Jobsetij{p})$, there exists schedule $\Sch(\niceJobsetij{p})$ generated by Transformation $\relaxS$ where $\niceJobsetij{p}$ is the job set corresponding to $\Jobsetij{p}$ generated by $\convert$. 
By Lemma~\ref{thm:relax_load}, the load of any timeslot in $\Sch(\niceJobsetij{p})$ is no more than the sum of loads of three timeslots in $\Opt(\Jobsetij{p})$, and hence no more than three times of the peak in $\Opt(\Jobsetij{p})$. Therefore, $\p(\Opt(\niceJobsetij{p})) \leq \p(\Sch(\niceJobsetij{p})) \leq 3\cdot \p(\Opt(\Jobsetij{p})) \leq 3\cdot \p(\Opt(\Jobsetij{}))$.

In summary, $\p(\Alg(\Jobsetij{}))\leq \sum_p {(6(e+e^2)+3)}\cdot \p(\Opt(\niceJobsetij{p})) \leq \sum_p {(6(e+e^2)+3)}\cdot 3\cdot \p(\Opt(\Jobsetij{})) \leq {(18(e+e^2)+9)}\cdot \lceil\log \frac{\wmax}{\wmin}\rceil \cdot \p(\Opt(\Jobsetij{}))$. 
\end{proof}

Since the $\minimization$ problem is a special case of the $\peak$ problem where jobs have uniform height, we have the following corollary:
\begin{corollary}
The algorithm $\Alg$ is $(18(e+e^2)+9)$-competitive for the $\minimization$ problem.
\end{corollary}

\subsection{The interval graph approach on the $\peak$ problem}
\label{sec:other_fpt}

In Section~\ref{sec:exact_fpt}, we introduced an exact algorithm $\AlgE$ using the \emph{linear clique arrangement} property of the interval graphs. 
The linear property of the consecutive clique arrangement of interval graphs gives
a direction to design a dynamic programming algorithm, which breaks down a problem
into overlapped subproblems until the subproblems are simple enough to be solved. 
In the following of this section, we show that $\AlgE$ can be used to solve the $\peak$ problem by changing the objective function.

\runtitle{Algorithm $\AlgE_{peak}$ for the $\peak$ problem (also see Section~\ref{sec:exact_fpt}).}
The jobs are considered as time intervals and the time horizon is chopped into ``windows''. 
The algorithm visits all windows accordingly from the left to the right. 
In Stage $i$, the $i$-th window is visited and the algorithm maintains a candidate set of schedules for the visited windows that no optimal solution is deleted from the set. 
In each Stage $i$, the algorithm consists of three procedures: ListConfigurations, ConcatenateTables and FilterTable. 

The ListConfigurations procedure lists all possible configurations (i.e., execution segments) of the jobs in $C_i$ within $\win_i$. The invalid configurations will be deleted. 
The valid configurations together with \textit{the peak load within $\win_i$} will be stored in a table. 

The ConcatenateTables procedure concatenates the configurations in the current window $\win_i$ and the configurations in the windows which have been seen so far. If the execution interval after concatenation is not valid, it is deleted from the table. \textit{The peak load of the new configuration is simply the maximum among the peaks of the two concatenated configurations.} 

The FilterTable procedure filters non-optimal configurations. The idea is, given a configuration of the jobs in $C_i$, there must be a best decision of the jobs in $\bigcup^{i-1}_{k=1} C_k\setminus C_i$ which has minimum \textit{peak load} within the intervals $[0,b_{i+1})$, where $b_{i+1}$ is the right boundary of the window $\win_i$. 
For each configuration, we only keep the (partial) schedule with the minimum \textit{peak load}. 

After processing all the windows, the schedule with minimum \textit{peak load} can be found in the final table.

It can be seen that we list all possible configurations. 
A configuration is deleted only when it is invalid or it is identical to another configuration with lower peak \hhl{peak}.
Hence in the end we get an optimal schedule.
It also shows that the $\peak$ problem is fixed parameter tractable with respect to the maximum width of jobs and the maximum number of overlapped feasible intervals, and the maximum length of windows.


\begin{corollary}
The $\peak$ problem is fixed parameter tractable with respect to the maximum width of jobs, and the maximum number of overlapped feasible intervals.
\end{corollary}

\section{Conclusion}
\label{sec:conclusion}

We develop the first online algorithm with polylog-competitive ratio and the first FPT algorithms for non-preemptive smart grid scheduling problem in general case. We also derive matching lower bound for the competitive ratio. Constant competitive online algorithms are presented for several special input instances. 
\comment{
\hhl{Consider the constant factor $3$ in Corollary~\ref{thm:align_cost}, it is a penalty cost due to round up. If we choose bigger classification factor, it needs to sample more timeslot in order to bound the load after round up (Observation~\ref{obs:aling_overlap}.) Hence, a smaller classification factor brings less penalty cost. However, when the classification factor is between $1$ and $2$, the number of timeslots it needs to bound the load of a timeslot after round up is $2$ or $3$. At the mean time there are more than $\logKw$ classes and it brings even bigger penalty cost. 
By choosing classification factor $2.156$, the competitive ratio for general case is $(35.8584 \logKw)^\alpha \cdot \left( 8 e^\alpha+1\right)$.}
}

There are quite a few directions in extending the problem setting: different cost functions perhaps to capture varying electricity cost over time of the day; jobs with varying power requests during its execution (it is a constant value in this paper); other objectives like response time.
A preliminary result is that we can extend our online algorithm to the case where a job may have varying power requests during its execution,
in other words, a job can be viewed as having rectilinear shape instead of being rectangular.
In such case, the competitive ratio is increased by a factor depending on the maximum height to the minimum height ratio of a job.
Precisely, the competitve ratio becomes
$(36 H\ceilLogKw)^\alpha \cdot (8 e^\alpha+1)$, where
$H = \max_J \frac{h_H(J)}{h_L(J)}$, $h_H(J)$ and $h_L(J)$ are respectively the highest and lowest power request of job $J$.

\comment{
\hhl{
Consider the case that jobs may have different power request on different execution timeslot. For job $J$, denote the highest and lowestt power request as $h_H(J)$ and $h_L(J)$, respectively. Let $H = \max_J \frac{ h_H(J)}{ h_L(J)}$:
\begin{lemma}
\label{lm:rectilinear}
For jobs with rectilinear shape, there is a $(36 H\logKw)^\alpha \cdot \left( 8 e^\alpha+1\right)$-competitive online algorithm.
\end{lemma}
}
}

\comment{
There are many future directions extending from this work. One of them is to obtain an online algorithm with better competitive ratio for general case. Another is to find a tight bound of approximation ratio for the problem. The others includes optimal exact and FPT algorithms. Furthermore, different cost functions for the problem are desired, e.g. electricity costs depend on hours. Also, different types of power requests, such as L-shaped power request, for the problem remain open.
}

\section*{Acknowledgement}
The work is partially supported by Networks Sciences and Technologies, University of Liverpool. Hsiang-Hsuan Liu is partially supported by a studentship from the University of Liverpool-National Tsing-Hua University Dual PhD programme.

\bibliography{ref_sch} 

\end{document}